\newtheorem{Def}{Definition}
\newtheorem{Thm}[Def]{Theorem}
\newtheorem{Lem}[Def]{Lemma}
\newtheoremstyle{case}{}{}{}{}{}{:}{ }{}
\theoremstyle{case}
\newcommand{\1}{\mathds{1}}
\newcommand{\dket}[1]{\vert {#1} \rangle\!\rangle}
\newcommand{\ketbra}[2]{\vert {#1} \rangle\!\langle {#2} \vert}
\newcommand{\dbraket}[2]{\langle\!\langle {#1} \vert {#2} \rangle\!\rangle}
\newcommand{\dketbra}[1]{\vert {#1} \rangle\!\rangle\!\langle\!\langle {#1} \vert}
\newcommand{\Tr}[0]{{\mathrm{Tr}}}
\newcommand{\dd}[0]{{\mathrm{d}}}
\newcommand{\young}[2]{\mathbb{Y}^{#1}_{#2}}
\newcommand{\SU}{\mathrm{SU}}
\newcommand{\mcA}{\mathcal{A}}
\newcommand{\mcC}{\mathcal{C}}
\newcommand{\mcF}{\mathcal{F}}
\newcommand{\mcH}{\mathcal{H}}
\newcommand{\mcI}{\mathcal{I}}
\newcommand{\mcL}{\mathcal{L}}
\newcommand{\mcO}{\mathcal{O}}
\newcommand{\mcP}{\mathcal{P}}
\newcommand{\mcS}{\mathcal{S}}
\newcommand{\mcU}{\mathcal{U}}
\newcommand{\mcY}{\mathcal{Y}}
\newcommand{\CC}{\mathbb{C}}
\newcommand{\mfS}{\mathfrak{S}}
\newcommand{\doublewidetilde}[1]{{%
  \mathpalette\double@widetilde{#1}%
}}
\newcommand{\double@widetilde}[2]{%
  \sbox\z@{$\m@th#1\widetilde{#2}$}%
  \ht\z@=.9\ht\z@
  \widetilde{\box\z@}%
}
\begin{document}

\preprint{APS/123-QED}

\title{Reversing Unknown Qubit-Unitary Operation, Deterministically and Exactly}
\author{Satoshi Yoshida}
\email{satoshi.yoshida@phys.s.u-tokyo.ac.jp}
\affiliation{Department of Physics, Graduate School of Science, The University of Tokyo, Hongo 7-3-1, Bunkyo-ku, Tokyo 113-0033, Japan}
\author{Akihito Soeda}
\email{soeda@nii.ac.jp}
\affiliation{Department of Physics, Graduate School of Science, The University of Tokyo, Hongo 7-3-1, Bunkyo-ku, Tokyo 113-0033, Japan}
\affiliation{Principles of Informatics Research Division, National Institute of Informatics, 2-1-2 Hitotsubashi, Chiyoda-ku, Tokyo 101-8430, Japan}
\affiliation{Department of Informatics, School of Multidisciplinary Sciences, SOKENDAI (The Graduate University for Advanced Studies), 2-1-2 Hitotsubashi, Chiyoda-ku, Tokyo 101-8430, Japan}
\author{Mio Murao}
\email{murao@phys.s.u-tokyo.ac.jp}
\affiliation{Department of Physics, Graduate School of Science, The University of Tokyo, Hongo 7-3-1, Bunkyo-ku, Tokyo 113-0033, Japan}
\affiliation{Trans-scale Quantum Science Institute, The University of Tokyo, Bunkyo-ku, Tokyo 113-0033, Japan}

\date{\today}

\begin{abstract}
    We report a deterministic and exact protocol to reverse any unknown qubit-unitary operation, which simulates the time inversion of a closed qubit system. To avoid known no-go results on universal deterministic exact unitary inversion, we consider the most general class of protocols transforming unknown unitary operations within the quantum circuit model, where the input unitary operation is called multiple times in sequence and fixed quantum circuits are inserted between the calls. In the proposed protocol, the input qubit-unitary operation is called 4 times to achieve the inverse operation, and the output state in an auxiliary system can be reused as a catalyst state in another run of the unitary inversion.
    We also present  the simplification of the semidefinite programming for searching  the optimal deterministic unitary inversion protocol for an arbitrary dimension  presented by M. T. Quintino and D. Ebler [\href{https://doi.org/10.22331/q-2022-03-31-679}{Quantum {\bf 6}, 679 (2022)}].  We show a method to reduce the large search space representing all possible protocols, which provides a useful tool for analyzing higher-order quantum transformations for unitary operations.
\end{abstract}
 
\maketitle

{\it Introduction}.---
Time flows from the past toward the future, and the direction of time cannot be changed \cite{renner2017time}.
Time evolution of a closed quantum system is represented by a reversible operation, namely, a \emph{unitary operation} corresponding to a unitary operator $U=e^{-iHt}$ using a Hamiltonian $H$ and time $t$ \cite{nielsen2002quantum}. Then, we may simulate the inverse operation corresponding to $U^{-1}=e^{iHt}$ by preparing the system with Hamiltonian $-H$ if we know the full description of $H$.
However, a physical system in nature does not tell us the full description of $H$ \emph{a priori}.
Process tomography may be used to estimate the full description, but it may destroy the original state and introduces an extra resource overhead \cite{chuang1997prescription,baldwin2014quantum}.
To simulate the time inversion $t\mapsto -t$ of a physical system, one needs to simulate the inverse operations of unitary operations given as \emph{black boxes}.
In this Letter, we consider the following task called ``unitary inversion'': Given a $d$-dimensional unknown unitary operation represented by a unitary operator $U_\mathrm{in}$, the task is to implement the inverse operation $U_\mathrm{in}^{-1}$.
Simulation of the inverse operation of unitary operations plays an important role not only on foundational problems \cite{aharonov1990superpositions} but also on practical problems such as controlling quantum systems \cite{navascues2018resetting} and measurement of the out-of-time-order correlators \cite{larkin1969quasiclassical,maldacena2016bound,garttner2017measuring,li2017measuring}. Unitary inversion has also been investigated as one of the most important transformations of quantum operations, namely, \emph{higher-order quantum transformations} \cite{bisio2019theoretical}, which are studied to aim for a quantum version of functional programming \cite{selinger2009quantum}.

In general, it is difficult to develop a protocol implementing a given functionality. It is nontrivial whether such a protocol exists or not in quantum regime.
As often is the case with universal protocols (e.g., state cloning \cite{wootters1982single} and universal NOT \cite{buvzek1999optimal}), we cannot implement the inverse operation $U_\mathrm{in}^{-1}$ deterministically and exactly with a single use of $U_\mathrm{in}$ \cite{chiribella2016optimal}. To avoid this no-go theorem, protocols utilizing $n$ calls of $U_\mathrm{in}$ to implement $U_\mathrm{in}^{-1}$ have been investigated.
One trivial protocol is to perform a quantum process tomography \cite{chuang1997prescription,baldwin2014quantum} of $U_\mathrm{in}$ and then implement the inverse operation of the estimated operation. However, this protocol needs a large number of calls of $U_\mathrm{in}$, and the implemented operation is nonexact.
More efficient nonexact or exact but probabilistic protocols have been considered. A nonexact unitary inversion protocol is proposed in Ref.~\cite{sardharwalla2016universal} inspired by the refocusing in NMR \cite{hahn1950spin,minch1998spin}. A probabilistic exact protocol for qubit-unitary inversion is proposed in Ref.~\cite{sedlak2019optimal}. This protocol is generalized to an arbitrary dimension in Refs.~\cite{quintino2019probabilistic, quintino2019reversing} by utilizing unitary complex conjugation \cite{miyazaki2019complex,ebler2023optimal} and port-based teleportation \cite{ishizaka2008asymptotic,ishizaka2009quantum,studzinski2017port}. nonexact protocols using a similar strategy are proposed in Refs.~\cite{quintino2022deterministic,ebler2023optimal}. Probabilistic exact protocols to reverse uncontrolled Hamiltonian dynamics are presented in Refs.~\cite{navascues2018resetting,trillo2020translating,trillo2023universal,schiansky2023demonstration}.
Yet, the proposed protocols so far are either \emph{probabilistic} or \emph{nonexact},  i.e., the output operation is obtained probabilistically or nonexactly even if all the operations in the protocol are error-free. This property limits the power of unitary inversion as a subroutine in practical problems since even a small failure probability or a small error will accumulate to destroy the whole computational result  if we concatenate transformations of unitary operations.

\begin{figure*}
    \centering
    \begin{quantikz}[wire types={n,n,n,n}]
         & & \lstick{$\ket{\phi_{\mathrm{in}}}$} & \setwiretype{q} \wire[l][1]["1"{above,pos=-0.2}]{a} & & \gate[wires=4, style={fill=blue!20}]{V^{(1)}} & \qw &\gate[wires=4, style={fill=blue!20}]{V^{(2)}} &\qw &\gate[wires=4, style={fill=blue!20}]{V^{(1)}} & \qw &\gate[wires=4,  style={fill=blue!20}]{V^{(2)}} &\qw \rstick[wires=2]{$\ket{\psi_{U_\mathrm{in}}} $}\\
        \gategroup[2, steps = 5, style={dashed, rounded corners}, label style = {label position = left, anchor = east, yshift = -0.2cm}]{{\ket{\psi_{U_\mathrm{in}}}}}& &\lstick[wires=2]{$\ket{\psi^-}$} & \setwiretype{q} \wire[l][1]["2"{above,pos=-0.2}]{a}  & \gate[style={fill=green!20}]{U_\mathrm{in}} & & \gate[style={fill=green!20}]{U_\mathrm{in}} & &\gate[style={fill=green!20}]{U_\mathrm{in}}& \qw & \gate[style={fill=green!20}]{U_\mathrm{in}} &\qw & \qw\\
        &&&\qw  & \setwiretype{q} \wire[l][1]["3"{above,pos=0.85}]{a} &\qw &\qw& \qw & \qw & \qw &\qw &\qw & \rstick{$U_\mathrm{in}^{-1}\ket{\phi_\text{in}}$}\\
        &&\lstick{$\ket{0}^{\otimes 4}$} & \setwiretype{q} \qwbundle[]{4567} & & \qw & &\qw & & \qw & &\qw & \rstick{$\ket{0}^{\otimes 4}$}
    \end{quantikz}
    \caption{Deterministic exact qubit-unitary inversion protocol using four calls of an input qubit-unitary operation $U_\mathrm{in}$, which implements the inverse operation $U_\mathrm{in}^{-1}$ on an arbitrary input quantum state $\ket{\phi_\text{in}}$ with additional quantum states $\ket{\psi_{U_\mathrm{in}}}\coloneqq (U_\mathrm{in}\otimes \1)\ket{\psi^-}$ and $\ket{0}^{\otimes 4}$. Here, each wire without a slash represents a qubit system, each wire with a slash represents a multiqubit system,  numbers on wires represent the indices of the corresponding systems, $\ket{\psi^-}$ is the antisymmetric state defined as $\ket{\psi^-}\coloneqq (\ket{01}-\ket{10})/\sqrt{2}$, and $V^{(1)}$ and $V^{(2)}$ are fixed unitary operations \cite{supple}.}
    \label{fig:deterministic_exact_qubit_unitary_inversion}
\end{figure*}
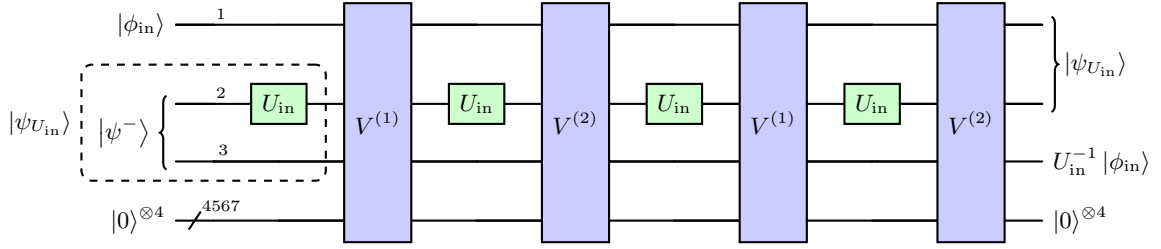

Some works have investigated the fundamental limits of unitary inversion.
The limits of probabilistic exact or deterministic nonexact unitary inversion have been investigated using semidefinite programming (SDP) \cite{quintino2019reversing, quintino2022deterministic}, but the obtained numerical results are limited to small $d$ and $n$ since we need to search within a large space including all possible protocols.
The limits have also been analyzed on the restricted set of protocols (e.g., exact \cite{quintino2019probabilistic} or deterministic \cite{quintino2022deterministic} protocol utilizing $n$ calls of $U_\mathrm{in}$ in parallel, exact ``store-and-retrieve'' protocol \cite{sedlak2019optimal}, and clean protocol \cite{gavorova2020topological}). Deterministic exact unitary inversion is shown to be impossible using parallel or ``store-and-retrieve'' protocols, and clean protocols of exact unitary inversion do not exist when $n\neq -1\;\text{mod}\;d$, even if probabilistic.
However, it has been an open problem whether deterministic exact unitary inversion is possible or not using more general protocols.

In this Letter, we report a \emph{deterministic} and \emph{exact} protocol of qubit-unitary inversion. This protocol utilizes $n=4$ calls of a qubit-unitary $U_\mathrm{in} \in  \SU(2)$ in sequence with fixed quantum operations (see Fig.~\ref{fig:deterministic_exact_qubit_unitary_inversion}).
The output state in the auxiliary system depends on the input unitary operation $U_\mathrm{in}$, which can be used as a catalyst state in another run of the unitary inversion [see Eq.~(\ref{eq:catalytic_transformation})].
To search unitary inversion protocols for an arbitrary dimension $d$, we use  an SDP to obtain the optimal deterministic unitary inversion, presented in Ref.~\cite{quintino2022deterministic}. We reduce the size of the search space  by utilizing a certain symmetry, and obtain the numerical results for $n\leq 5$ and $d\leq 6$.

{\it Main result}.---
We present the main result of this Letter, the existence of deterministic exact qubit-unitary inversion.

\textbf{Theorem 1.}
There exists a quantum circuit transforming 4 calls of \emph{any} qubit-unitary operation $U_\mathrm{in}$ into its inverse operation $U_\mathrm{in}^{-1}$ \emph{deterministically} and \emph{exactly}.

We show Theorem 1 by constructing a deterministic exact qubit-unitary inversion shown in Fig.~\ref{fig:deterministic_exact_qubit_unitary_inversion}.  It is implemented using four calls of an arbitrary input qubit-unitary operation $U_\mathrm{in}$ with fixed quantum operations (unitary operations $V^{(1)}$ and $V^{(2)}$ and preparation of the antisymmetric state $\ket{\psi^-}\coloneqq (\ket{01}-\ket{10})/\sqrt{2}$).
The unitary operations $V^{(1)}$ and $V^{(2)}$ are constructed using the Clebsch-Gordan transforms \cite{bacon2006efficient,bacon2007quantum} (see Supplemental Material \cite{supple} for the detail).  
This quantum circuit outputs $U_\mathrm{in}^{-1} \ket{\phi_{\text{in}}}$ for an arbitrary input qubit-unitary operation $U_\mathrm{in}$ and an arbitrary input qubit state $\ket{\phi_{\text{in}}}$ with additional quantum states $\ket{\psi_{U_\mathrm{in}}}\coloneqq (U_\mathrm{in}\otimes \1)\ket{\psi^-}$ and $\ket{0}^{\otimes 4}$, where $\1$ is the identity operator on a qubit system.
The simulation of this quantum circuit in qiskit \cite{qiskit} is available at Ref.~\cite{github}.

The quantum state $\ket{\psi_{U_\mathrm{in}}}$ can be used as a catalyst in the qubit-unitary inversion.  Since the first call of $U_\mathrm{in}$ in Fig.~\ref{fig:deterministic_exact_qubit_unitary_inversion} can be replaced by the quantum state $\ket{\psi_{U_\mathrm{in}}}$, we can transform three calls of $U_\mathrm{in}$ and the quantum state $\ket{\psi_{U_\mathrm{in}}}$ to the inverse operation $U_\mathrm{in}^{-1}$ and the quantum state $\ket{\psi_{U_\mathrm{in}}}$.  This transformation can be schematically written as
\begin{align}
    \ket{\phi_\mathrm{in}} \otimes \ket{\psi_{U_\mathrm{in}}} \otimes \ket{0}^{\otimes 4}\mapsto U_\mathrm{in}^{-1}\ket{\phi_\mathrm{in}} \otimes \ket{\psi_{U_\mathrm{in}}}\otimes \ket{0}^{\otimes 4},\label{eq:catalytic_transformation}
\end{align}
by using three calls of $U_\mathrm{in}$.
Therefore, qubit-unitary inversion is implemented using three calls of the input unitary operation $U_\mathrm{in}$ and the catalyst state $\ket{\psi_{U_\mathrm{in}}}$, which can be reused to another run of qubit-unitary inversion of the same input unitary operation $U_\mathrm{in}$.  

\begin{proof}[Proof sketch of Theorem 1]
The quantum circuit shown in Fig.~\ref{fig:deterministic_exact_qubit_unitary_inversion} applies a unitary operation $f_{U_\mathrm{in}} \coloneqq V^{(2)}_{1\cdots 7} [\1^{\otimes 6}_{1 3\cdots 7}\otimes (U_\mathrm{in})_2] V^{(1)}_{1\cdots 7} [\1^{\otimes 6}_{1 3\cdots 7}\otimes (U_\mathrm{in})_2]$
twice on the quantum state $\ket{\psi_\mathrm{in}} \coloneqq \ket{\phi_\mathrm{in}}_1 \otimes \ket{\psi^-}_{23} \otimes \ket{0}^{\otimes 4}_{4\cdots 7}$,
where the subscripts  represent indices of the qubits on which the corresponding quantum operations act.
It is sufficient to show that the output quantum state of the quantum circuit is given by $\ket{\psi_\mathrm{out}} \coloneqq -\ket{\psi_{U_\mathrm{in}}}_{12} \otimes U_\mathrm{in}^{-1}\ket{\phi_\mathrm{in}}_3 \otimes \ket{0}^{\otimes 4}_{4\cdots 7}$,
i.e.,
\begin{align}
    f_{U_\mathrm{in}}^2 \ket{\psi_\mathrm{in}} = \ket{\psi_\mathrm{out}} \quad \forall \ket{\phi_\mathrm{in}}\in \CC^2,  U_\mathrm{in}\in\SU(2)\label{eq:deterministic_exact_qubit_unitary_inversion}
\end{align}
holds.

This equation is equivalent to
\begin{align}
    g_{U_\mathrm{in}}^2 \ket{v_\phi} = -\ket{w_\phi} \quad \forall \ket{\phi}\in\CC^2, U_\mathrm{in}\in\SU(2) ,\label{eq:deterministic_exact_unitary_inversion_g}
\end{align}
where $g_{U_\mathrm{in}}$, $\ket{v_\phi}$ and $\ket{w_\phi}$ are defined by $g_{U_\mathrm{in}}\coloneqq [(U_\mathrm{in})_1 \otimes \1^{\otimes 6}_{2\cdots 7}]^\dagger (f_{U_\mathrm{in}})_{1\cdots 7} [(U_\mathrm{in})_1 \otimes \1^{\otimes 6}_{2\cdots 7}]$, $\ket{v_\phi}\coloneqq \ket{\phi}_{1}\otimes \ket{\psi^-}_{23} \otimes \ket{0}^{\otimes 4}_{4\cdots 7}$, and $\ket{w_\phi}\coloneqq \ket{\psi^-}_{12} \otimes \ket{\phi}_3 \otimes \ket{0}^{\otimes 4}_{4\cdots 7}$, respectively.
To show this relation, we investigate the action of $g_{U_\mathrm{in}}$ on a 4-dimensional subspace $\mcH\subset (\CC^2)^{\otimes 7}$ defined by $\mcH \coloneqq \mathrm{span}\{\ket{v_\phi}, \ket{w_\phi}|\ket{\phi}\in\CC^2\}$.
We show that the action of $g_{U_\mathrm{in}}$ on the Hilbert space $\mcH$ is given by
\begin{align}
    g_{U_\mathrm{in}} ( \ket{v_\phi}, \ket{w_\phi})
    &= ( \ket{v_\phi}, \ket{w_\phi})G \quad  \forall \ket{\phi}\in\CC^2,\\
    G &\coloneqq \left(\begin{matrix}
        -{1\over \sqrt{3}} & -{1\over \sqrt{3}}\\
        {1\over \sqrt{3}} & -{2\over \sqrt{3}}
    \end{matrix}\right),
\end{align}
if we define $V^{(1)}$ and $V^{(2)}$ properly using the Clebsch-Gordan transforms.
Thus, we obtain Eq.~(\ref{eq:deterministic_exact_unitary_inversion_g}) since
\begin{align}
    g_{U_\mathrm{in}}^2  \ket{v_\phi}
    &= g_{U_\mathrm{in}}^2 (\ket{v_\phi}, \ket{w_\phi}) (1,0)^T\\
    &= (\ket{v_\phi}, \ket{w_\phi}) G^2 (1,0)^T\\
    &=  -\ket{w_\phi}
\end{align}
holds.
See Supplemental Material \cite{supple} for the definitions of $V^{(1)}$ and $V^{(2)}$ and the detail of the calculations.
\end{proof}

\begin{figure}
    \centering
    \begin{adjustbox}{width=\linewidth}
    \begin{quantikz}[align equals at=1.5]
        &\gate[wires = 2, nwires={2}, style={fill=blue!20}]{\Lambda^{(1)}} & \gate[style={fill=green!20}]{\Phi_\mathrm{in}^{(1)}}\gategroup[wires=1, steps=1, style={dashed, rounded corners}]{open slot}\qw& \qw \ \ldots \ \qw & \gate[style={fill=green!20}]{\Phi_\mathrm{in}^{(n)}}\gategroup[wires=1, steps=1, style={dashed, rounded corners}]{open slot}\qw & \gate[wires = 2, nwires={2}, style={fill=blue!20}]{\Lambda^{(n+1)}} & \qw\\
        & &\qw & \qw \ \ldots \ \qw & \qw & \qw &
    \end{quantikz}
    = \begin{quantikz}
        \qw & \gate[style={fill=green!20}]{\Phi_\mathrm{out}} & \qw
    \end{quantikz}
    \end{adjustbox}
    \caption{Quantum combs are composed of a sequence of quantum operations $\Lambda^{(1)}, \cdots, \Lambda^{(n+1)}$ with open slots.  Input quantum operations $\Phi_\mathrm{in}^{(1)}, \cdots, \Phi_\mathrm{in}^{(n)}$ can be inserted to the open slots to obtain an output operation $\Phi_\mathrm{out}$.}
    \label{fig:quantum_comb}
\end{figure}
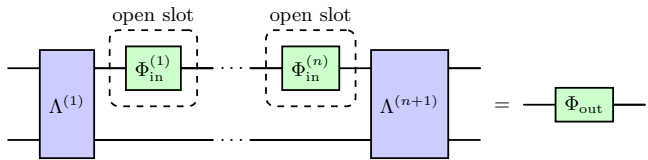

{\it SDP approach toward generalization for $d>2$}.---
We consider the problem to find deterministic exact $d$-dimensional unitary inversion protocols for general $d$.
 Reference \cite{quintino2022deterministic} showed the optimal deterministic unitary inversion circuit is obtained by the following SDP:
\begin{align}
\begin{split}
    &\max \Tr(C \Omega)\\
    \text{s.t. }&  C \text{ is a quantum comb.}
\end{split}
\label{eq:sdp}
\end{align}
 The solution of the SDP (\ref{eq:sdp}) gives the optimal average-case channel fidelity of unitary inversion using a quantum comb, namely, transformations of quantum operations realized by a quantum circuit shown in Fig.~\ref{fig:quantum_comb}.  The operator $C$ is a matrix representation of a quantum comb called the Choi matrix of a quantum comb, and it is characterized by positivity and linear constraints \cite{chiribella2008quantum}. Once the Choi matrix $C$ is obtained, a quantum circuit implementing the corresponding quantum comb can be derived \cite{bisio2011minimal}. The operator $\Omega$ is a $d^{2(n+1)}\times d^{2(n+1)}$ positive matrix called the performance operator \cite{quintino2022deterministic}.
In particular, if the solution equals 1, deterministic exact unitary inversion is obtained (see Supplemental Material \cite{supple} for the detail).

However, the numerical calculation of the SDP (\ref{eq:sdp}) in Ref.~\cite{quintino2022deterministic} is limited to $n\leq 3$ for $d=2$ and $n\leq 2$ for $d=3$  since the size of the matrix $C$ is $d^{2(n+1)}\times d^{2(n+1)}$, which grows exponentially with respect to $n$.
We present the simplification of the SDP (\ref{eq:sdp}) in Supplemental Material \cite{supple}.
The main idea is to utilize the $\SU(d)\times \SU(d)$ symmetry of the operator $\Omega$ given by
\begin{align}
    [\Omega,  V^{\otimes n+1}\otimes  W^{\otimes n+1}]=0 \quad  \forall V, W\in \SU(d).
\end{align}
Due to this symmetry, the SDP (\ref{eq:sdp}) can be solved without loss of generality by imposing an additional constraint given by
\begin{align}
    [C,  V^{\otimes n+1}\otimes  W^{\otimes n+1}]=0 \quad  \forall V, W\in \SU(d).\label{eq:sudsudsymmetry_C}
\end{align}
 The constraint (\ref{eq:sudsudsymmetry_C}) enables us to reduce the size of the SDP (\ref{eq:sdp}).
For instance, when $n=1$, any matrix $C$ satisfying Eq.~(\ref{eq:sudsudsymmetry_C}) can be written as
\begin{align}
    C = \sum_{\mu, \nu \in \{\mathrm{sym}, \mathrm{antisym}\}} c^{\mu \nu} \Pi_\mu \otimes \Pi_\nu,
\end{align}
where $c^{\mu\nu}$ are complex coefficients and $\Pi_\mathrm{sym}$ and $\Pi_\mathrm{antisym}$ are orthogonal projectors onto symmetric and antisymmetric subspaces of $(\CC^d)^{\otimes 2}$, respectively.
Then, the degree of freedom in the matrix $C$ reduces from $d^{8}$ to $4$.
For a general $n$, we derive a block diagonalization of $C$ using a group-theoretic relation called the Schur-Weyl duality \cite{iwahori1978representation,sagan2001symmetric} to obtain the simplified SDP.

\begin{table}
    \centering
    \caption{The optimal value of the SDP (\ref{eq:sdp}) is numerically obtained for  $n\leq 5$ and $d\leq 6$, which is the optimal fidelity of a deterministic transformation from $n$ calls of an unknown unitary operation $U_\mathrm{in}\in \SU(d)$ to its inverse operation $U_\mathrm{in}^{-1}$.}
    \begin{ruledtabular}
    \begin{tabular}{c|ccccc}
             & $n=1$ & $n=2$ & $n=3$ & $n=4$ & $n=5$\\\hline
       \;\;$d=2$\;\; & \;\;$0.5000$\;\; & \;\;$0.7500$\;\; & \;\;$0.9330$\;\; & \;\;$1.0000$\;\; & \;\;$1.0000$\;\;\\
       $d=3$ & $0.2222$ & $0.3333$ & $0.4444$ & $0.5556$ & $0.6667$\\
       $d=4$ & $0.1250$ & $0.1875$ & $0.2500$ & $0.3125$ & $0.3750$\\
       $d=5$ & $0.0800$ & $0.1200$ & $0.1600$ & $0.2000$ & {$0.2400$}\\
       $d=6$ & $0.0556$ & $0.0833$ & $0.1111$ & $0.1389$ & $0.1667$
    \end{tabular}
    \end{ruledtabular}
    \label{tab:sdp}
\end{table}

\begin{table*}
    \centering
    \caption{Comparison of our deterministic exact qubit-unitary inversion with previous works.  The query complexity is the number of calls of the input operation with respect to failure probability $\eta$ and/or approximation error $\epsilon$.}
    \begin{ruledtabular}
    \begin{tabular}{c|ccc}
             & Deterministic & Exact & Query complexity \\\hline
             Universal refocusing \cite{sardharwalla2016universal} & $\times$ & $\times$ & \;\;\;\;\;$O(\eta^{-5} \log^2 \epsilon^{-1})$\;\;\;\;\;\\
             Optimal parallel protocol (probabilistic exact) \cite{sedlak2019optimal,quintino2019probabilistic,quintino2019reversing}  & $\times$ & \checkmark & $O(\eta^{-1})$ \\
             Optimal parallel protocol (deterministic nonexact) \cite{quintino2022deterministic} & \checkmark & $\times$ & $O(\epsilon^{-1/2})$\\
             Success-or-draw (probabilistic exact) \cite{quintino2019probabilistic,quintino2019reversing,dong2021success} & $\times$ & \checkmark & $O(\log \eta^{-1})$ \\
             Success-or-draw (deterministic nonexact) \cite{quintino2022deterministic} & \checkmark & $\times$ & $O(\log \epsilon^{-1})$ \\
             Universal rewinding \cite{trillo2020translating,trillo2023universal} & $\times$ & \checkmark & $O(\log \eta^{-1})$\\
            This Letter & \checkmark & \checkmark & $O(1)$\\
    \end{tabular}
    \end{ruledtabular}
    \label{tab:comparison}
\end{table*}

We calculate the simplified SDP in MATLAB \cite{matlab} using the interpreter CVX \cite{cvx, gb08} with the solvers SDPT3 \cite{sdpt3,toh1999sdpt3,tutuncu2003solving}  and SeDuMi \cite{sedumi}, and obtain the optimal values for $n\leq 5$ and $d\leq 6$ (see Table \ref{tab:sdp}).   Group-theoretic calculations to write down the simplified SDP are done with SageMath \cite{sagemath}.  By estimating the analytical formula for the Choi matrix from the numerical result, we can derive  the corresponding unitary inversion circuit \cite{bisio2011minimal}. In fact,  the deterministic exact qubit-unitary inversion circuit shown in Fig.~\ref{fig:deterministic_exact_qubit_unitary_inversion} is derived from the numerical result for the case $d=2$ and $n=4$.
 
We also present the SDP to obtain the optimal fidelity of unitary inversion using the input unitary operations in parallel, which is simplified compared to Ref.~\cite{quintino2022deterministic}.  Our calculation allows us to obtain the numerical results beyond the previous work \cite{quintino2022deterministic}, which exhibits the coincidence between parallel and sequential optimal protocols for $n\leq d-1$ \cite{supple}.
The codes are available at Ref.~\cite{github} under the MIT license \cite{mit_license}.

{\it Discussions}.---
We compare the deterministic exact unitary inversion with the previously known protocols  for qubit-unitary inversion.   We consider the required number of calls of the input unitary operation to achieve success probability $p=1-\eta$ and/or average-case channel fidelity $F=1-\epsilon$, i.e., $\eta$ and $\epsilon$ represent a failure probability and an approximation error of the protocol, respectively.
The best-known protocol for probabilistic exact unitary inversion uses a ``success-or-draw'' strategy \cite{quintino2019probabilistic,quintino2019reversing,dong2021success}, which requires $n=O(\log \eta^{-1})$ calls of the input unitary operation to achieve the success probability $p=1-\eta$. We can convert this protocol to a deterministic nonexact protocol \cite{quintino2022deterministic}, which requires $n=O(\log \epsilon^{-1})$ to achieve the average-case channel fidelity $F=1-\epsilon$.
On the other hand, the qubit-unitary inversion protocol presented in this work achieves  $\eta=\epsilon=0$ with $n=O(1)$.  Therefore, our protocol is superior to the protocols in the previous works regarding the scaling of $n$ with respect to  failure probability $\eta$ and  approximation error $\epsilon$ (see  Table \ref{tab:comparison} and Supplemental Material \cite{supple} for the detail).

As shown in Refs.~\cite{quintino2019reversing,quintino2022deterministic}, any protocol using three calls of a qubit-unitary operation cannot implement unitary inversion deterministically and exactly. Thus, the protocol shown in this Letter uses the minimum number of calls of a qubit-unitary operation. However, this fact does not mean that all information on the input unitary operation $U_\mathrm{in}$ is ``consumed'' in the unitary inversion protocol.
Protocols ``consuming'' all information of the input unitary operations are analyzed as \emph{clean} protocols, namely, the protocols where the auxiliary system used for the protocol does not depend on the input unitary operation, in Ref.~\cite{gavorova2020topological}. As shown in Ref.~\cite{gavorova2020topological}, clean protocols of exact unitary inversion using $n$ calls of an input $d$-dimensional unitary operation do not exist when $n\neq-1\;\text{mod}\;d$. The protocol shown in this Letter avoids this no-go theorem by removing the restriction that the protocols be clean.  In fact, the output state of the auxiliary system is given by $\ket{\psi_{U_\mathrm{in}}} \otimes \ket{0}^{\otimes 4}$, which stores some information about $U_\mathrm{in}$.  As shown in Eq.~(\ref{eq:catalytic_transformation}), the quantum state $\ket{\psi_{U_\mathrm{in}}}$ can be used as a catalyst, i.e., it can be reused in another run of the unitary inversion of the same unitary operation $U_\mathrm{in}$.
This is a possible application of the stored information about the input operation in output auxiliary states of nonclean protocols.

On the other hand, our qubit-unitary inversion protocol can be made clean by adding an extra call of the input unitary operation $U_\mathrm{in}$.  We can remove the information of $U_\mathrm{in}$ stored in the quantum state $\ket{\psi_{U_\mathrm{in}}}$ by applying $U_\mathrm{in}$ since $(\1\otimes U_\mathrm{in})\ket{\psi_{U_\mathrm{in}}} = U_\mathrm{in}^{\otimes 2} \ket{\psi^-} = \ket{\psi^-}$ holds.  Since nonclean protocols require a thermodynamic cost to erase the information \cite{landauer1961irreversibility,meier2023energy}, the clean unitary inversion protocol has the potential to reduce the thermodynamic cost of quantum computation.

{\it Conclusion}.---
In this Letter, we  constructed a deterministic exact unitary inversion protocol using four calls of input qubit-unitary operation $U_\mathrm{in}\in\SU(2)$ in sequence.
This transformation can be regarded as a transformation from three calls of $U_\mathrm{in}$ to its inverse operation $U_\mathrm{in}^{-1}$ with a catalyst state $\ket{\psi_{U_\mathrm{in}}}$ as shown in Eq.~(\ref{eq:catalytic_transformation}) , and we can make the protocol clean by adding an extra use of $U_\mathrm{in}$.  We leave it a future work to investigate general higher-order quantum transformations with catalyst states.

We  also presented the SDP approach to seek deterministic exact unitary inversion for $d>2$. We showed the simplification of the SDP using the $\SU(d)\times\SU(d)$ symmetry, which enables numerical calculation up to $n\leq 5$. Reference~\cite{grinko2022linear} presents the reduction of SDPs with $\SU(d)$ symmetry and additional symmetry to linear programming. It is an interesting future work to invent a similar technique for the SDP of unitary inversion, which will be applied to seek deterministic exact unitary inversion for $d>2$.

We can also extend the qubit-unitary inversion protocol presented in this work to a protocol reversing any qubit-encoding isometry operations, namely, quantum operations transforming qubit pure states to \emph{qudit} pure states. This extension is done by constructing a quantum circuit transforming unitary inversion protocols to isometry inversion protocols, which will be presented in another work \cite{yoshida2022unpublished}.

We acknowledge M. Studzi\'{n}ski, T. M\l{}ynik, M. T. Quintino, H. Kristj\'{a}nsson, P. Taranto,  H. Yamasaki, M. Ozols, and D. Grinko for valuable discussions.  This work was supported by MEXT Quantum Leap Flagship Program (MEXT QLEAP) JPMXS0118069605, JPMXS0120351339, Japan Society for the Promotion of Science (JSPS) KAKENHI Grants No. 18K13467, No. 21H03394, FoPM, WINGS Program, the University of Tokyo, DAIKIN Fellowship Program, the University of Tokyo,  and IBM-UTokyo lab.
 The quantum circuits shown in this paper are drawn using quantikz \cite{kay2018tutorial}.

\bibliography{main}

\clearpage

\onecolumngrid
\appendix

\renewcommand{\theequation}{S\arabic{equation}}
\renewcommand{\thetable}{S\arabic{table}}
\renewcommand{\thefigure}{S\arabic{figure}}
\renewcommand{\theDef}{S\arabic{Def}}
\setcounter{equation}{0}
\setcounter{table}{0}
\setcounter{figure}{0}
\setcounter{page}{1}

\begin{center}
    \textbf{\large Supplemental Material for: ``Reversing Unknown Qubit-Unitary Operation, Deterministically and Exactly''}
\end{center}

\section{The deterministic exact qubit-unitary inversion}
 This section shows the following Theorem on the existence of deterministic exact qubit-unitary inversion, which corresponds to Theorem 1 in the main manuscript.
\begin{Thm}
\label{supple_thm:deterministic_exact_unitary_inversion}
The quantum circuit shown in Fig.~\ref{supple_fig:deterministic_exact_qubit_unitary_inversion} implements deterministic exact qubit-unitary inversion using four calls of the input qubit-unitary operation $U_\mathrm{in} \in \SU(2)$.  In particular, the output state $\ket{\psi_\mathrm{out}}$ is given by
\begin{align}
    \ket{\psi_\mathrm{out}} = -(U_\mathrm{in} \otimes \1)\ket{\psi^-}_{12} \otimes U_\mathrm{in}^{-1}\ket{\phi_\mathrm{in}}_{3} \otimes \ket{0}^{\otimes 4}_{4\cdots 7}.\label{supple_eq:output_state}
\end{align}
\end{Thm}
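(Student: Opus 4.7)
My plan is to follow the reduction outlined in the proof sketch of Theorem 1 and execute each step carefully with the Clebsch--Gordan machinery.

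First I would observe that the antisymmetric state $\ket{\psi^-}$ is an $\SU(2)$ singlet, so both families $\{\ket{v_\phi}\}_{\ket{\phi}\in\CC^2}$ and $\{\ket{w_\phi}\}_{\ket{\phi}\in\CC^2}$ live inside the total spin-$1/2$ sector of the three qubits labeled $1,2,3$, tensored with $\ket{0}^{\otimes 4}_{4\cdots 7}$. By Schur--Weyl duality, $(\CC^2)^{\otimes 3}$ decomposes as one $j=3/2$ irrep and a multiplicity-$2$ copy of the $j=1/2$ irrep, so the relevant spin-$1/2$ subspace is $4$-dimensional. The two families correspond to two different intermediate couplings: $\ket{v_\phi}$ to pairing qubits $2,3$ first (giving $j_{23}=0$) and $\ket{w_\phi}$ to pairing qubits $1,2$ first (giving $j_{12}=0$). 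This identifies $\mcH$ with the full spin-$1/2$ sector and gives two natural bases whose change-of-basis matrix is a unitary independent of $U_\mathrm{in}$.

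Next I would design $V^{(1)}$ and $V^{(2)}$ so that on the spin-$1/2$ sector of qubits $1,2,3$ (with the ancilla qubits $4,\dots,7$ staying in $\ket{0}^{\otimes 4}$), the Clebsch--Gordan transform maps the computational basis into a decomposition indexed by $(j, m, \alpha)$, with $\alpha\in\{1,2\}$ labeling the multiplicity. In this decomposition, the single-qubit $U_\mathrm{in}$ on qubit $2$ acts by mixing the two multiplicity copies while preserving the irrep labels (together with possibly moving spin-$1/2$ content into the spin-$3/2$ sector). The point of $V^{(1)}$ and $V^{(2)}$ is to rotate between convenient ``recoupling'' frames before and after each call of $U_\mathrm{in}$ and to use the ancillary registers $4,\dots,7$ as a bookkeeping workspace so that the higher-spin leakage is cleanly parked and later retrieved. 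With this structure in place, the composition $g_{U_\mathrm{in}}=U_1^\dagger V^{(2)} U_2 V^{(1)} U_2 U_1$ restricted to $\mcH$ becomes a $2\times 2$ matrix on the multiplicity label that is independent of $U_\mathrm{in}$ by Schur's lemma, provided each intermediate step is chosen to intertwine only within the same irrep.

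Having reduced the problem to a fixed $2\times 2$ matrix $G$ on $\mcH$, I would compute $G$ by evaluating the circuit on a convenient choice such as $\ket{\phi}=\ket{0}$: applying $U_\mathrm{in}$ to qubit $2$ of $\ket{\psi^-}_{23}$ and expanding in the $\{\ket{v_\phi},\ket{w_\phi}\}$ basis gives the first column, and similarly for $g_{U_\mathrm{in}}\ket{w_\phi}$. Once $G$ matches the claimed $G = \frac{1}{\sqrt 3}\bigl(\begin{smallmatrix}-1 & -1\\ 1 & -2\end{smallmatrix}\bigr)$, a direct computation yields $G^2=\bigl(\begin{smallmatrix}0 & 1\\ -1 & 1\end{smallmatrix}\bigr)$, so $G^2(1,0)^T=(0,-1)^T$, which is precisely Eq.~(\ref{eq:deterministic_exact_unitary_inversion_g}); undoing the conjugation by $U_1$ then gives Eq.~(\ref{supple_eq:output_state}).

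The main obstacle I anticipate is the second step: writing down $V^{(1)}$ and $V^{(2)}$ explicitly so that (i) the composition really does preserve $\mcH$ for \emph{every} $U_\mathrm{in}\in\SU(2)$, not just in spirit, and (ii) the induced matrix on the multiplicity space is exactly the required $G$. In particular, the intermediate states after one call of $U_\mathrm{in}$ leave the spin-$1/2$ sector of qubits $1,2,3$, so the ancilla registers $4,\dots,7$ must be used to store the spin-$3/2$ component reversibly and bring it back at the end, restoring $\ket{0}^{\otimes 4}$. Designing the Clebsch--Gordan isometries and verifying the resulting matrix elements is the calculation I would expect to be most delicate; the symmetry considerations above, however, guarantee that the verification reduces to checking two basis vectors rather than a $U_\mathrm{in}$-dependent condition.
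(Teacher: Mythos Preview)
Your reduction to the matrix identity $G^2(1,0)^T=(0,-1)^T$ is correct and matches the paper's strategy, and your identification of $\mcH$ with the spin-$1/2$ multiplicity space of qubits $1,2,3$ is exactly right. The gap is in the middle step, where you invoke Schur's lemma to conclude that $g_{U_\mathrm{in}}|_\mcH$ is a $U_\mathrm{in}$-independent $2\times 2$ matrix ``provided each intermediate step is chosen to intertwine only within the same irrep.'' That proviso is the entire content of the theorem, and it is not automatic: the intermediate applications of $(U_\mathrm{in})_2$ do \emph{not} commute with the diagonal $\SU(2)$ action on qubits $1,2,3$, so Schur's lemma does not apply to them directly. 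The cancellation of $U_\mathrm{in}$-dependence is not a symmetry consequence but a specific algebraic identity coming from the Clebsch--Gordan structure (two calls of $U_\mathrm{in}$ combine through $V_\mathrm{CG}$ into $U_\mathrm{in}^{(j)}$, the $j=0$ piece is the scalar, and the residual $j=1/2$ piece is killed by the outer $U_1^\dagger$). Your proposed shortcut for computing $G$---``applying $U_\mathrm{in}$ to qubit $2$ of $\ket{\psi^-}_{23}$ and expanding''---is also not what $g_{U_\mathrm{in}}\ket{v_\phi}$ is; the full operator $g_{U_\mathrm{in}}$ involves both $V^{(1)}$ and $V^{(2)}$ and two calls of $U_\mathrm{in}$, and there is no way to read off $G$ without tracking the state through those gates.

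The paper does not attempt a symmetry argument. It gives explicit seven-qubit circuits for $V^{(1)}$ and $V^{(2)}$ built from $V_\mathrm{CG}^{(2)}$, $V_\mathrm{CG}^{(3)}$, their inverses, and SWAPs/CNOTs, and then verifies Lemma~\ref{supple_lem:action_of_g} by a direct step-by-step state computation: starting from an arbitrary $\ket{\xi_\mathrm{in}}\in\mcH$, the state is tracked through six marked checkpoints of the circuit, using the explicit CG relations~(\ref{supple_eq:cg_n=3_1})--(\ref{supple_eq:cg_n=3_2}) to see how the $U_\mathrm{in}^{(j)}$ blocks combine and how the ancilla qubits $4,\dots,7$ shuttle the spin labels. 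The $U_\mathrm{in}$-independence of $G$ and the return of the ancillas to $\ket{0}^{\otimes 4}$ fall out of this calculation, not from Schur's lemma. Your outline would become a proof once you supply those circuits and carry out that calculation; the abstract symmetry reasoning alone does not.
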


\begin{figure*}[b]
    \centering
    \begin{quantikz}[wire types={n,n,n,n}]
         & & \lstick{$\ket{\phi_{\mathrm{in}}}$} & \setwiretype{q} \wire[l][1]["1"{above,pos=-0.2}]{a} & & \gate[wires=4, style={fill=blue!20}]{V^{(1)}} & \qw &\gate[wires=4, style={fill=blue!20}]{V^{(2)}} &\qw &\gate[wires=4, style={fill=blue!20}]{V^{(1)}} & \qw &\gate[wires=4,  style={fill=blue!20}]{V^{(2)}} &\qw \rstick[wires=2]{$\ket{\psi_{U_\mathrm{in}}} $}\\
        \gategroup[2, steps = 5, style={dashed, rounded corners}, label style = {label position = left, anchor = east, yshift = -0.2cm}]{{\ket{\psi_{U_\mathrm{in}}}}}& &\lstick[wires=2]{$\ket{\psi^-}$} & \setwiretype{q} \wire[l][1]["2"{above,pos=-0.2}]{a}  & \gate[style={fill=green!20}]{U_\mathrm{in}} & & \gate[style={fill=green!20}]{U_\mathrm{in}} & &\gate[style={fill=green!20}]{U_\mathrm{in}}& \qw & \gate[style={fill=green!20}]{U_\mathrm{in}} &\qw & \qw\\
        &&&\qw  & \setwiretype{q} \wire[l][1]["3"{above,pos=0.85}]{a} &\qw &\qw& \qw & \qw & \qw &\qw &\qw & \rstick{$U_\mathrm{in}^{-1}\ket{\phi_\text{in}}$}\\
        &&\lstick{$\ket{0}^{\otimes 4}$} & \setwiretype{q} \qwbundle[]{4567} & & \qw & &\qw & & \qw & &\qw & \rstick{$\ket{0}^{\otimes 4}$}
    \end{quantikz}
    \caption{(The same as Fig.~1 in the main manuscript) Deterministic exact qubit-unitary inversion protocol using four calls of an input qubit-unitary operation $U_\mathrm{in}$, which implements the inverse operation $U_\mathrm{in}^{-1}$ on an arbitrary input quantum state $\ket{\phi_\text{in}}$ with additional quantum states $\ket{\psi_{U_\mathrm{in}}}\coloneqq (U_\mathrm{in}\otimes \1)\ket{\psi^-}$ and $\ket{0}^{\otimes 4}$. Here, each wire without a slash represents a qubit system, each wire with a slash represents a multiqubit system,  numbers on wires represent the indices of the corresponding systems, $\ket{\psi^-}$ is the antisymmetric state defined as $\ket{\psi^-}\coloneqq (\ket{01}-\ket{10})/\sqrt{2}$, and $V^{(1)}$ and $V^{(2)}$ are fixed unitary operations using the Clebsch-Gordan transforms (see Fig.~\ref{supple_fig:cg_transform}) as shown in Fig.~\ref{supple_fig:operation}.}
    \label{supple_fig:deterministic_exact_qubit_unitary_inversion}
\end{figure*}
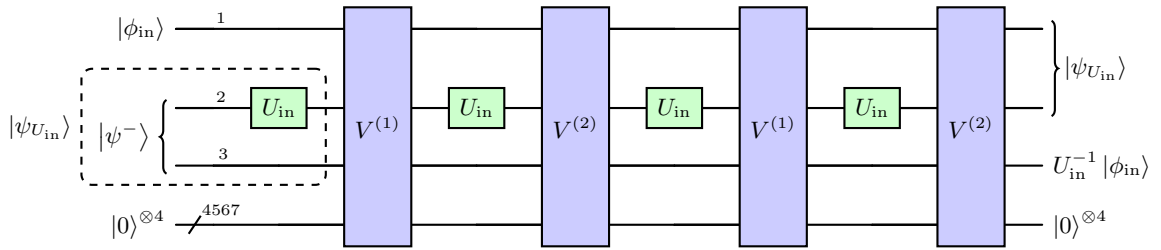

In the quantum circuit shown in Fig.~\ref{supple_fig:deterministic_exact_qubit_unitary_inversion}, $\ket{\phi_\mathrm{in}}$ is an input qubit state, $\ket{\psi^-}$ is the antisymmetric state defined by $\ket{\psi^-}\coloneqq (\ket{01}-\ket{10})/\sqrt{2}$, $U_\mathrm{in}$ is an input qubit-unitary operation, and $V^{(1)}$ and $V^{(2)}$ are fixed unitary operations defined below.

To construct unitary operators $V^{(1)}$ and $V^{(2)}$, we introduce the Schur basis and the Clebsch-Gordan transform.
We consider the following representations of the special unitary group $\SU(d)$ and the symmetric group $\mfS_n$ on the $n$-fold Hilbert space $(\CC^d)^{\otimes n}$:
\begin{align}
    &\SU(d)\ni U \mapsto U^{\otimes n} \in \mcL(\CC^d)^{\otimes n},\\
    &\mfS_{n} \ni \sigma \mapsto P_{\sigma} \in \mcL(\CC^d)^{\otimes n},
\end{align}
where $P_{\sigma}$ is a permutation operator defined as $P_{\sigma}\ket{i_1\cdots i_{n}} = \ket{i_{\sigma^{-1}(1)}\cdots i_{\sigma^{-1}(n)}}$ for the computational basis $\{\ket{i}\}$ of $\CC^d$ and $\mcL(\mcH)$ is a space of linear operators on a Hilbert space $\mcH$. Then, these representations are decomposed simultaneously as follows (Schur-Weyl duality \cite{iwahori1978representation,sagan2001symmetric}):
\begin{align}
    (\CC^d)^{\otimes n} &= \bigoplus_{\mu\in\young{d}{n}} \mcU_{\mu} \otimes \mcS_{\mu},\\
    U^{\otimes n} &= \bigoplus_{\mu \in \young{d}{n}} U_{\mu} \otimes \1_{\mcS_{\mu}},\label{supple_eq:def_U_mu}\\
    P_{\sigma} &= \bigoplus_{\mu\in\young{d}{n}} \1_{\mcU_{\mu}} \otimes \sigma_{\mu},\label{supple_eq:def_sigma_mu}
\end{align}
where $\mu$ runs in the set of Young diagrams with $n$ boxes and at most depth $d$, denoted by $\young{d}{n}$, and $\SU(d)\ni U\mapsto U_{\mu}\in \mcL(\mcU_{\mu})$ and $\mfS_{n}\ni \sigma \mapsto \sigma_{\mu}\in \mcL(\mcS_{\mu})$ are irreducible representations.
The irreducible representation space $\mcU_\mu$ of $\SU(d)$ is spanned by an orthonormal basis called Gelfand-Zetlin basis $\{\ket{\mu, u}_{\mcU_\mu}\}$ for $u\in \{1, \cdots, m_\mu\}$, where $m_\mu$ is the number of semi-standard tableaux with the frame $\mu$.  Each vector $\ket{\mu, u}$ is associated to a semi-standard tableau indexed by $u$. Also, the irreducible representation space $\mcS_\mu$ of $\mfS_n$ is spanned by orthonormal bases called Young-Yamanouchi basis $\{\ket{\mu, i}_{\mcS_\mu}\}$ for $i\in \{1, \cdots, d_\mu\}$, where  $d_\mu$ is the number of standard tableaux with the frame $\mu$.  Each vector $\ket{\mu, i}$ is associated to a standard tableau indexed by $i$ \cite{bacon2006efficient,bacon2007quantum,krovi2019efficient}.
Combining the Gelfand-Zetlin basis and the Young-Yamanouchi basis, we obtain an orthonormal basis of $(\CC^d)^{\otimes n}$ called the Schur basis defined as follows:
\begin{align}
    \ket{\mu, u, i} \coloneqq \ket{\mu, u}_{\mcU_\mu} \otimes \ket{\mu, i}_{\mcS_\mu},\label{eq:def_schur_basis}
\end{align}
for $\mu\in\young{d}{n}$, $u\in \{1, \cdots, m_\mu\}$ and $i\in \{1, \cdots, d_\mu\}$.

When $d=2$, the Schur basis corresponds to simultaneous eigenvectors of the total angular momentum $j$ and the $z$-component of angular momentum $m$ of $n$ copies of a spin-1/2 system \cite{bacon2006efficient,bacon2007quantum,sakurai1995modern}.
The Young diagram $\mu \in \young{d=2}{n}$ can be represented as $(\mu_1, \mu_2)$ using the number of boxes $\mu_a$ in $a$-th row, and it corresponds to the total angular momentum $j = (\mu_1-\mu_2)/2$.  The semi-standard tableau in the Gelfand-Zetlin basis can be represented by its frame $\mu$ and the number of 1's in the tableau denoted by $n_1$, and it corresponds to the $z$-component of angular momentum $m = n_1-n/2$.
The Young-Yamanouchi basis corresponds to a multiplicity of the irreducible representation of $\SU(2)$, which can be labeled by $p = (p_1, p_2, \cdots, p_{n})$, where $p_a = 1$ if the box \fbox{$a$} is in the first row of the standard tableau indexed by $s$, and otherwise $p_a = 0$.  Since $p_1$ is always $p_1=1$ and $p_2$ can be obtained from $p_3, \cdots, p_n$ and $j$ as $p_2=j+n/2-(p_3+\cdots +p_n)-1$, we can omit $p_1$ and $p_2$.  Therefore, the Schur basis can be expressed as
\begin{align}
    \ket{j; m; p_3, \cdots, p_n} = \ket{j;m}_{\mcU_\mu} \otimes \ket{j; p_3, \cdots, p_n}_{\mcS_\mu}.
\end{align}
We also write the decomposition of $U^{\otimes n}$ for $U\in\SU(2)$ shown in Eq.~(\ref{supple_eq:def_U_mu}) as
\begin{align}
    U^{\otimes n} = \bigoplus_{j=0 (1/2)}^{n/2} U^{(j)}_{\mcU_\mu}\otimes \1_{\mcS_\mu},
\end{align}
where the summand starts from $j=0$ when $n$ is even and $j=1/2$ when $n$ is odd.
Then, the action of $U^{\otimes n}$ for $U\in\SU(2)$ on the Schur basis can be written as
\begin{align}
    U^{\otimes n}\ket{j; m; p_3, \cdots, p_n} = U^{(j)} \ket{j;m}_{\mcU_\mu} \otimes \ket{j;p_3, \cdots, p_n}_{\mcS_\mu}.\label{supple_eq:def_U_j}
\end{align}
We consider the addition of spin-$j$ and spin-1/2 given as follows \cite{sakurai1995modern,bacon2006efficient}:
\begin{align}
    \left(\begin{matrix}
    \ket{j-{1\over 2}; m'}\\
    \ket{j+{1\over 2}; m'}
    \end{matrix}\right)
     =
     \left(\begin{matrix}
     \cos \theta_{j,m'} & -\sin \theta_{j,m'}\\
     \sin \theta_{j,m'} & \cos \theta_{j,m'}
     \end{matrix}\right)
     \left(\begin{matrix}
     \ket{j;m'+{1\over 2}} \otimes \ket{{1\over 2}; - {1\over 2}}\\
     \ket{j;m'-{1\over 2}} \otimes \ket{{1\over 2}; + {1\over 2}}
     \end{matrix}\right),
\end{align}
where $\ket{j;m}$ represents the quantum state of spin-$j$ whose $z$-component of angular momentum is $m$ and $\theta_{j,m'}$ is defined by $\cos \theta_{j,m'} = \sqrt{j+m'+1/2 \over 2j+1}$.  
Then, the Schur basis $\{\ket{j';m';p'_3, \cdots, p'_{n+1}}\}$ of $(\CC^2)^{\otimes n+1}$ can be represented by using the Schur basis $\{\ket{j';m';p_3, \cdots, p_{n}}\}$ of the first $n$ subsystems $(\CC^2)^{\otimes n}$ and the computational basis $\{\ket{0}, \ket{1}\}$ of the last subsystem $\CC^2$ as
\begin{align}
    \left(\begin{matrix}
    \ket{j-{1\over 2}; m'; p_3, \cdots, p_n, p'_{n+1} = 0}\\
    \ket{j+{1\over 2}; m'; p_3, \cdots, p_n, p'_{n+1} = 1}
    \end{matrix}\right)
     =
     \left(\begin{matrix}
     \cos \theta_{j,m'} & -\sin \theta_{j,m'}\\
     \sin \theta_{j,m'} & \cos \theta_{j,m'}
     \end{matrix}\right)
     \left(\begin{matrix}
     \ket{j; m'+{1\over 2}; p_3, \cdots, p_n} \otimes \ket{0}\\
     \ket{j; m'-{1\over 2}; p_3, \cdots, p_n} \otimes \ket{1}
     \end{matrix}\right).
     \label{supple_eq:cg_transform}
\end{align}
The Schur basis vector $\ket{j;m;p_3, \cdots, p_n}$ can be expressed in a multiqubit system as $\ket{j_1 \cdots j_a} \otimes \ket{m_1 \cdots m_b} \otimes \ket{p_3} \otimes \cdots \otimes \ket{p_n}$, where $j_1 \cdots j_a$ is a binary representation of $\lfloor j \rfloor$ and $m_1 \cdots m_b$ is a binary representation of $m+j$.
Then, we define the Clebsch-Gordan (CG) transform $V_\mathrm{CG}^{(n+1)}$ \cite{bacon2006efficient} by
\begin{align}
    V_\mathrm{CG}^{(n+1)\dagger}\left(\begin{matrix}
    \ket{j'^{-}_1 \cdots j'^{-}_a} \otimes \ket{m'_1 \cdots m'_b} \otimes \ket{p'_{n+1}=0}\\
    \ket{j'^{+}_1 \cdots j'^{+}_a} \otimes \ket{m'_1 \cdots m'_b} \otimes \ket{p'_{n+1}=1}
    \end{matrix}\right)
    =
     \left(\begin{matrix}
     \cos \theta_{j,m'} & -\sin \theta_{j,m'}\\
     \sin \theta_{j,m'} & \cos \theta_{j,m'}
     \end{matrix}\right)
     \left(\begin{matrix}
     \ket{j_1 \cdots j_a} \otimes \ket{m^{+}_1 \cdots m^{+}_b} \otimes \ket{0}\\
     \ket{j_1 \cdots j_a} \otimes \ket{m^{-}_1 \cdots m^{-}_b} \otimes \ket{1}
     \end{matrix}\right),
\end{align}
where $j'^{\pm}_1 \cdots j'^{\pm}_a$ is a binary representation of $\lfloor j\pm {1 \over 2} \rfloor$ and $m^{\pm}_1 \cdots m^{\pm}_b$ is a binary representation of $m'\pm {1\over 2} +j$.
The quantum circuit for the Clebsch-Gordan transform is given in Ref.~\cite{bacon2006efficient}.
In particular, we consider the case $j'=1/2$ and $n=2$. From Eq.~(\ref{supple_eq:cg_transform}), we obtain
\begin{align}
    \ket{j'=1/2; m', p'_3=0} &= \cos \theta_{j=1, m'} \ket{j=1; m'+1/2} \otimes \ket{0} - \sin\theta_{j=1, m'} \ket{j=1; m'-1/2} \otimes \ket{1},\label{supple_eq:addition_n=3_1}\\
    \ket{j'=1/2; m'; p'_3=1} &= \ket{j=0; m=0}\otimes \ket{m'+1/2}.\label{supple_eq:addition_n=3_2}
\end{align}
Therefore, we obtain the following relation for the Clebsch-Gordan transform $V_\mathrm{CG}^{(3)}$:
\begin{align}
    V_\mathrm{CG}^{(3)\dagger}\ket{0}\otimes \ket{m'_1 m'_2} \otimes \ket{0} &= \cos \theta_{j=1, m'} \ket{1} \otimes \ket{m^+_1 m^+_2} \otimes \ket{0} - \sin\theta_{j=1, m'} \ket{1}\otimes \ket{m^-_1 m^-_2} \otimes \ket{1},\label{supple_eq:cg_n=3_1}\\
    V_\mathrm{CG}^{(3)\dagger}\ket{0}\otimes \ket{m'_1 m'_2} \otimes \ket{1} &= \ket{0}\otimes \ket{00}\otimes \ket{m'+1/2}.\label{supple_eq:cg_n=3_2}
\end{align}
We show the quantum circuit for the case $n=2$ and $n=3$ in Fig.~\ref{supple_fig:cg_transform}, where $R_y(\theta)$ is a $y$-rotation of a qubit defined by
\begin{align}
    R_y(\theta) \coloneqq \left(\begin{matrix}
    \cos {\theta \over 2} & -\sin {\theta \over 2}\\
    \sin {\theta \over 2} & \cos {\theta \over 2}
    \end{matrix}\right).\label{supple_eq:def_Ry}
\end{align}

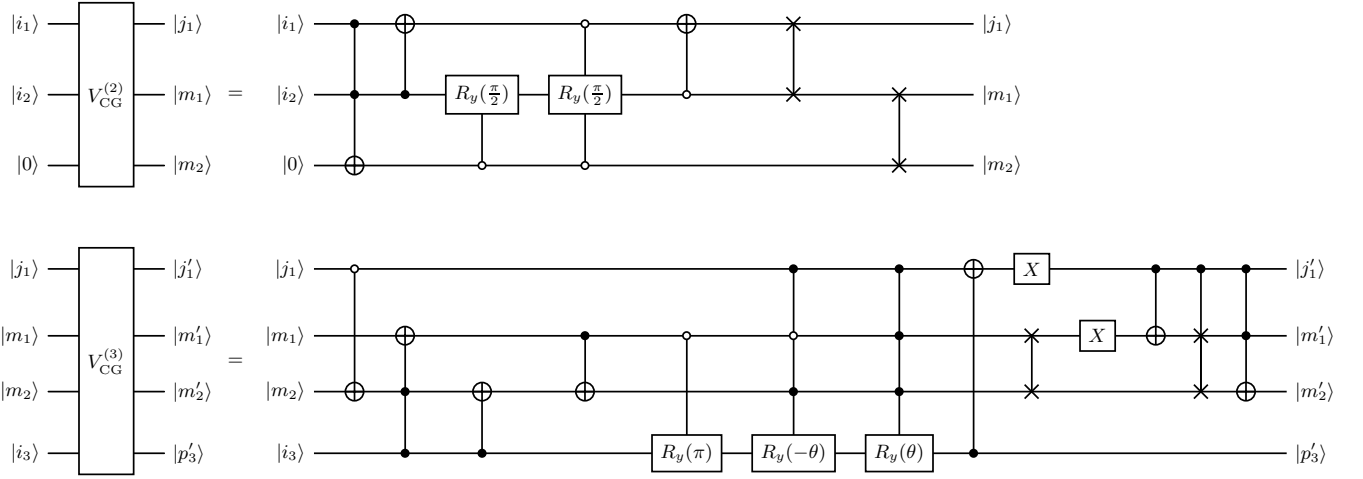
\begin{figure}
    \centering
    \begin{adjustbox}{width=\linewidth}
    \begin{quantikz}[wire types={q,q,q,q,q,q,q,q}]
        \lstick{$\ket{i_1}$}& \gate[wires = 3]{V_\mathrm{CG}^{(2)}}& \qw \rstick{$\ket{j_1}$} \setwiretype{n}& \midstick[3,brackets=none]{\quad = \quad} & \lstick{$\ket{i_1}$} \arrow[from=1-5,to=1-13,black,thick,-]{} &\ctrl{1} & \targ{} &  & \octrl{1}& \targ{} &\swap{1} & &  \rstick{$\ket{j_1}$} \\
        \lstick{$\ket{i_2}$}& & \qw \rstick{$\ket{m_1}$} \setwiretype{n} &  & \lstick{$\ket{i_2}$} \arrow[from=2-5,to=2-13,black,thick,-]{} & \ctrl{1} & \ctrl{-1} & \gate{R_y(\frac{\pi}{2})} & \gate{R_y(\frac{\pi}{2})}& \octrl{-1}& \targX{} & \swap{1} & \rstick{$\ket{m_1}$}\\
        \lstick{$\ket{0}$}& & \qw \rstick{$\ket{m_2}$} \setwiretype{n} &  & \lstick{$\ket{0}$} \arrow[from=3-5,to=3-13,black,thick,-]{} & \targ{} &  & \octrl{-1} & \octrl{-1}&  &  &\targX{} &\rstick{$\ket{m_2}$}\\
        \\
        \lstick{$\ket{j_1}$}& \gate[wires = 4]{V_\mathrm{CG}^{(3)}} & \qw \rstick{$\ket{j'_1}$} \setwiretype{n}&\midstick[4,brackets=none]{\quad = \quad}& \lstick{$\ket{j_1}$} \arrow[from=5-5,to=5-19,black,thick,-]{} &\octrl{2} & & & & &\ctrl{1}&\ctrl{1} & \targ{} &\gate{X} & &\ctrl{1} &\ctrl{2} &\ctrl{1} &\rstick{$\ket{j'_1}$}\\
        \lstick{$\ket{m_1}$}& & \qw \rstick{$\ket{m_1'}$} \setwiretype{n}& & \lstick{$\ket{m_1}$}  \arrow[from=6-5,to=6-19,black,thick,-]{} & & \targ{} & &\ctrl{1} &\octrl{2}&\octrl{1}&\ctrl{1}& & \swap{1}& \gate{X} &\targ{} &\swap{1} & \ctrl{1}&\rstick{$\ket{m'_1}$}\\
        \lstick{$\ket{m_2}$}& & \qw \rstick{$\ket{m'_2}$} \setwiretype{n} & &\lstick{$\ket{m_2}$}  \arrow[from=7-5,to=7-19,black,thick,-]{} & \targ{} & \ctrl{0} &\targ{}& \targ{}&&\ctrl{1}&\ctrl{1}& &\targX{} &&&\targX{}&\targ{}&\rstick{$\ket{m'_2}$}\\
        \lstick{$\ket{i_3}$}& & \qw \rstick{$\ket{p_3'}$} \setwiretype{n}&  & \lstick{$\ket{i_3}$}  \arrow[from=8-5,to=8-19,black,thick,-]{} & &\ctrl{-2} &\ctrl{-1}& &\gate{R_y(\pi)}&\gate{R_y(-\theta)}&\gate{R_y(\theta)} &\ctrl{-3}&  & &&&&\rstick{$\ket{p'_3}$}
    \end{quantikz}
    \end{adjustbox}
    \caption{Quantum circuits for the Clebsch-Gordan transforms $V_\mathrm{CG}^{(n)}$ for $n=2$ (top) and $n=3$ (bottom), where $R_y$ is a $y$-rotation of a qubit defined by Eq.~(\ref{supple_eq:def_Ry}), and $\theta$ is defined by $\theta \coloneqq 2 \arccos(\sqrt{2/3})$.}
    \label{supple_fig:cg_transform}
\end{figure}

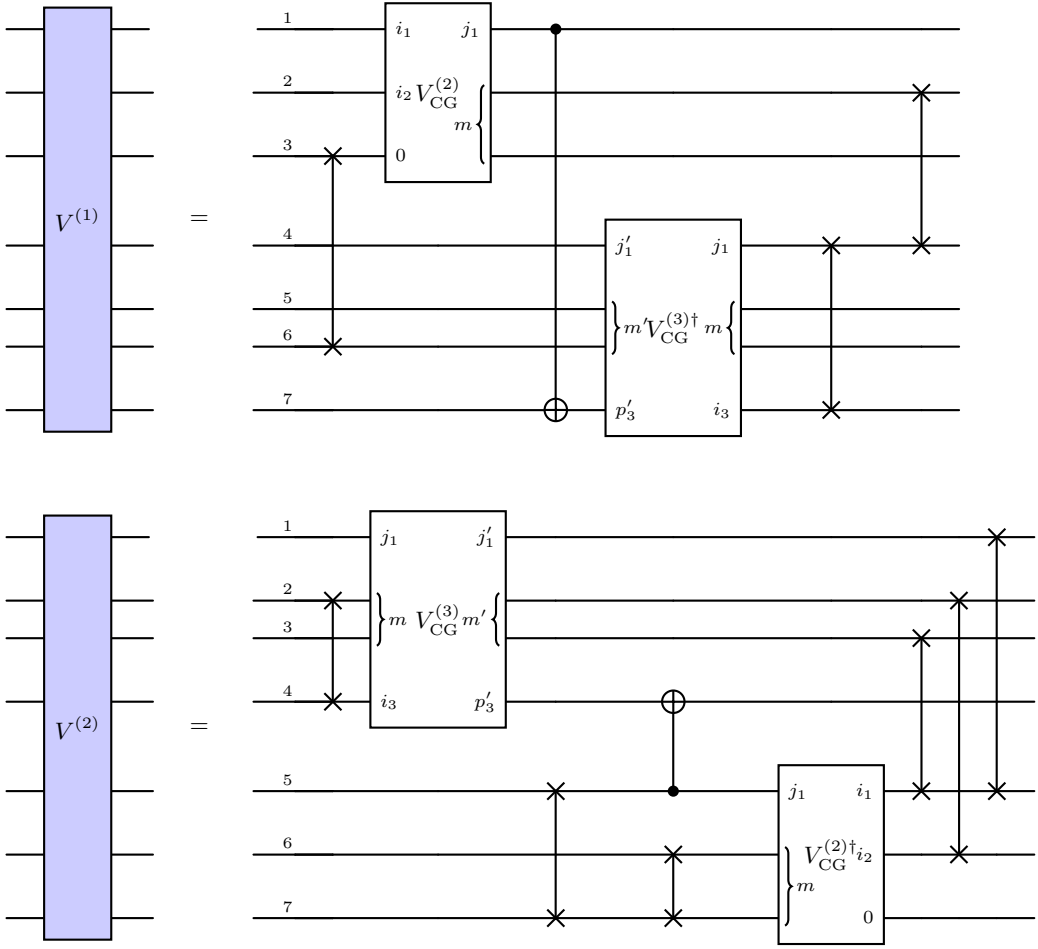
\begin{figure}
    \centering
    \begin{quantikz}
        & \gate[wires=7, style={fill=blue!20}]{V^{(1)}} &  \midstick[7,brackets=none]{\quad = \quad} & \wire[r][1]["1"{above,pos=-0.2}]{a} & & \gate[wires=3][1.4cm]{V_{\mathrm{CG}}^{(2)}}\gateinput{$i_1$}\gateoutput{$j_1$} & \ctrl{6} & & & &\\
        &  &  & \wire[r][1]["2"{above,pos=-0.2}]{a} & & \gateinput{$i_2$}\gateoutput[2]{$m$} &  &  & &\swap{2} &
        \\
        &  &  & \wire[r][1]["3"{above,pos=-0.2}]{a} & \swap{3} &\gateinput{$0$} &  &  & &  &\\
        &  &  & \wire[r][1]["4"{above,pos=-0.2}]{a} & &  &   &  \gate[wires=4][1.8cm]{V_{\mathrm{CG}}^{(3)\dagger}}\gateinput{$j'_1$}\gateoutput{$j_1$} & \swap{3} & \targX{} &\\
        &  &  & \wire[r][1]["5"{above,pos=-0.2}]{a} & &  &  & \gateinput[2]{$m'$}\gateoutput[2]{$m$} & & &\\
        &  &  & \wire[r][1]["6"{above,pos=-0.2}]{a} & \targX{} &  &  & &  &  & \\
        &  &  & \wire[r][1]["7"{above,pos=-0.2}]{a} & &   & \targ{} & \gateinput{$p'_3$}\gateoutput{$i_3$} & \targX{} & & \\
        \\
        & \gate[wires=7, style={fill=blue!20}]{V^{(2)}} &  \midstick[7,brackets=none]{\quad = \quad}& \wire[r][1]["1"{above,pos=-0.2}]{a} & &\gate[wires=4][1.8cm]{V_{\mathrm{CG}}^{(3)}}\gateinput{$j_1$}\gateoutput{$j'_1$}  && & & & &\swap{4}&\\
        &  &  & \wire[r][1]["2"{above,pos=-0.2}]{a} & \swap{2} &\gateinput[2]{$m$}\gateoutput[2]{$m'$} && & & &\swap{4}&  &
        \\
        &  &  & \wire[r][1]["3"{above,pos=-0.2}]{a} & && && &\swap{2}&  & & \\
        &  &  & \wire[r][1]["4"{above,pos=-0.2}]{a} &\targX{}&\gateinput{$i_3$}\gateoutput{$p'_3$}& &\targ{} &  & & & &\\
        &  &  & \wire[r][1]["5"{above,pos=-0.2}]{a} & &&\swap{2} &\ctrl{-1} &\gate[wires=3][1.4cm]{V_{\mathrm{CG}}^{(2)\dagger}}\gateinput{$j_1$}\gateoutput{$i_1$} & \targX{} &  &\targX{} & \\
        &  &  & \wire[r][1]["6"{above,pos=-0.2}]{a} && && \swap{1} &\gateinput[2]{$m$}\gateoutput{$i_2$}  & &\targX{} &  & \\
        &  &  & \wire[r][1]["7"{above,pos=-0.2}]{a} &&&\targX{} & \targX{} &\gateoutput{0} & & &  & 
    \end{quantikz}
    \caption{Unitary operators $V^{(1)}$ and $V^{(2)}$ in the qubit-unitary inversion circuit shown in Fig.~\ref{supple_fig:deterministic_exact_qubit_unitary_inversion} are constructed using the Clebsch-Gordan transforms $V_\mathrm{CG}^{(n)}$.  The symbols $i_1$, $i_2$, $0$, $j$, $m$, $j'$, $m'$, and $p'_3$ inside the boxes of $V_\mathrm{CG}^{(2)}$, $V_\mathrm{CG}^{(3)}$, $V_\mathrm{CG}^{(2)\dagger}$ and $V_\mathrm{CG}^{(3)\dagger}$ correspond to the input and output states of the corresponding operations shown in Fig.~\ref{supple_fig:cg_transform},  and numbers on wires represent the indices of the corresponding systems.}
    \label{supple_fig:operation}
\end{figure}

 Using the Clebsch-Gordan transforms $V_\mathrm{CG}^{(n)}$, the unitary operators $V^{(1)}$ and $V^{(2)}$ are defined as shown in Fig.~\ref{supple_fig:operation}.  Then, we can show Theorem \ref{supple_thm:deterministic_exact_unitary_inversion} as follows.
\begin{proof}[Proof of Theorem \ref{supple_thm:deterministic_exact_unitary_inversion}]
The quantum circuit shown in Fig.~\ref{supple_fig:deterministic_exact_qubit_unitary_inversion} applies a unitary operation
\begin{align}
    f_{U_\mathrm{in}} \coloneqq V^{(2)}_{1\cdots 7} [\1^{\otimes 6}_{1 3\cdots 7}\otimes (U_\mathrm{in})_2] V^{(1)}_{1\cdots 7} [\1^{\otimes 6}_{1 3\cdots 7}\otimes (U_\mathrm{in})_2]
\end{align}
twice on the quantum state
\begin{align}
    \ket{\psi_\mathrm{in}} \coloneqq \ket{\phi_\mathrm{in}}_1 \otimes \ket{\psi^-}_{23} \otimes \ket{0}^{\otimes 4}_{4\cdots 7},
\end{align}
where the subscripts represent indices of the qubits on which the corresponding quantum operations act.  We need to prove that the output state of the quantum circuit is given by Eq.~(\ref{supple_eq:output_state}), i.e.,
\begin{align}
    f_{U_\mathrm{in}}^2 (\ket{\phi_\mathrm{in}}_1 \otimes \ket{\psi^-}_{23} \otimes \ket{0}^{\otimes 4}_{4\cdots 7}) = -(U_\mathrm{in} \otimes \1)\ket{\psi^-}_{12} \otimes U_\mathrm{in}^{-1}\ket{\phi_\mathrm{in}}_{3} \otimes \ket{0}^{\otimes 4}_{4\cdots 7}
\end{align}
holds for all $\ket{\phi_\mathrm{in}}\in \CC^2$.
As shown in the main manuscript, this relation can be written as
\begin{align}
    g_{U_\mathrm{in}}^2 \ket{v_\phi} = -\ket{w_\phi} \quad \forall \ket{\phi}\in \CC^2, U_\mathrm{in}\in\SU(2),\label{supple_eq:deterministic_exact_unitary_inversion_g}
\end{align}
 where $g_{U_\mathrm{in}}$, $\ket{v_\phi}$ and $\ket{w_\phi}$ are defined by 
\begin{align}
    g_{U_\mathrm{in}}&\coloneqq [(U_\mathrm{in})_1 \otimes \1^{\otimes 6}_{2\cdots 7}]^\dagger (f_{U_\mathrm{in}})_{1\cdots 7} [(U_\mathrm{in})_1 \otimes \1^{\otimes 6}_{2\cdots 7}],\label{supple_eq:def_of_g}\\
    \ket{v_\phi}&\coloneqq \ket{\phi}_{1}\otimes \ket{\psi^-}_{23} \otimes \ket{0}^{\otimes 4}_{4\cdots 7},\\
    \ket{w_\phi}&\coloneqq \ket{\psi^-}_{12} \otimes \ket{\phi}_3 \otimes \ket{0}^{\otimes 4}_{4\cdots 7},
\end{align}
respectively.
The action of $g_{U_\mathrm{in}}$ on  a 4-dimensional subspace $\mcH\subset (\CC^2)^{\otimes 7}$ defined by $\mcH \coloneqq \mathrm{span}\{\ket{v_\phi}, \ket{w_\phi}|\ket{\phi}\in\CC^2\}$ is given by the following Lemma:
\begin{Lem}
The action of $g_{U_\mathrm{in}}$ on the Hilbert space $\mcH$ is given by
    \begin{align}
    g_{U_\mathrm{in}} ( \ket{v_\phi}, \ket{w_\phi})
    &= ( \ket{v_\phi}, \ket{w_\phi})G \quad  \forall \ket{\phi}\in\CC^2\label{supple_eq:action_of_g},\\
    G &\coloneqq \left(\begin{matrix}
        -{1\over \sqrt{3}} & -{1\over \sqrt{3}}\\
        {1\over \sqrt{3}} & -{2\over \sqrt{3}}
    \end{matrix}\right).
\end{align}
\label{supple_lem:action_of_g}
\end{Lem}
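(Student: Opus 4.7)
The plan is to verify the transformation law (\ref{supple_eq:action_of_g}) by combining a representation-theoretic argument (to obtain the structure of the restriction $g_{U_\mathrm{in}}|_\mcH$) with a direct Clebsch--Gordan calculation (to determine the entries of $G$).

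First, I would recognize $\mcH$ as the $j=1/2$ total-angular-momentum sector of qubits $1,2,3$, tensored with $\ket{0}^{\otimes 4}$ on qubits $4$--$7$. Because $\ket{\psi^-}$ is the singlet, both $\ket{v_\phi}$ and $\ket{w_\phi}$ have total spin $1/2$ on qubits $1,2,3$. The $j=1/2$ sector of $(\CC^2)^{\otimes 3}$ is $2\times 2$-dimensional (spin times multiplicity of two), matching $\dim \mcH=4$. Using the Schur basis $\{\ket{j;m;p_3}\}$ together with the CG addition (\ref{supple_eq:cg_transform}), the two families $\{\ket{v_\phi}\}$ and $\{\ket{w_\phi}\}$ give two non-orthogonal spanning choices of the two-dimensional multiplicity space, corresponding to first coupling qubits $(2,3)$ or qubits $(1,2)$ to a singlet.

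Second, I would exploit the $\SU(2)$-equivariance of the Clebsch--Gordan building blocks of $V^{(1)}, V^{(2)}$ to argue that all $U_\mathrm{in}$-dependence in $g_{U_\mathrm{in}}|_\mcH$ cancels. Concretely, each $V_{\mathrm{CG}}^{(n)}$ intertwines $V^{\otimes n}$ with $V_\mu \otimes \1_{\mcS_\mu}$, so $(U_\mathrm{in})_2$ inserted between CG transforms acts as $(U_\mathrm{in})_\mu \otimes \1_{\mcS_\mu}$ on each irreducible block. Combined with the outer conjugation by $(U_\mathrm{in})_1$ in the definition (\ref{supple_eq:def_of_g}) and the rotational invariance $(U\otimes U)\ket{\psi^-}=\ket{\psi^-}$, the net effect on $\mcH$ should be $\1_{\mcU_{1/2}} \otimes M$ for a $U_\mathrm{in}$-independent $2\times 2$ matrix $M$ on the multiplicity space. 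This yields the claimed form $g_{U_\mathrm{in}}(\ket{v_\phi},\ket{w_\phi}) = (\ket{v_\phi},\ket{w_\phi})\,G$ with $G=M$, once one verifies that $\{\ket{v_\phi},\ket{w_\phi}\}$ parameterises the two multiplicity copies correctly.

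Third, since $M$ is $U_\mathrm{in}$-independent, I would compute it by setting $U_\mathrm{in}=\1$, reducing to $g_{\1}=V^{(2)}V^{(1)}$ acting on the initial state. I would push $\ket{v_\phi}$ and $\ket{w_\phi}$ through the explicit gate decompositions of Fig.~\ref{supple_fig:operation}, using the identities (\ref{supple_eq:cg_n=3_1})--(\ref{supple_eq:cg_n=3_2}) with $\cos\theta_{j=1,m'}=\sqrt{(m'+3/2)/3}$, so that the relevant coefficients are $\sqrt{1/3}$ and $\sqrt{2/3}$. Taking inner products with $\ket{v_\phi}$ and $\ket{w_\phi}$ then produces the four entries $\pm 1/\sqrt{3}$ and $-2/\sqrt{3}$ of $G$.

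The main obstacle will be the $U_\mathrm{in}$-independence argument in Step 2: while the Clebsch--Gordan covariance makes the cancellation natural in principle, one must carefully track the interplay of the three $U_\mathrm{in}$ insertions (two on qubit $2$ inside $f_{U_\mathrm{in}}$ and the conjugating pair on qubit $1$), and confirm that the ancilla register $4$--$7$ returns to $\ket{0}^{\otimes 4}$ on $\mcH$. If this symmetry argument proves delicate, the fallback is a purely direct calculation applying (\ref{supple_eq:cg_n=3_1})--(\ref{supple_eq:cg_n=3_2}) and (\ref{supple_eq:def_U_j}) for arbitrary $U_\mathrm{in}\in\SU(2)$, which involves more bookkeeping but no new ideas.
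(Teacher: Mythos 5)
Your Step 1 (identifying $\mcH$ with the $j'=1/2$ sector of qubits $1,2,3$ and relating $\{\ket{v_\phi},\ket{w_\phi}\}$ to the Schur basis) is exactly how the paper's proof begins [Eq.~(\ref{supple_eq:relation_j_and_W})], and your declared fallback---pushing a general element of $\mcH$ through the circuit for arbitrary $U_\mathrm{in}$ using Eqs.~(\ref{supple_eq:cg_n=3_1}), (\ref{supple_eq:cg_n=3_2}) and (\ref{supple_eq:def_U_j})---is precisely the paper's proof [Eqs.~(\ref{supple_eq:calculation_of_g_start})--(\ref{supple_eq:calculation_of_g_end})]. The genuine gap is in your main route, Step 2. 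The claim that ``$(U_\mathrm{in})_2$ inserted between CG transforms acts as $(U_\mathrm{in})_\mu\otimes\1_{\mcS_\mu}$ on each irreducible block'' is not correct as stated: $(U_\mathrm{in})_2$ is a one-qubit gate on a fixed physical wire, and after the CG re-encodings that wire carries only part of an $m$-register (and, after the swap inside $V^{(1)}$, a different register altogether), so equivariance of the $V_\mathrm{CG}$ blocks alone does not make the calls act blockwise as irreps. The cancellation that actually makes $g_{U_\mathrm{in}}|_\mcH$ independent of $U_\mathrm{in}$ is circuit-specific: the conjugating $(U_\mathrm{in})_1$ and the first call on wire 2 combine to $U_\mathrm{in}^{\otimes 2}$ on qubits $1,2$, acting as $U^{(j=1)}\oplus U^{(j=0)}$ on the input's blocks; the second call is recoupled by $V_\mathrm{CG}^{(3)}$ with the $j=1$ register in exactly the spin-$1/2$ combination present in the input, leaving a residual $U^{(j'=1/2)}\cong U_\mathrm{in}$ on one specific wire; and the swaps at the end of $V^{(2)}$ deliver that wire to the final conjugating $U_\mathrm{in}^{-1}$, which removes it. None of this follows from ``CG covariance'' in the abstract; it requires tracking which registers the calls and swaps touch.

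Moreover, even granting the cancellation, your Step 2 must also establish that $g_{U_\mathrm{in}}$ maps $\mcH$ into $\mcH$ at all---no leakage into the $j'=3/2$ sector and the ancillas $4$--$7$ returning to $\ket{0}^{\otimes 4}$---which you flag but do not supply; this invariance is part of the content of the Lemma and is again a consequence of the explicit gate layout, not of symmetry alone. Making these points rigorous amounts to the same step-by-step bookkeeping as the general-$U_\mathrm{in}$ computation, so the proposed shortcut ``prove $U_\mathrm{in}$-independence abstractly, then evaluate at $U_\mathrm{in}=\1$'' does not actually save work, and without that bookkeeping the argument is incomplete. In short: as written, the proof stands or falls on your fallback, which coincides with the paper's direct calculation; the numerical entries you quote for $G$ ($\cos\theta_{j=1,m'}=\sqrt{(m'+3/2)/3}$, giving $\sqrt{1/3}$ and $\sqrt{2/3}$) are consistent with that calculation.
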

Using this Lemma, we can show Eq.~(\ref{supple_eq:deterministic_exact_unitary_inversion_g}), which is equivalent to Theorem \ref{supple_thm:deterministic_exact_unitary_inversion}, as follows.

\begin{align}
    g_{U_\mathrm{in}}^2  \ket{v_\phi}
    &= g_{U_\mathrm{in}}^2 (\ket{v_\phi}, \ket{w_\phi}) (1,0)^T\\
    &= (\ket{v_\phi}, \ket{w_\phi}) G^2 (1,0)^T\\
    &=  -\ket{w_\phi}.
\end{align}
\end{proof}

We show Lemma \ref{supple_lem:action_of_g} by a direct calculation as follows.

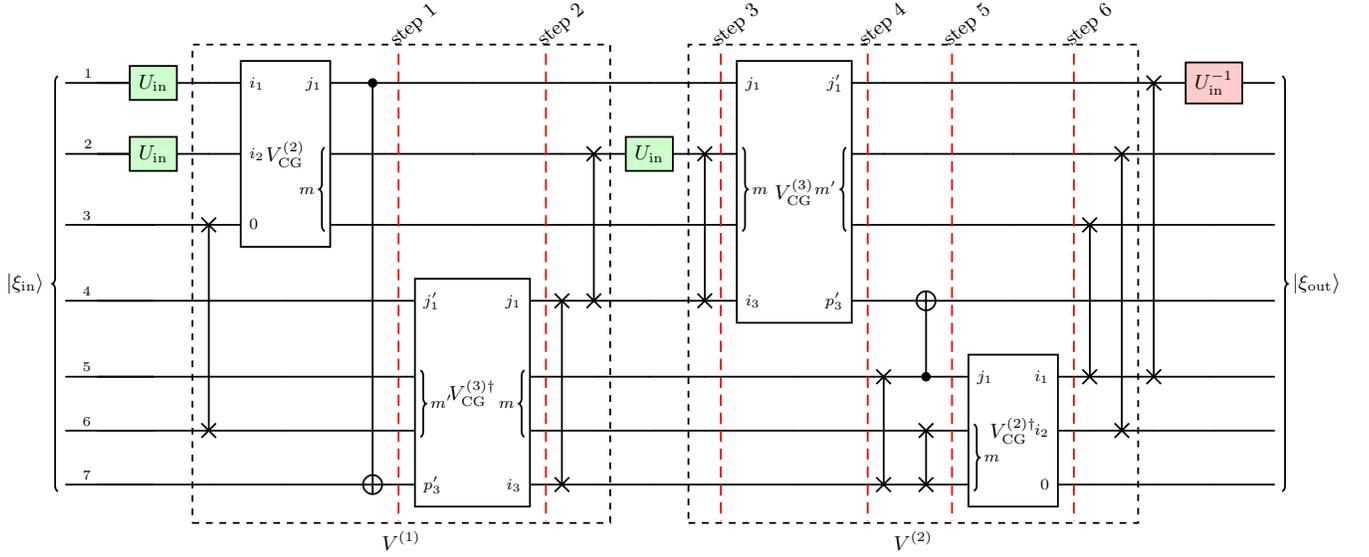
\begin{figure}
    \centering
    \begin{adjustbox}{width=\linewidth}
    \begin{quantikz}
        \lstick[wires=7]{$\ket{\xi_\mathrm{in}}$} & \wire[r][1]["1"{above,pos=-0.3}]{a} & \gate[style={fill=green!20}]{U_\mathrm{in}} & \gategroup[wires=7, steps=6, style={dashed}, label style = {label position = below, anchor = north, yshift = -0.2cm}]{$V^{(1)}$}& \gate[wires=3][1.4cm]{V_{\mathrm{CG}}^{(2)}}\gateinput{$i_1$}\gateoutput{$j_1$} & \ctrl{6}\slice[label style ={inner sep=1pt,anchor=south west,rotate=40}]{step 1} & \slice[label style ={inner sep=1pt,anchor=south west,rotate=40}]{step 2} & & &  & \gategroup[wires=7, steps=7, style={dashed}, label style = {label position = below, anchor = north, yshift = -0.2cm}]{$V^{(2)}$} \slice[label style ={inner sep=1pt,anchor=south west,rotate=40}]{step 3} &\gate[wires=4][1.8cm]{V_{\mathrm{CG}}^{(3)}}\gateinput{$j_1$}\gateoutput{$j'_1$} \slice[label style ={inner sep=1pt,anchor=south west,rotate=40}]{step 4}  &&\slice[label style ={inner sep=1pt,anchor=south west,rotate=40}]{step 5} &\slice[label style ={inner sep=1pt,anchor=south west,rotate=40}]{step 6} & & &\swap{4}& \gate[style={fill=red!20}]{U_\mathrm{in}^{-1}} & \rstick[wires=7]{$\ket{\xi_\mathrm{out}}$}\\
        & \wire[r][1]["2"{above,pos=-0.3}]{a} & \gate[style={fill=green!20}]{U_\mathrm{in}}& & \gateinput{$i_2$}\gateoutput[2]{$m$} &  &  & &\swap{2} & \gate[style={fill=green!20}]{U_\mathrm{in}} &\swap{2} &\gateinput[2]{$m$}\gateoutput[2]{$m'$} && & & &\swap{4}&  & &
        \\
         & \wire[r][1]["3"{above,pos=-0.2}]{a} & & \swap{3} &\gateinput{$0$} &  &  & & &  &  & &&& &\swap{2}&  & & & \\
         & \wire[r][1]["4"{above,pos=-0.2}]{a} &&  &  &   &  \gate[wires=4][1.8cm]{V_{\mathrm{CG}}^{(3)\dagger}}\gateinput{$j'_1$}\gateoutput{$j_1$} & \swap{3} &\targX{} &  &\targX{} &\gateinput{$i_3$}\gateoutput{$p'_3$}&& \targ{} &  & & & & &\\
         &\wire[r][1]["5"{above,pos=-0.2}]{a} &&  &  &  & \gateinput[2]{$m'$}\gateoutput[2]{$m$} &  &  & && &\swap{2} &\ctrl{-1} &\gate[wires=3][1.4cm]{V_{\mathrm{CG}}^{(2)\dagger}}\gateinput{$j_1$}\gateoutput{$i_1$} & \targX{} &  & \targX{} & &\\
         &\wire[r][1]["6"{above,pos=-0.2}]{a} && \targX{} &  &  &  &  & &  & & && \swap{1} &\gateinput[2]{$m$}\gateoutput{$i_2$} & &\targX{} &  &  &\\
         &\wire[r][1]["7"{above,pos=-0.2}]{a} &&  &   & \targ{} & \gateinput{$p'_3$}\gateoutput{$i_3$} & \targX{} &  &  & &&\targX{} & \targX{} &\gateoutput{0} & & &  &  &
    \end{quantikz}
    \end{adjustbox}
    \caption{A quantum circuit representing a unitary operator $g_{U_\mathrm{in}}$ defined in Eq.~(\ref{supple_eq:def_of_g}) using unitary operators $V^{(1)}$ and $V^{(2)}$ shown in Fig.~\ref{supple_fig:operation}, $U_\mathrm{in}$ and its inverse $U_\mathrm{in}^{-1}$.  Numbers on wires represent the indices of the corresponding systems. The output state $\ket{\xi_\mathrm{out}}$ for the input quantum state $\ket{\xi_\mathrm{in}}$ defined in Eq.~(\ref{supple_eq:def_of_xi}) is calculated in Eq.~(\ref{supple_eq:calculation_of_g_end}). Quantum states at each step correspond to the quantum states in the middle of the calculations [see Eqs.~(\ref{supple_eq:calculation_of_g_start})-(\ref{supple_eq:calculation_of_g_end})].}
    \label{supple_fig:calculation_of_g}
\end{figure}

\begin{proof}[Proof of Lemma \ref{supple_lem:action_of_g}]
We first show that the Hilbert space $\mcH =  \mathrm{span}\{\ket{v_\phi}, \ket{w_\phi}|\ket{\phi}\in\CC^2\}$ can be written using the Schur basis of $(\CC^2)^{\otimes 3}$ as
\begin{align}
    \mcH = \mathrm{span} \left\{ \ket{j' = 1/2; m'; p'_3}_{123} \otimes \ket{0}^{\otimes 4}_{4\cdots 7} \left| m'\in \left\{1/2, -1/2 \right\}, p'_3\in \{0,1\}\right.\right\}.
\end{align}
In particular, the following relation holds:
\begin{align}
\left(\begin{matrix}
    \ket{j' = {1\over 2}; m'; p'_3 = 0}_{123} \otimes \ket{0}^{\otimes 4}_{4\cdots 7} \\
    \ket{j' = {1\over 2}; m'; p'_3 = 1}_{123} \otimes \ket{0}^{\otimes 4}_{4\cdots 7}
\end{matrix}
\right)
=
\left(\begin{matrix}
    - {2\over \sqrt{3}} & -{1\over \sqrt{3}}\\
    0 & -1
\end{matrix}
\right)
\left(
\begin{matrix}
     \ket{v_{m'+{1\over 2}}}\\
     \ket{w_{m'+{1\over 2}}}
\end{matrix}
\right),\label{supple_eq:relation_j_and_W}
\end{align}
or conversely,
\begin{align}
\left(
\begin{matrix}
     \ket{v_{m'+{1\over 2}}}\\
     \ket{w_{m'+{1\over 2}}}
\end{matrix}
\right)
=
\left(\begin{matrix}
    - {\sqrt{3}\over 2} & {1\over 2}\\
    0 & -1
\end{matrix}
\right)
\left(\begin{matrix}
    \ket{j' = {1\over 2}; m'; p'_3 = 0}_{123} \otimes \ket{0}^{\otimes 4}_{4\cdots 7} \\
    \ket{j' = {1\over 2}; m'; p'_3 = 1}_{123} \otimes \ket{0}^{\otimes 4}_{4\cdots 7}
\end{matrix}
\right),
\end{align}
for $m' \in \{{1\over 2}, -{1\over 2}\}$, where we denote $\ket{v_\phi}$ and $\ket{w_\phi}$ for $\ket{\phi}=\ket{m'+{1\over 2}}$ by $\ket{v_{m'+{1\over 2}}}$ and $\ket{w_{m'+{1\over 2}}}$, respectively.  This relation can be checked from Eqs.~(\ref{supple_eq:addition_n=3_1}) and (\ref{supple_eq:addition_n=3_2}) as follows.
\begin{align}
    &\ket{j' = 1/2; m'; p'_3 = 0}_{123} \otimes \ket{0}^{\otimes 4}_{4\cdots 7}\nonumber\\
    &= \left(\cos \theta_{j = 1, m'} \ket{j=1; m=m'+1/2}_{12} \otimes \ket{0}_{3} - \sin \theta_{j=1, m'} \ket{j=1; m=m'-1/2}_{12}\otimes \ket{1}_{3}\right) \otimes \ket{0}^{\otimes 4}_{4\cdots 7}\\
    &=
    \begin{cases}
        \left(\sqrt{2\over 3} \ket{j=1; m=1}_{12} \otimes \ket{0}_{3} - \sqrt{1\over 3} \ket{j=1; m=0}_{12}\otimes \ket{1}_{3}\right)\otimes \ket{0}^{\otimes 4}_{4\cdots 7} & (m={1\over 2})\\
        \left(\sqrt{1\over 3} \ket{j=1; m=0}_{12} \otimes \ket{0}_{3} - \sqrt{2\over 3} \ket{j=1; m=-1}_{12}\otimes \ket{1}_{3}\right)\otimes \ket{0}^{\otimes 4}_{4\cdots 7} & (m=-{1\over 2})
    \end{cases}\\
    &=
    \begin{cases}
        \left(\sqrt{2\over 3} \ket{11}_{12} \otimes \ket{0}_{3} - \sqrt{1\over 3} {\ket{01}_{12}+\ket{10}_{12} \over \sqrt{2}} \otimes \ket{1}_{3}\right)\otimes \ket{0}^{\otimes 4}_{4\cdots 7} & (m={1\over 2})\\
        \left(\sqrt{1\over 3} {\ket{01}_{12}+\ket{10}_{12} \over \sqrt{2}} \otimes \ket{0}_{3} - \sqrt{2\over 3} \ket{00}_{12}\otimes \ket{1}_{3}\right)\otimes \ket{0}^{\otimes 4}_{4\cdots 7} & (m=-{1\over 2})
    \end{cases}\\
    &=
    \begin{cases}
        \left(-{2\over \sqrt{3}} \ket{1}_{1} \otimes \ket{\psi^-}_{23} - {1\over \sqrt{3}} \ket{\psi^-}_{12} \otimes \ket{1}_{3}\right)\otimes \ket{0}^{\otimes 4}_{4\cdots 7} & (m={1\over 2})\\
        \left(-{2\over \sqrt{3}} \ket{0}_{1} \otimes \ket{\psi^-}_{23} - {1\over \sqrt{3}} \ket{\psi^-}_{12} \otimes \ket{0}_{3}\right)\otimes \ket{0}^{\otimes 4}_{4\cdots 7} & (m=-{1\over 2})
    \end{cases}\\
    &=
    -{2\over \sqrt{3}}  \ket{v_{m'+{1\over 2}}} - {1\over \sqrt{3}}  \ket{w_{m'+{1\over 2}}},\\
    &\ket{j' = 1/2; m'; p'_3 = 1}_{123} \otimes \ket{0}^{\otimes 4}_{4\cdots 7}\nonumber\\
    &=\left(\ket{j=0; m=0}_{12} \otimes \ket{m'+1/2}_{3}\right)\otimes \ket{0}^{\otimes 4}_{4\cdots 7}\\
    &= {\ket{10}_{12}-\ket{01}_{12} \over \sqrt{2}} \otimes \ket{m'+1/2}_{3} \otimes \ket{0}^{\otimes 4}_{4\cdots 7}\\
    &= -  \ket{w_{m'+{1\over 2}}}.
\end{align}

We calculate the action of $g_{U_\mathrm{in}}$ on the quantum state
\begin{align}
    \ket{\xi_\mathrm{in}} &\coloneqq \sum_{m'=\pm 1/2} \alpha_{m'}  \ket{v_{m'+{1\over 2}}} + \beta_{m'}  \ket{w_{m'+{1\over 2}}}\label{supple_eq:def_of_xi}\\
    &= \sum_{m'=\pm 1/2}\Big[-{\sqrt{3} \over 2} \alpha_{m'} \ket{j' = 1/2; m'; p'_3 = 0}_{123} + \Big({\alpha_{m'} \over 2}-\beta_{m'} \Big)\ket{j' = 1/2; m'; p'_3 = 1}_{123}\Big] \otimes \ket{0}^{\otimes 4}_{4\cdots 7}
\end{align}
for $\alpha_{m'}, \beta_{m'}\in \CC$ and $m'\in \{1/2, -1/2\}$ to show Eq.~(\ref{supple_eq:action_of_g}) as follows (see Fig.~\ref{supple_fig:calculation_of_g}).
The input state $\ket{\xi_\mathrm{in}}$ can be written as
\begin{align}
    \ket{\xi_\mathrm{in}} = \sum_{m'=\pm 1/2}& \Big[-{\sqrt{3} \over 2} \alpha_{m'} \left(\cos \theta_{j = 1, m'} \ket{j=1; m=m'+1/2}_{12} \otimes \ket{0}_{3} - \sin \theta_{j=1, m'} \ket{j=1; m=m'-1/2}_{12}\otimes \ket{1}_{3}\right) \nonumber\\
    &+ \Big({\alpha_{m'} \over 2}-\beta_{m'} \Big)\left(\ket{j=0; m=0}_{12} \otimes \ket{m'+1/2}_{3}\right)\Big] \otimes \ket{0}^{\otimes 4}_{4\cdots 7}.
\end{align}
From Eq.~(\ref{supple_eq:def_U_j}), the state at step 1 in Fig.~\ref{supple_fig:calculation_of_g} is given by
\begin{align}
    \ket{\xi_1} = \sum_{m'=\pm 1/2}& \Big[-{\sqrt{3} \over 2} \alpha_{m'} \cos \theta_{j = 1, m'} \ket{1}_{1}\otimes U_\mathrm{in}^{(j=1)}\ket{m^+_1 m^+_2}_{23} \otimes \ket{0}_{4} \otimes \ket{00}_{56} \otimes \ket{1}_7 \nonumber\\
    &+{\sqrt{3} \over 2}\alpha_{m'}\sin \theta_{j=1, m'} \ket{1}_{1}\otimes U_\mathrm{in}^{(j=1)}\ket{m^-_1 m^-_2}_{23}\otimes  \ket{0}_4\otimes \ket{01}_{56}\otimes \ket{1}_7 \nonumber\\
    &+ \Big({\alpha_{m'} \over 2}-\beta_{m'} \Big)\ket{0}_1\otimes U_\mathrm{in}^{(j=0)}\ket{00}_{23}\otimes \ket{0}_{4}\otimes \ket{m'_1 m'_2}_{56}\otimes\ket{0}_7\Big],\label{supple_eq:calculation_of_g_start}
\end{align}
where $m_1^{\pm} m_2^{\pm}$ and $m'_1 m'_2$ are binary representations of $m'\pm 1/2+1$ and $m'+1/2$.  From Eqs.~(\ref{supple_eq:cg_n=3_1}) and (\ref{supple_eq:cg_n=3_2}), the state at step 2 is given by
\begin{align}
    \ket{\xi_2}
    =
    \sum_{m'=\pm 1/2}& \Big[-{\sqrt{3} \over 2} \alpha_{m'} \cos \theta_{j = 1, m'} \ket{1}_{1}\otimes U_\mathrm{in}^{(j=1)}\ket{m^+_1 m^+_2}_{23} \otimes \ket{0}_4\otimes \ket{00}_{56}\otimes \ket{0}_7 \nonumber\\
    &+{\sqrt{3} \over 2}\alpha_{m'}\sin \theta_{j=1, m'} \ket{1}_{1}\otimes U_\mathrm{in}^{(j=1)}\ket{m^-_1 m^-_2}_{23}\otimes  \ket{0}_4\otimes \ket{00}_{56}\otimes \ket{1}_7 \nonumber\\
    &+ \Big({\alpha_{m'} \over 2}-\beta_{m'} \Big) \cos \theta_{j=1, m'}\ket{0}_1\otimes U_\mathrm{in}^{(j=0)}\ket{00}_{23}\otimes \ket{1}_4 \otimes \ket{m^+_1 m^+_2}_{56} \otimes \ket{0}_7 \nonumber\\
    &- \Big({\alpha_{m'} \over 2}-\beta_{m'} \Big) \sin\theta_{j=1, m'} \ket{0}_1\otimes U_\mathrm{in}^{(j=0)}\ket{00}_{23}\otimes \ket{1}_4\otimes \ket{m^-_1 m^-_2}_{56} \otimes \ket{1}_7\Big].
\end{align}
The state at step 3 is given by
\begin{align}
    \ket{\xi_3}
    =
    \sum_{m'=\pm 1/2}& \Big[-{\sqrt{3} \over 2} \alpha_{m'} \Big(\cos \theta_{j = 1, m'} \ket{1}_{1}\otimes U_\mathrm{in}^{(j=1)}\ket{m^+_1 m^+_2}_{23} \otimes U_\mathrm{in}\ket{0}_4 \nonumber\\
    &- \sin \theta_{j=1, m'} \ket{1}_{1}\otimes U_\mathrm{in}^{(j=1)}\ket{m^-_1 m^-_2}_{23}\otimes  U_\mathrm{in}\ket{1}_4\Big)\otimes \ket{00}_{56}\otimes \ket{0}_7 \nonumber\\
    &+ \Big({\alpha_{m'} \over 2}-\beta_{m'} \Big) \cos \theta_{j=1, m'}\ket{0}_1\otimes U_\mathrm{in}^{(j=0)}\ket{00}_{23}\otimes U_\mathrm{in}\ket{0}_4 \otimes \ket{m^+_1 m^+_2}_{56} \otimes \ket{1}_7 \nonumber\\
    &- \Big({\alpha_{m'} \over 2}-\beta_{m'} \Big) \sin\theta_{j=1, m'} \ket{0}_1\otimes U_\mathrm{in}^{(j=0)}\ket{00}_{23}\otimes U_\mathrm{in}\ket{1}_4\otimes \ket{m^-_1 m^-_2}_{56} \otimes \ket{1}_7\Big].
\end{align}
From Eqs.~(\ref{supple_eq:cg_n=3_1}) and (\ref{supple_eq:cg_n=3_2}), the state at step 4 is given by
\begin{align}
    \ket{\xi_4}
    =
    \sum_{m'=\pm 1/2}& \Big[-{\sqrt{3} \over 2} \alpha_{m'} \ket{0}_1\otimes U_\mathrm{in}^{(j'=1/2)}\ket{m'_1 m'_2}_{23}\otimes \ket{0}_4\otimes \ket{00}_{56}\otimes \ket{0}_7 \nonumber\\
    &+ \Big({\alpha_{m'} \over 2}-\beta_{m'} \Big) \cos \theta_{j=1, m'}\ket{0}_1\otimes U_\mathrm{in}^{(j'=1/2)}\ket{00}_{23}\otimes \ket{1}_4 \otimes \ket{m^+_1 m^+_2}_{56} \otimes \ket{1}_7 \nonumber\\
    &- \Big({\alpha_{m'} \over 2}-\beta_{m'} \Big) \sin\theta_{j=1, m'} \ket{0}_1\otimes U_\mathrm{in}^{(j'=1/2)}\ket{01}_{23}\otimes \ket{1}_4\otimes \ket{m^-_1 m^-_2}_{56} \otimes \ket{1}_7\Big].
\end{align}
The action of $U^{(j'=1/2)}_\mathrm{in}$ is unitarily equivalent to $U_\mathrm{in}$.  Since the binary representation $m'_1 m'_2$ of $m'+1/2$ for $m'=\pm 1/2$ is given by $m'_1=0$ and $m'_2=m'+1/2$, its action is given by
\begin{align}
    U^{(j'=1/2)}_\mathrm{in}\ket{m'_1 m'_2} = \ket{0}\otimes U_\mathrm{in} \ket{m'+1/2}.
\end{align}
Therefore, the state at step 4 can be further calculated as
\begin{align}
    \ket{\xi_4}
    =
    \sum_{m'=\pm 1/2}& \Big[-{\sqrt{3} \over 2} \alpha_{m'} U_\mathrm{in}\ket{m'+1/2}_3 \otimes \ket{0}_4\otimes \ket{00}_{56}\otimes \ket{0}_7 \nonumber\\
    &+ \Big({\alpha_{m'} \over 2}-\beta_{m'} \Big) \cos \theta_{j=1, m'} U_\mathrm{in}\ket{0}_3\otimes \ket{1}_4 \otimes \ket{m^+_1 m^+_2}_{56} \otimes \ket{1}_7 \nonumber\\
    &- \Big({\alpha_{m'} \over 2}-\beta_{m'} \Big) \sin\theta_{j=1, m'} U_\mathrm{in}\ket{1}_3\otimes \ket{1}_4\otimes \ket{m^-_1 m^-_2}_{56} \otimes \ket{1}_7\Big] \otimes \ket{0}^{\otimes 2}_{12}.
\end{align}
The state at step 5 is given by
\begin{align}
    \ket{\xi_5}
    =
    \sum_{m'=\pm 1/2}& \Big[-{\sqrt{3} \over 2} \alpha_{m'} U_\mathrm{in}\ket{m'+1/2}_3\otimes  \ket{0}_5 \otimes \ket{00}_{67}\nonumber\\
    &+ \Big({\alpha_{m'} \over 2}-\beta_{m'} \Big) \cos \theta_{j=1, m'}U_\mathrm{in}\ket{0}_3 \otimes \ket{1}_5 \otimes  \ket{m^+_1 m^+_2}_{67} \nonumber\\
    &- \Big({\alpha_{m'} \over 2}-\beta_{m'} \Big) \sin\theta_{j=1, m'} U_\mathrm{in}\ket{1}_3 \otimes \ket{1}_5 \otimes \ket{m^-_1 m^-_2}_{67}\Big] \otimes \ket{0}^{\otimes 3}_{124}.
\end{align}
The state at step 6 is given by
\begin{align}
    \ket{\xi_6}
    =
    \sum_{m'=\pm 1/2}& \Big[-{\sqrt{3} \over 2} \alpha_{m'} U_\mathrm{in}\ket{m'+1/2}_3\otimes  \ket{j=0; m=0}_{56}\nonumber\\
    &+ \Big({\alpha_{m'} \over 2}-\beta_{m'} \Big) \cos \theta_{j=1, m'}U_\mathrm{in}\ket{0}_3 \otimes \ket{j=1; m=m'+1/2}_{56}\nonumber\\
    &- \Big({\alpha_{m'} \over 2}-\beta_{m'} \Big) \sin\theta_{j=1, m'} U_\mathrm{in}\ket{1}_3 \otimes \ket{j=1; m=m'-1/2}_{56}\Big] \otimes \ket{0}^{\otimes 4}_{1247}.
\end{align}
From Eqs.~(\ref{supple_eq:addition_n=3_1}) and (\ref{supple_eq:addition_n=3_2}), the output state is given by
\begin{align}
    \ket{\xi_\mathrm{out}}
    =
    \sum_{m'=\pm 1/2} P_{(1 3)}&\Big[-{\sqrt{3} \over 2} \alpha_{m'} \ket{j=0; m=0}_{12} \otimes \ket{m'+1/2}_3\nonumber\\
    &+ \Big({\alpha_{m'} \over 2}-\beta_{m'} \Big) \Big(\cos \theta_{j=1, m'} \ket{j=1; m=m'+1/2}_{12} \otimes \ket{0}_3 \nonumber\\
    &- \sin\theta_{j=1, m'} \ket{j=1; m=m'-1/2}_{12}\otimes \ket{1}_3\Big) \Big] \otimes \ket{0}^{\otimes 4}_{4\cdots 7}\\
    =
    \sum_{m'=\pm 1/2} P_{(1 3)}&\Big[-{\sqrt{3} \over 2} \alpha_{m'} \ket{j'=1/2; m'; p'_3=1}_{123}+ \Big({\alpha_{m'} \over 2}-\beta_{m'} \Big) \ket{j'=1/2; m'; p'_3=0}_{123} \Big] \otimes \ket{0}^{\otimes 4}_{4\cdots 7},
\end{align}
where $P_{(1 3)}$ is the permutation operator between the first and third qubits.
From Eq.~(\ref{supple_eq:relation_j_and_W}) and the relation between $\ket{v_\phi}$ and $w_\phi$ given by $P_{(13)} \ket{v_\phi} = - \ket{w_\phi}$ and $P_{(13)} \ket{w_\phi} = - \ket{v_\phi}$, the output state is further calculated as
\begin{align}
    \ket{\xi_\mathrm{out}}
    &=
    \sum_{m'=\pm 1/2} P_{(1 3)}\Big[{\sqrt{3} \over 2} \alpha_{m'}  \ket{w_{m'+{1\over 2}}}+ \Big({\alpha_{m'} \over 2}-\beta_{m'} \Big) \Big(-{2\over \sqrt{3}}  \ket{v_{m'+{1\over 2}}} -{1\over \sqrt{3}}  \ket{w_{m'+{1\over 2}}}\Big) \Big]\\
    &=
    \sum_{m'=\pm 1/2} \Big[-{\sqrt{3} \over 2} \alpha_{m'}  \ket{v_{m'+{1\over 2}}}+ \Big({\alpha_{m'} \over 2}-\beta_{m'} \Big) \Big({2\over \sqrt{3}}  \ket{w_{m'+{1\over 2}}} +{1\over \sqrt{3}}  \ket{v_{m'+{1\over 2}}}\Big) \Big]\\
    &=
    \sum_{m'=\pm 1/2} \Big[\Big(-{1 \over \sqrt{3}} \alpha_{m'} - {1\over \sqrt{3}}\beta_{m'}\Big)  \ket{v_{m'+{1\over 2}}}+\Big({1\over \sqrt{3}} \alpha_{m'} - {2\over \sqrt{3}}\beta_{m'}\Big)  \ket{w_{m'+{1\over 2}}}\Big],\label{supple_eq:calculation_of_g_end}
\end{align}
which leads to Eq.~(\ref{supple_eq:action_of_g}).
\end{proof}

\section{Review of semidefinite programming to obtain the optimal deterministic transformations of unitary operations}
In this section, we review numerical calculations to obtain the optimal deterministic transformations of unitary operations presented in Refs.~\cite{quintino2022deterministic}.  We consider the task to transform $n$ calls of an input $d$-dimensional unitary operation $U_\mathrm{in}\in\SU(d)$ to an output unitary operation $f(U_\mathrm{in})$ for a certain function $f:\SU(d) \to \SU(d)$. 

We consider a deterministic implementation of this task using a quantum comb.  A quantum comb is defined as a linear transformation $\mcC$ of input quantum operations given as completely positive and trace preserving (CPTP) maps $\Phi_\mathrm{in}^{(i)}: \mcL(\mcI_i)\to\mcL(\mcO_i)$ for $i\in\{1, \cdots, n\}$ to an output quantum operation $\Phi_\mathrm{out}: \mcL(\mcP)\to \mcL(\mcF)$\footnote{The input and output spaces $\mcP$ and $\mcF$ refer to the ``past'' and the ``future,'' respectively.}, where $\mcL(\mcH)$ denotes the space of linear operators on $\mcH$.  The output operation is given by
\begin{align}
    \Phi_\mathrm{out} = \Lambda^{(n+1)}\circ [\Phi^{(n)}_\mathrm{in} \otimes \1_{\mcA_n}]\circ \cdots \circ [\Phi^{(1)}_\mathrm{in} \otimes \1_{\mcA_1}]\circ \Lambda^{(1)} \eqqcolon \mcC \Big(\bigotimes_{i=1}^{n} \Phi_\mathrm{in}^{(i)}\Big)
    \label{supple_eq:quantum_comb}
\end{align}
using auxiliary Hilbert spaces $\mcA_1, \cdots, \mcA_n$, and CPTP maps $\Lambda^{(i)}: \mcL(\mcO_{i-1} \otimes \mcA_{i-1}) \to \mcL(\mcI_{i}\otimes \mcA_{i})$ for $i\in\{1, \cdots, n+1\}$, where $\1$ is the identity operation defined by $\1(\rho) = \rho$ and we denote $\mcO_0 = \mcP$ and $\mcI_{n+1} = \mcF$.
To evaluate how exactly a quantum comb $\mcC$ implements the function $f$, we introduce the Choi matrix of a CPTP map and a quantum comb and the channel fidelity \cite{raginsky2001fidelity} of two CPTP maps.  We define the Choi matrix of a CPTP map $\Lambda: \mcL(\mcH_1) \to \mcL(\mcH_2)$ by
\begin{align}
    J_{\Lambda}\coloneqq \sum_{ij}\ketbra{i}{j}_{1} \otimes \Lambda(\ketbra{i}{j})_{2} \in \mcL(\mcH_1 \otimes \mcH_2)
\end{align}
using the computational basis $\{\ket{i}\}$ of $\mcH_1$.
The sequential composition $\widetilde{\Psi} \circ \Phi$ of quantum operations $\Phi: \mcL(\mcH_1) \to \mcL(\mcH_2)$ and $\widetilde{\Psi}: \mcL(\mcH_2)\to \mcL(\mcH_3)$ is represented by the link product $\star$ in terms of the Choi matrix as $J_{\widetilde{\Psi}\circ \Phi} = (J_{\Psi})_{12} \star (J_{\Phi})_{23}$, where the link product $A\star B$ of two matrices $A\in \mcL(\mcH_1\otimes \mcH_2)$ and $B\in\mcL(\mcH_2\otimes \mcH_3)$ is defined by
\begin{align}
    A\star B\coloneqq \Tr_\mcY [(A^{T_{2}}\otimes \1_{\mcH_3})(\1_{\mcH_1}\otimes B)],\label{supple_eq:link_product}
\end{align}
using the partial transpose of $A$ in the system $\mcH_2$ defined as 
\begin{align}
    A^{T_{2}} \coloneqq \sum_{i,j}\Tr_{\mcY}[A (\1_{1}\otimes \ketbra{i}{j}_{2})] \otimes \ketbra{i}{j}_{2}
\end{align}
using the computational basis $\{\ket{i}\}$ of $\mcH_2$.  Using Choi matrices and link products, Eq.~(\ref{supple_eq:quantum_comb}) is written as
\begin{align}
    J_{\Phi_\mathrm{out}} = C \star [J_{\Phi_\mathrm{in}^{(1)}} \otimes \cdots \otimes J_{\Phi_\mathrm{in}^{(n)}}],
\end{align}
where $C$ is defined as $C\coloneqq J_{\Lambda^{(1)}} \star \cdots \star J_{\Lambda^{(n+1)}} \in \mcL(\mcP \otimes \mcI^n \otimes \mcO^n \otimes \mcF)$ and $\mcI^n$ and $\mcO^n$ are defined as $\mcI^n\coloneqq \bigotimes_{i=1}^{n} \mcI_i$ and $\mcO^n\coloneqq \bigotimes_{i=1}^{n} \mcO_i$. The matrix $C$ is called the Choi matrix of the quantum comb $\mcC$.
Quantum combs are characterized by their Choi matrices as shown in the following Lemma.
\begin{Lem}
\label{supple_lem:quantum_comb}
{\rm \cite{chiribella2008quantum}}
Suppose a matrix $C\in\mcL(\mcP\otimes \mcI^n\otimes \mcO^n\otimes \mcF)$ satisfies
\begin{align}
    C&\geq 0,\label{supple_eq:comb_condition_1}\\
    \Tr_{\mcI_i} C_i &= C_{i-1} \otimes \1_{\mcO_{i-1}},\label{supple_eq:comb_condition_2}\\
    C_0 &= 1,\label{supple_eq:comb_condition_3}
\end{align}
for $i\in \{1, \cdots, n+1\}$, where $\1_{\mcH}$ is the identity operator on $\mcH$, $C_{n+1} \coloneqq C$, $C_{i-1}\coloneqq \Tr_{\mcI_i\mcO_{i-1}}C_i/d$, and $\mcI_{n+1}$, $\mcO_0$ are defined by $\mcI_{n+1}\coloneqq \mcF$, $\mcO_0\coloneqq \mcP$. Then, there exists a sequence of quantum operations $\Lambda^{(i)}$ such that $C$ is the Choi matrix of a quantum comb $\mcC$ defined by Eq.~(\ref{supple_eq:quantum_comb}).
\end{Lem}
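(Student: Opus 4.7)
The natural approach is induction on the number of open slots $n$. For the base case $n=0$, conditions (\ref{supple_eq:comb_condition_1})--(\ref{supple_eq:comb_condition_3}) reduce to $C\ge 0$ and $\Tr_{\mcF}C=\1_{\mcP}$, which is exactly the Choi characterization of a CPTP map $\Lambda^{(1)}\colon\mcL(\mcP)\to\mcL(\mcF)$---itself the trivial zero-slot quantum comb.

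For the inductive step, assume the result holds for $n-1$ slots and suppose $C$ satisfies all three conditions for $n$ slots. My plan is to peel off the last operation $\Lambda^{(n+1)}$ using condition (\ref{supple_eq:comb_condition_2}) at $i=n+1$, namely $\Tr_{\mcF}C=C_n\otimes\1_{\mcO_n}$. A direct check shows that $C_n$, viewed with $\mcO_n$ playing the role of the ``future,'' inherits the analogous conditions for $n-1$ slots, so the induction hypothesis yields $C_n=J_{\Lambda^{(1)}}\star\cdots\star J_{\Lambda^{(n)}}$ for some CPTP maps. It then remains to construct a CPTP map $\Lambda^{(n+1)}\colon\mcL(\mcO_n\otimes\mcA_n)\to\mcL(\mcF)$, for a suitable auxiliary space $\mcA_n$, whose Choi matrix $J^{(n+1)}$ reproduces $C=C_n\star J^{(n+1)}$.

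To produce $\Lambda^{(n+1)}$, I would use a Stinespring-type dilation of $C$ itself. Writing $C=\sum_k\ket{M_k}\!\bra{M_k}$ and reshaping each $\ket{M_k}$ as an operator from $\mcO_n$ into $\mcF$ tensored with an ancilla indexing $k$ gives the natural candidate $\Lambda^{(n+1)}(\cdot)=\Tr_{\mcA_n}[V(\cdot)V^{\dagger}]$ with $V=\sum_k\ket{M_k}\otimes\ket{k}$. Trace preservation follows from the marginal condition $\Tr_{\mcF}C=C_n\otimes\1_{\mcO_n}$, which, once contracted against $C_n$ via the link product, is precisely what forces $V^{\dagger}V=\1$ on the relevant subspace. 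Iterating and combining with the induction hypothesis then reconstructs the circuit (\ref{supple_eq:quantum_comb}) with Choi matrix equal to $C$.

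The main obstacle I anticipate is the subsystem bookkeeping: one must ensure that the ``quotient'' of $C$ by $C_n$ really defines a CPTP map rather than merely a completely positive one, and that the factor of $d$ implicit in the recursion $C_{i-1}=\Tr_{\mcI_i\mcO_{i-1}}C_i/d$ threads through the Stinespring step consistently. Once that normalization is pinned down, the remainder reduces to a routine verification that the link product of the extracted $\Lambda^{(i)}$'s reproduces $C$.
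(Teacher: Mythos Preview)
The paper does not prove this lemma at all: it is stated with a citation to Chiribella et al.\ \cite{chiribella2008quantum} and then used as a black box in formulating the SDP. So there is no ``paper's own proof'' to compare against, and your inductive peeling-off argument is essentially the standard proof from that original reference.

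One remark on your sketch: the Stinespring step is a bit loosely described. Writing $C=\sum_k\ket{M_k}\!\bra{M_k}$ and reshaping $\ket{M_k}$ as operators from $\mcO_n$ alone does not directly give you $\Lambda^{(n+1)}$, because the auxiliary space $\mcA_n$ is not the index set $\{k\}$ of an arbitrary decomposition of $C$---it has to be the ancilla already produced by the isometric realization of the first $n$ operations $\Lambda^{(1)},\dots,\Lambda^{(n)}$. The cleaner route (and the one in \cite{chiribella2008quantum}) is: purify $C_n$ via an isometry $W\colon\mcP\to\mcI^n\otimes\mcO^n\otimes\mcA_n$, so that $C_n=\Tr_{\mcA_n}\dketbra{W}$; then use $\Tr_{\mcF}C=C_n\otimes\1_{\mcO_n}$ together with $C\ge 0$ to show that $C$ lies in the support of $\dketbra{W}\otimes\1_{\mcO_n\mcF}$, which forces $C=\dketbra{W}\star J^{(n+1)}$ for some $J^{(n+1)}\ge 0$ with $\Tr_{\mcF}J^{(n+1)}=\1_{\mcO_n\mcA_n}$. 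That last identity is exactly the CPTP condition for $\Lambda^{(n+1)}$. Your instinct that the marginal condition is what enforces trace preservation is correct; just be careful that the ancilla is inherited from the purification of $C_n$, not invented fresh from a Kraus decomposition of $C$.
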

The channel fidelity of two CPTP maps $\Lambda_1, \Lambda_2: \mcL(\mcH_1)\to\mcL(\mcH_2)$ is defined by
\begin{align}
    F_{\text{ch}}(\Lambda_1, \Lambda_2) \coloneqq F(J_{\Lambda_1}/\dim \mcH_1, J_{\Lambda_2}/\dim \mcH_1),\label{supple_eq:def_channel_fidelity}
\end{align}
where $F(\cdot, \cdot)$ is the fidelity of two quantum states given by 
\begin{align}
    F(\rho_1, \rho_2)\coloneqq [\Tr(\sqrt{\sqrt{\rho_1}\rho_2\sqrt{\rho_1}})]^2.\label{supple_eq:def_fidelity}
\end{align}
The channel fidelity $F_{\text{ch}}(\Lambda_1, \Lambda_2)$ satisfies the following property:
\begin{align}
    F_{\text{ch}}(\Lambda_1, \Lambda_2)&\leq 1,\label{supple_eq:ch_fidelity_1}\\
    F_{\text{ch}}(\Lambda_1, \Lambda_2)&=1\Leftrightarrow \Lambda_1 = \Lambda_2,\label{supple_eq:ch_fidelity_2}
\end{align}
hold for all CPTP maps $\Lambda_1, \Lambda_2$.
The figure of merit of a deterministic transformation of unitary operation by a quantum comb $\mcC$ is the average-case channel fidelity between the output operation and $f(U_\mathrm{in})$ defined by
\begin{align}
    F_\mathrm{ave} \coloneqq \int \dd U F_\mathrm{ch} [\mcC(\mcU^{\otimes n}), f(\mcU)],
\end{align}
where $\dd U$ is the Haar measure \cite{Kobayashi2005Lie} on $\SU(d)$, $\mcU$ is a unitary operation defined by $\mcU(\cdot)\coloneqq U \cdot U^\dagger$, and $f(\mcU)$ is defined by $f(\mcU)(\cdot) \coloneqq f(U) \cdot f(U)^{\dagger}$.
Due to the properties (\ref{supple_eq:ch_fidelity_1}) and (\ref{supple_eq:ch_fidelity_2}) of the channel fidelity, the quantum comb $\mcC$ implements a deterministic and exact transformation if and only if $F_\mathrm{ave}=1$.
The average-case channel fidelity $F_\mathrm{ave}$ can be written as
\begin{align}
    F_\mathrm{ave} = {1\over d^2} \int \dd U \Tr[C\star \dketbra{U}^{\otimes n}_{\mcI^n\mcO^n} \dketbra{f(U)}_{\mcP\mcF}],
\end{align}
where $\dket{U}$ is the Choi vector of $U$ defined by 
\begin{align}
    \dket{U}_{\mcI \mcO} \coloneqq \sum_i \ket{i}_{\mcI}\otimes (U\ket{i})_{\mcO},\label{supple_eq:def_choi_vector}
\end{align}
using the computational basis $\{\ket{i}\}$ of $\mcI$.  By definition of the link product $\star$ [Eq.~(\ref{supple_eq:link_product})] and the Choi vector $\dket{U}$ [Eq.~(\ref{supple_eq:def_choi_vector})], $F_\mathrm{ave}$ can be further calculated as
\begin{align}
    F_\mathrm{ave}
    &= {1\over d^2} \int \dd U \Tr[C (\dketbra{U}^{T \otimes n}_{\mcI^n\mcO^n} \otimes \dketbra{f(U)}_{\mcP\mcF})]\\
    &= {1\over d^2} \int \dd U \Tr[C (\dketbra{U^*}^{\otimes n}_{\mcI^n\mcO^n} \otimes \dketbra{f(U)}_{\mcP\mcF})]\\
    &= \Tr(C\Omega),\label{supple_eq:average_fiedlity}
\end{align}
where $\Omega$ is the performance operator defined by
\begin{align}
    \Omega \coloneqq {1\over d^2} \int \dd U \dketbra{U^*}^{\otimes n}_{\mcI^n \mcO^n} \otimes \dketbra{f(U)}_{\mcP \mcF}.
\end{align}
For the unitary inversion, i.e., when $f(U)=U^{-1}$, the performance operator is given by
\begin{align}
    \Omega = {1\over d^2} \int \dd U \dketbra{U}^{\otimes n}_{\mcI^n \mcO^n} \otimes \dketbra{U}_{\mcF \mcP}.\label{supple_eq:performance_operator}
\end{align}

Therefore, the problem to obtain the optimal figure of merit of a deterministic transformation of $U_\mathrm{in}\in \SU(d)$ to $f(U_\mathrm{in}) \in \SU(d)$ using a quantum comb can be formulated as the following semidefinite programming (SDP):
\begin{align}
\begin{split}
    &\max \Tr(C\Omega)\\
    \text{s.t. }& 0\leq C\in \mcL(\mcP\otimes \mcI^n \otimes \mcO^n \otimes \mcF),\\
    &\Tr_{\mcI_i} C_i = C_{i-1} \otimes \1_{\mcO_{i-1}}\quad \forall i \in \{1, \cdots, n+1\},\\
    &C_0 = 1,
\end{split}
\label{supple_eq:sdp_det_seq}
\end{align}
where $C_i$ are defined in Lemma \ref{supple_lem:quantum_comb}.
This SDP outputs the optimal figure of merit when the input unitary operations can be used sequentially.  The case when the input unitary operations are used in parallel is also considered.  Such a transformation is done by a single-slot quantum comb transforming a CPTP map $\bigotimes_{i=1}^{n} \Phi_\mathrm{in}^{(i)}: \mcL(\mcI^n) \to \mcL(\mcO^n)$ to a CPTP map $\Phi_\mathrm{out}: \mcL(\mcP)\to \mcL(\mcF)$, which is characterized by the Choi matrix $C$ satisfying
\begin{align}
    C&\geq 0,\label{supple_eq:parallel_comb_condition_1}\\
    \Tr_{\mcF} C &= \Tr_{\mcO^n\mcF} C \otimes {\1_{\mcO^n} \over d^n},\label{supple_eq:parallel_comb_condition_2}\\
    \Tr_{\mcI^n\mcO^n\mcF} C &= d^n \1_{\mcP}.\label{supple_eq:parallel_comb_condition_3}
\end{align}
Therefore, the problem to obtain the optimal figure of merit of a deterministic transformation of $U_\mathrm{in}\in \SU(d)$ to $f(U_\mathrm{in}) \in \SU(d)$ using a parallel quantum comb can also be formulated as the SDP as
\begin{align}
\begin{split}
    &\max \Tr(C\Omega)\\
    \text{s.t. }& 0\leq C\in \mcL(\mcP\otimes \mcI^n \otimes \mcO^n \otimes \mcF),\\
    &\Tr_{\mcF} C = \Tr_{\mcO^n\mcF} C \otimes {\1_{\mcO^n} \over d^n},\\
    &\Tr_{\mcI^n\mcO^n\mcF} C = d^n \1_{\mcP}.
\end{split}
\label{supple_eq:sdp_det_par}
\end{align}

\section{Formulas for the Young-Yamanouchi basis}
 We introduce a basis for the set of linear operators on $(\CC^{d})^{\otimes n+1}$ commuting with $U^{\otimes n+1}$ for all $U\in\SU(d)$.   The representation $U^{\otimes n+1}$ can be decomposed into the irreducible representations as shown in Eq.~(\ref{supple_eq:def_U_mu}).  Due to Schur's lemma, the space of linear operators commuting with $U^{\otimes n+1}$ for all $U\in\SU(d)$ can be spanned by operators $E^{\mu}_{ij}$ defined by
\begin{align}
    E^{\mu}_{ij}&\coloneqq \1_{\mcU_\mu} \otimes \ketbra{\mu, i}{\mu, j}_{\mcS_{\mu}} \quad \forall \mu\in\young{d}{n+1}, i,j\in\{1, \cdots, d_\mu\},\label{supple_eq:def_E}
\end{align}
where $\{\ket{\mu,i}_{\mcS_\mu}\}$ is the Young-Yamanouchi basis introduced above Eq.~(\ref{eq:def_schur_basis}).
Similarly, we introduce the set of operators $\{E^{\alpha}_{ab}\}$ for $\alpha\in \young{d}{n}$ and $a, b=1, \cdots, d_{\alpha}$, which forms the basis of the set of linear operators on $(\CC^{d})^{\otimes n}$ commuting with $U^{\otimes n}$ for all $U\in\SU(d)$. Then, the following Lemmas hold.

\begin{Lem}
\label{supple_lem:yy_basis_1}
The basis $\{E^{\mu}_{ij}\}$ satisfies
\begin{align}
    (E^{\mu}_{ij})^*&=E^{\mu}_{ij},\label{eq:lem2_1}\\
    \Tr E^{\mu}_{ij} &= m_{\mu} \delta_{ij},\label{eq:lem2_2}\\
    E^{\mu}_{ij}E^{\nu}_{kl} &= \delta_{\mu\nu} \delta_{jk} E^{\mu}_{il},\label{eq:lem2_3}
\end{align}
where $X^*$ is the complex conjugate of $X$ in the computational basis, $m_{\mu}$ is defined as $m_{\mu} \coloneqq \dim \mcU_{\mu}$ and $\delta_{ij}$ is the Kronecker's delta defined as $\delta_{ii}=1$ and $\delta_{ij}=0$ for $i\neq j$.
\end{Lem}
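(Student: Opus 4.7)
The plan is to handle (\ref{eq:lem2_2}) and (\ref{eq:lem2_3}) as immediate consequences of the definition together with the orthonormality of the Young--Yamanouchi basis, and to devote most of the effort to (\ref{eq:lem2_1}), which is more subtle because it requires information about a \emph{specific} basis rather than the isotypic decomposition alone.

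For (\ref{eq:lem2_2}) I would simply factorize the trace: $\Tr E^{\mu}_{ij} = (\Tr \1_{\mcU_\mu})\,\Tr\ketbra{\mu,i}{\mu,j}_{\mcS_\mu} = m_\mu \langle \mu,j|\mu,i\rangle = m_\mu \delta_{ij}$. For (\ref{eq:lem2_3}), the key observation is that, viewed as operators on $(\CC^d)^{\otimes n+1} = \bigoplus_\nu \mcU_\nu \otimes \mcS_\nu$, the ``identities'' $\1_{\mcU_\mu}$ are in fact projectors onto mutually orthogonal isotypic components, so the composition $E^{\mu}_{ij}E^{\nu}_{kl}$ vanishes for $\mu\neq\nu$; for $\mu=\nu$ it collapses to $\1_{\mcU_\mu}\otimes \ketbra{\mu,i}{\mu,j}\ketbra{\mu,k}{\mu,l} = \delta_{jk}E^{\mu}_{il}$ by the orthonormality $\langle\mu,j|\mu,k\rangle = \delta_{jk}$.

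The main obstacle is (\ref{eq:lem2_1}). My strategy is to exhibit each $E^{\mu}_{ij}$ as a \emph{real} linear combination of permutation operators $P_\sigma$; since the latter are $\{0,1\}$-valued in the computational basis and thus satisfy $P_\sigma^* = P_\sigma$, the desired reality follows immediately. Schur--Weyl duality makes this natural: the algebra homomorphism $\CC[\mfS_{n+1}] \to \mathrm{End}((\CC^d)^{\otimes n+1})$ sending $\sigma \mapsto P_\sigma$ carries the matrix units $e^{\mu}_{ij}$ of the group algebra precisely to the operators $E^{\mu}_{ij}$, and the standard matrix-unit formula reads
\begin{align}
E^{\mu}_{ij} = \frac{d_\mu}{(n+1)!}\sum_{\sigma \in \mfS_{n+1}} [\sigma_\mu^{-1}]_{ij}\, P_\sigma.
\end{align}
The crucial input is that the Young--Yamanouchi basis realizes Young's orthogonal form, in which the representation matrices $[\sigma_\mu]$ are real orthogonal; complex-conjugating the display then yields $(E^{\mu}_{ij})^* = E^{\mu}_{ij}$.

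The delicate point I anticipate is justifying that the basis $\{\ket{\mu,i}_{\mcS_\mu}\}$ introduced earlier in the supplement is indeed (up to a real choice of phases) the standard Young--Yamanouchi basis realizing Young's orthogonal form, as opposed to some other unitarily equivalent basis of $\mcS_\mu$ in which the representation matrices could be complex. I would close this gap by citing the explicit construction via the iterated branching $\mfS_1 \subset \mfS_2 \subset \cdots \subset \mfS_{n+1}$ in Refs.~\cite{sagan2001symmetric,bacon2006efficient,bacon2007quantum}, which produces a basis with real expansion coefficients in the computational basis and in which Young's orthogonal form holds by construction.
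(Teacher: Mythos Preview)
Your treatment of (\ref{eq:lem2_2}) and (\ref{eq:lem2_3}) matches the paper's, which also dismisses them as immediate from the definition. For (\ref{eq:lem2_1}) you take a genuinely different route. The paper appeals directly to the reality of the quantum Schur transform $V_{\mathrm{Sch}}$ (citing \cite{pearce2022multigraph}): since each Schur basis vector $\ket{\mu,u,i}$ has real coefficients in the computational basis, the operator $E^{\mu}_{ij}=\sum_u\ketbra{\mu,u,i}{\mu,u,j}$ is manifestly a real matrix. Your argument instead stays entirely on the $\mfS_{n+1}$ side, expressing $E^{\mu}_{ij}$ as a real linear combination of permutation matrices via the group-algebra matrix-unit formula and invoking the reality of Young's orthogonal form. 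Both are valid; the paper's argument is shorter (one external citation) but imports the reality of the full Schur basis, including the Gelfand--Zetlin factor, while your approach is more economical in that it only uses the reality of the $\mfS_{n+1}$ representation matrices---which is exactly what is needed, since $E^{\mu}_{ij}$ carries $\1_{\mcU_\mu}$ on the $\SU(d)$ factor and is therefore insensitive to the choice of basis on $\mcU_\mu$. The ``delicate point'' you flag---that the basis $\{\ket{\mu,i}_{\mcS_\mu}\}$ used here really is the Young--Yamanouchi basis realizing Young's orthogonal form---is indeed the only nontrivial input, and your proposed justification via the iterated branching construction in \cite{sagan2001symmetric,bacon2006efficient,bacon2007quantum} is the standard one.
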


\begin{Lem}
\label{supple_lem:yy_basis_2}
Let $\alpha+\square$ be the set of Young diagrams obtained by adding a box to $\alpha$, and $a_{\mu}$ be the index of the standard tableau $s^{\mu}_{a_{\mu}}$ obtained by adding a box \fbox{$n+1$} to a standard tableau $s^{\alpha}_a$. Then, $E^{\alpha}_{ab}\otimes \1_d$ can be written as
\begin{align}
    E^{\alpha}_{ab} \otimes \1_d = \sum_{\mu\in\alpha+\square} E^{\mu}_{a_\mu b_\mu},
\end{align}
where $\1_d$ is the identity operator on $\CC^d$.
\end{Lem}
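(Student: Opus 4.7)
The plan is to prove the identity by unpacking both sides inside the Schur--Weyl decomposition of $(\CC^d)^{\otimes n+1}$ and matching summands. First I would note that $E^{\alpha}_{ab}\otimes \1_d$ commutes with $U^{\otimes n+1}$ for every $U\in\SU(d)$, because $E^{\alpha}_{ab}$ commutes with $U^{\otimes n}$ by Lemma \ref{supple_lem:yy_basis_1} (equivalently by construction via $\1_{\mcU_\alpha}$). Hence both sides of the claimed identity lie in the $\SU(d)$-commutant on $(\CC^d)^{\otimes n+1}$, which is spanned by the $E^{\mu}_{ij}$ with $\mu\in\young{d}{n+1}$. So the statement reduces to computing the coefficients of $E^{\alpha}_{ab}\otimes\1_d$ in this basis.

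Next I would use the two branching rules that underlie the Schur--Weyl duality. On the $\SU(d)$ side, $\mcU_\alpha\otimes\CC^d\cong\bigoplus_{\mu\in\alpha+\square}\mcU_\mu$, and on the $\mfS_{n+1}\downarrow\mfS_n$ side, $\mcS_\mu\downarrow\mfS_n\cong\bigoplus_{\alpha:\mu\in\alpha+\square}\mcS_\alpha$. Comparing the two decompositions
\begin{align*}
(\CC^d)^{\otimes n+1}
&\cong\bigoplus_{\alpha\in\young{d}{n}}\mcU_\alpha\otimes\CC^d\otimes\mcS_\alpha
\cong\bigoplus_{\alpha}\bigoplus_{\mu\in\alpha+\square}\mcU_\mu\otimes\mcS_\alpha,\\
(\CC^d)^{\otimes n+1}
&\cong\bigoplus_{\mu\in\young{d}{n+1}}\mcU_\mu\otimes\mcS_\mu,
\end{align*}
identifies the $\mfS_n$-subrepresentation $\mcS_\alpha\subset\mcS_\mu$ (for any $\mu\in\alpha+\square$) as the span of those Young--Yamanouchi vectors $\ket{\mu,i}_{\mcS_\mu}$ whose standard tableau $s^\mu_i$ has box $\,n{+}1\,$ in the cell that, once removed, returns to $\alpha$. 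This is precisely how the Young--Yamanouchi basis is built: its defining recursive property is that $\ket{\mu,a_\mu}_{\mcS_\mu}$ corresponds, under the inclusion $\mcS_\alpha\hookrightarrow\mcS_\mu$ afforded by the branching rule, to $\ket{\alpha,a}_{\mcS_\alpha}$.

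Armed with this identification, I would unpack $E^{\alpha}_{ab}\otimes\1_d$ in the second decomposition above. On the summand $\mcU_\alpha\otimes\CC^d\otimes\mcS_\alpha\cong\bigoplus_{\mu\in\alpha+\square}\mcU_\mu\otimes\mcS_\alpha$, the operator $E^{\alpha}_{ab}\otimes\1_d=\1_{\mcU_\alpha}\otimes\1_d\otimes\ketbra{\alpha,a}{\alpha,b}_{\mcS_\alpha}$ acts as the identity on each $\mcU_\mu$ and as $\ketbra{\alpha,a}{\alpha,b}_{\mcS_\alpha}$ on the $\mcS_\alpha$ component of $\mcS_\mu$. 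Translating this through the Young--Yamanouchi identification $\ket{\alpha,a}_{\mcS_\alpha}\mapsto\ket{\mu,a_\mu}_{\mcS_\mu}$ yields $\sum_{\mu\in\alpha+\square}\1_{\mcU_\mu}\otimes\ketbra{\mu,a_\mu}{\mu,b_\mu}_{\mcS_\mu}=\sum_{\mu\in\alpha+\square}E^{\mu}_{a_\mu b_\mu}$, establishing the identity.

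The only subtle point, and the main obstacle to a fully self-contained proof, is the compatibility of the Young--Yamanouchi basis with the $\mfS_{n+1}\downarrow\mfS_n$ branching; this is standard in the representation theory of the symmetric group (see Refs.~\cite{iwahori1978representation,sagan2001symmetric,bacon2006efficient,krovi2019efficient}) and I would simply invoke it rather than reprove it. Alternatively, one can give a purely algebraic verification: both sides of the claim satisfy the same multiplication relations against arbitrary $E^{\mu}_{ij}$ coming from Lemma \ref{supple_lem:yy_basis_1}, and match on a sufficient family of matrix elements; but the decomposition-theoretic route above is cleaner.
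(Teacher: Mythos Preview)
Your proposal is correct and follows essentially the same route as the paper: both arguments rest on the branching rules $\mcU_\alpha\otimes\CC^d\cong\bigoplus_{\mu\in\alpha+\square}\mcU_\mu$ and $\mcS_\mu\!\!\downarrow_{\mfS_n}\cong\bigoplus_{\alpha\in\mu-\square}\mcS_\alpha$ together with the subgroup-adapted property of the Young--Yamanouchi basis. The only technical difference is that the paper first proves the subspace equality $\mcU_\alpha\otimes\mathrm{span}\{\ket{\alpha,a}+z\ket{\alpha,b}\}\otimes\CC^d=\bigoplus_{\mu\in\alpha+\square}\mcU_\mu\otimes\mathrm{span}\{\ket{\mu,a_\mu}+z\ket{\mu,b_\mu}\}$ and then recovers the operator identity by taking projectors and varying $z\in\CC$, whereas you identify the operators directly.
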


\begin{Lem}
\label{supple_lem:yy_basis_3}
Let $s^{\alpha}_a$ and $s^{\beta}_b$ be the standard tableaux obtained by removing a box \fbox{$n+1$} from $s^{\mu}_i$ and $s^{\mu}_j$, respectively. The partial trace of $E^{\mu}_{ij}$ in the last system is given by
\begin{align}
    \Tr_{n+1} E^{\mu}_{ij} = \delta_{\alpha \beta}\frac{m_\mu}{m_\alpha} E^{\alpha}_{ab},
\end{align}
 where $m_\mu$ and $m_\alpha$ are defined by $m_\mu \coloneqq \dim \mcU_\mu$ and $m_\alpha \coloneqq \dim \mcU_\alpha$.
\end{Lem}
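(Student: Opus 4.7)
My plan is to exploit $\SU(d)$-equivariance and reduce everything to the two preceding lemmas. Since $E^\mu_{ij}$ commutes with $U^{\otimes n+1}$ for every $U\in\SU(d)$, the partial trace $\Tr_{n+1} E^\mu_{ij}$ commutes with $U^{\otimes n}$ for every such $U$. By Schur-Weyl duality, the commutant of $\{U^{\otimes n}\}_{U\in\SU(d)}$ on $(\CC^d)^{\otimes n}$ is spanned by the basis $\{E^{\alpha'}_{cd}\}_{\alpha'\in\young{d}{n},\, c,d\in\{1,\ldots,d_{\alpha'}\}}$ appearing in Lemma~\ref{supple_lem:yy_basis_1}, so I would expand
\begin{align}
    \Tr_{n+1} E^\mu_{ij} = \sum_{\alpha',\, c,d} \kappa^{\alpha'}_{cd}\, E^{\alpha'}_{cd}
\end{align}
and reduce the proof to pinning down the scalars $\kappa^{\alpha'}_{cd}$.

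I would extract these coefficients by pairing both sides against an arbitrary basis element under the Hilbert-Schmidt inner product. On one side, the orthogonality relations in Lemma~\ref{supple_lem:yy_basis_1} give $\Tr[E^{\alpha'}_{dc}\,\Tr_{n+1} E^\mu_{ij}] = m_{\alpha'}\,\kappa^{\alpha'}_{cd}$ directly. On the other side, cyclicity of the trace rewrites the same quantity as $\Tr[(E^{\alpha'}_{dc}\otimes\1_d)\,E^\mu_{ij}]$; applying the branching formula of Lemma~\ref{supple_lem:yy_basis_2} to expand $E^{\alpha'}_{dc}\otimes\1_d$ as a sum over $\mu'\in\alpha'+\square$, and then invoking the multiplication and trace rules of Lemma~\ref{supple_lem:yy_basis_1} once more, collapses the pairing to a single term of size $m_\mu$ multiplied by Kronecker deltas that enforce $\mu'=\mu$ together with two tableau-index matching conditions.

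Interpreting those Kronecker deltas is the heart of the argument. The condition $\mu'=\mu$ requires $\mu\in\alpha'+\square$; the first matching condition then says that appending \fbox{$n+1$} to a standard tableau of shape $\alpha'$ yields $s^\mu_i$, which by Lemma~\ref{supple_lem:yy_basis_2}'s convention identifies that tableau as $s^{\alpha}_{a}$ in the notation of the present statement. The second, symmetric condition performs the same matching for $s^\mu_j$ and pins it down as $s^{\beta}_{b}$. Consistency forces $\alpha'=\alpha=\beta$, which produces the overall $\delta_{\alpha\beta}$ prefactor and fixes the surviving indices to $(a,b)$. Solving for the coefficient gives $\kappa^{\alpha'}_{cd} = \delta_{\alpha'\alpha}\,\delta_{\alpha\beta}\,\delta_{ca}\,\delta_{db}\, m_\mu/m_\alpha$, and substitution back into the expansion leaves exactly the single term $(m_\mu/m_\alpha)\,\delta_{\alpha\beta}\, E^\alpha_{ab}$, as claimed.

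The main obstacle I anticipate is precisely this index bookkeeping: carefully aligning the ''add a box'' bijection of Lemma~\ref{supple_lem:yy_basis_2} with the ''remove a box'' bijection used here, so that the surviving Kronecker deltas collapse to the clean $\delta_{\alpha\beta}$ prefactor rather than some intricate sum over branching paths. Once this correspondence is set up explicitly, everything else is a mechanical application of the orthogonality and branching relations already in hand.
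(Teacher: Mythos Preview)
Your proposal is correct and follows a genuinely different route from the paper. You extract the expansion coefficients $\kappa^{\alpha'}_{cd}$ by pairing $\Tr_{n+1}E^{\mu}_{ij}$ against $E^{\alpha'}_{dc}$ under the Hilbert--Schmidt inner product, then use the elementary identity $\Tr[A\,\Tr_{n+1}B]=\Tr[(A\otimes\1_d)\,B]$ together with Lemma~\ref{supple_lem:yy_basis_2} and the orthogonality relations of Lemma~\ref{supple_lem:yy_basis_1} to read off each coefficient in one line. The paper instead argues indirectly via positivity: it considers the family $E^{\mu}_{ii}+zE^{\mu}_{ij}+z^{*}E^{\mu}_{ji}+|z|^{2}E^{\mu}_{jj}\geq 0$, takes the partial trace of the identity in Lemma~\ref{supple_lem:yy_basis_2}, and uses that a sum of positive matrices equal to a rank-one projector forces each summand to be proportional to that projector; the case $\alpha\neq\beta$ then requires a separate limiting argument in $|z|$. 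Your approach is more elementary and economical, handling both the diagonal and off-diagonal cases uniformly without any appeal to positivity; the paper's approach, while longer, is self-contained in the sense that it avoids invoking the partial-trace duality identity and makes the role of Lemma~\ref{supple_lem:yy_basis_2} structurally visible. The index bookkeeping you flag as the main obstacle is indeed the only delicate point, but your description of how the ``add a box'' and ``remove a box'' bijections match up is accurate.
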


\begin{proof}[Proof of Lemma \ref{supple_lem:yy_basis_1}]
To show this Lemma, we consider the Schur basis defined in Eq.~(\ref{eq:def_schur_basis}). The change of the basis from the computational basis to the Schur basis is called the quantum Schur transform \cite{bacon2006efficient,bacon2007quantum,krovi2019efficient,pearce2022multigraph}, denoted by $V_\mathrm{Sch}$. The matrix elements of $V_\mathrm{Sch}$ are real (see e.g., Section 4 of Ref.~\cite{pearce2022multigraph}), i.e., $(V_\mathrm{Sch})^* = V_\mathrm{Sch}$. Therefore,
\begin{align}
    E^{\mu}_{ij} = \sum_{u}\ketbra{\mu, u}{\mu,u}_{\mcU_{\mu}} \otimes \ketbra{\mu, i}{\mu,j}_{\mcS_{\mu}}
\end{align}
is a real matrix in the computational basis, i.e., $(E^{\mu}_{ij})^*=E^{\mu}_{ij}$ holds. Equations (\ref{eq:lem2_2}) and (\ref{eq:lem2_3}) immediately come from the definition (\ref{supple_eq:def_E}).
\end{proof}

\begin{proof}[Proof of Lemma \ref{supple_lem:yy_basis_2}]
We show this Lemma using the similar discussion in Ref.~\cite{bacon2007quantum}, originally used to show the decomposition of quantum Schur transform into the series of Clebsch-Gordan transforms.

We review the definition of the Young-Yamanouchi basis.
The symmetric group $\mfS_{n}$ can be regarded as a subgroup of $\mfS_{n+1}$ that leaves the last element fixed. Then, the irreducible representation space $\mcS_{\mu}$ of the symmetric group $\mfS_{n+1}$ decomposes into the irreducible representations of $\mfS_{n}$ when restricting to $\mfS_{n}$. We write the representation space of the restricted representation as $\mcS_{\mu}\!\!\downarrow _{\mfS_{n}}$. It is known that the decomposition of $\mcS_{\mu}\!\!\downarrow _{\mfS_{n}}$ is given by \cite{ceccherini2010representation}
\begin{align}
    \mcS_{\mu}\!\!\downarrow _{\mfS_{n}}\overset{\mfS_{n}}{\simeq} \bigoplus_{\alpha\in \mu-\square}\mcS_{\alpha},\label{eq:decomp_subgroup}
\end{align}
where $\mu-\square$ is the set of Young diagrams obtained by removing a box from $\mu$.
The Young-Yamanouchi basis $\{\ket{\mu, i}\}_{i=1}^{d_\mu}\subset \mcS_{\mu}$ is a subgroup-adapted basis, i.e., $\ket{\alpha, a}\in \mcS_{\mu}$ corresponds to the vector $\ket{\mu, a_{\mu}}\in \mcS_{\alpha}$ in the decomposition~(\ref{eq:decomp_subgroup}). Here, $a_{\mu}$ is the index of the standard tableau $s^{\mu}_{a_{\mu}}$ obtained by adding a box \fbox{$n+1$} to $s^{\alpha}_a$.

We show Lemma \ref{supple_lem:yy_basis_2} by proving the following equality of two subspaces of $(\CC^d)^{\otimes n+1}$:
\begin{align}
    \mcU_{\alpha}\otimes \mathrm{span}\{\ket{\alpha, a}_{\mcS_{\alpha}}+z\ket{\alpha, b}_{\mcS_{\alpha}}\} \otimes \CC^d
    = \bigoplus_{\mu\in \alpha+\square} \mcU_{\mu}\otimes \mathrm{span}\{\ket{\mu, a_{\mu}}_{\mcS_{\mu}}+z\ket{\mu, b_{\mu}}_{\mcS_{\mu}}\},\label{eq:appendix_equality_subspaces}
\end{align}
where $z\in\CC$. If Eq.~(\ref{eq:appendix_equality_subspaces}) holds, by considering the projectors onto these subspaces, we obtain
\begin{align}
    (E^{\alpha}_{aa}+zE^{\alpha}_{ab}+z^*E^{\alpha}_{ba}+zz^*E^{\alpha}_{bb})\otimes \1_d = \sum_{\mu\in\alpha+\square} E^{\mu}_{a_{\mu}a_{\mu}} + zE^{\mu}_{a_{\mu}b_{\mu}}+z^*E^{\mu}_{b_{\mu}a_{\mu}}+zz^*E^{\mu}_{b_{\mu}b_{\mu}}
\end{align}
for all $z\in\CC$. Therefore,
\begin{align}
    E^{\alpha}_{ab}\otimes \1_d = \sum_{\mu\in\alpha+\square} E^{\mu}_{a_{\mu}b_{\mu}}\label{eq:appendix_tensor}
\end{align}
holds for all $\alpha\in\young{d}{n}$ and $a,b\in\{1, \cdots, d_{\alpha}\}$.

We show Eq.~(\ref{eq:appendix_equality_subspaces}) as follows.
The irreducible representation space $\mcU_{\square}$ corresponding to the Young tableau $\mu=\square$ is $\mcU_{\square}=\mathbb{C}^d$. For a Young tableau $\alpha\in\young{d}{n}$, the space $\mcU_{\alpha}\otimes \mcU_{\square}$ can be decomposed into irreducible representations of $\SU(d)$ as 
\begin{align}
    \mcU_{\alpha}\otimes \mcU_{\square}\overset{\SU(d)}{\simeq} \bigoplus_{\mu\in\alpha+\square}\mcU_{\mu}^{(d)}.
\end{align}
Then, the space $(\CC^d)^{\otimes n+1}$ decomposes as
\begin{align}
    (\CC^d)^{\otimes n+1}&= (\CC^d)^{\otimes n} \otimes \CC^d\\
    &=\bigoplus_{\alpha\in\young{d}{n}} \mcU_{\alpha}\otimes \mcS_{\alpha} \otimes \CC^d \label{eq:decomp_1}\\
    &=\bigoplus_{\alpha\in\young{d}{n}} (\mcU_{\alpha}\otimes \mcU_{\square}) \otimes \mcS_{\alpha}\\
    &\overset{\SU(d)\times \mfS_{n}}{\simeq} \bigoplus_{\alpha\in\young{d}{n}}\bigoplus_{\mu\in\alpha+\square} \mcU_{\mu}\otimes \mcS_{\alpha}.
\end{align}
In particular, the subspace $\mcU_{\alpha}\otimes \mathrm{span}\{\ket{\alpha, a}_{\mcS_{\alpha}}+z\ket{\alpha, b}_{\mcS_{\alpha}}\} \otimes \CC^d$ of Eq.~(\ref{eq:decomp_1}) is given by
\begin{align}
    \mcU_{\alpha}\otimes \mathrm{span}\{\ket{\alpha, a}_{\mcS_{\alpha}}+z\ket{\alpha, b}_{\mcS_{\alpha}}\} \otimes \CC^d
    &\overset{\SU(d)\times \mfS_{n}}{\simeq} \bigoplus_{\mu\in \alpha+\square} \mcU_{\mu}\otimes \mathrm{span}\{\ket{\alpha, a}_{\mcS_{\alpha}}+z\ket{\alpha, b}_{\mcS_{\alpha}}\}\label{eq:decomp_2}.
\end{align}
The subspace $\bigoplus_{\mu\in \alpha+\square} \mcU_{\mu}\otimes \mathrm{span}\{\ket{\alpha, a}_{\mcS_{\alpha}}+z\ket{\alpha, b}_{\mcS_{\alpha}}\}$ of Eq.~(\ref{eq:decomp_2}) can be regarded as the subspace of $\mcU_{\mu}\otimes \mcS_{\mu}\!\!\downarrow _{\mfS_{n}}$ by the isomorphism
\begin{align}
    \mcU_{\mu} \otimes \bigoplus_{\alpha\in \mu-\square}\mcS_{\alpha}\overset{\SU(d)\times \mfS_{n}}{\simeq} \mcU_{\mu} \otimes \mcS_{\mu}\!\!\downarrow _{\mfS_{n}}.
\end{align}
Since the Young-Yamanouchi basis is a subgroup-adapted basis, the space $ \mcU_{\mu}\otimes \mathrm{span}\{\ket{\alpha, a}_{\mcS_{\alpha}}+z\ket{\alpha, b}_{\mcS_{\alpha}}\}$ corresponds to $\mcU_{\mu}\otimes \mathrm{span}\{\ket{\mu, a_{\mu}}_{\mcS_{\mu}}+z\ket{\mu, b_{\mu}}_{\mcS_{\alpha}}\}\subset \mcU_{\mu} \otimes \mcS_{\mu}\!\!\downarrow _{\mfS_{n}}$, where $a_{\mu}$ is the index of the standard tableau $s^{\mu}_{a_{\mu}}$ obtained by adding a box \fbox{$n+1$} to $s^{\alpha}_a$. 
Therefore, Eq.~(\ref{eq:appendix_equality_subspaces}) holds.
\end{proof}

\begin{proof}[Proof of Lemma \ref{supple_lem:yy_basis_3}]
This Lemma is shown in Ref.~\cite{studzinski2022efficient} (see Lemma 7 of Ref.~\cite{studzinski2022efficient}). Here, we present another proof based on Lemma \ref{supple_lem:yy_basis_2} and the positivity of $E^{\mu}_{ii}+zE^{\mu}_{ij}+z^*E^{\mu}_{ji}+zz^*E^{\mu}_{jj}$ for all $z\in\CC$.

First, $\Tr_{n+1} E^{\mu}_{ij}$ satisfies $[\Tr_{n+1} E^{\mu}_{ij}, U^{\otimes n}]=0$ for all $U\in\SU(d)$.
Therefore, $\Tr_{n+1} E^{\mu}_{ij}$ can be written as
\begin{align}
    \Tr_{n+1} E^{\mu}_{ij} = \sum_{\beta\in\young{d}{n}}\sum_{a,b=1}^{d_{\beta}} A^{\beta}_{ab} (\mu,i,j) E^{\beta}_{ab},
\end{align}
using complex coefficients $A^{\beta}_{ab} (\mu,i,j) \in \CC$. Since 
\begin{align}
    E^{\mu}_{ii}+zE^{\mu}_{ij}+z^*E^{\mu}_{ji}+zz^*E^{\mu}_{jj}
    =\1_{\mcU_{\mu}} \otimes (\ket{\mu,i}+z\ket{\mu,j})(\bra{\mu,i}+z^*\bra{\mu,j})_{\mcS_{\mu}}\geq 0
\end{align}
holds for all $z\in\CC$,
\begin{align}
    \sum_{\beta\in\young{d}{n}}\sum_{a,b=1}^{d_{\beta}} [B^{\beta} (\mu,i,j,z)]_{ab} E^{\beta}_{ab}
    =\Tr_{n+1}[E^{\mu}_{ii}+zE^{\mu}_{ij}+z^*E^{\mu}_{ji}+zz^*E^{\mu}_{jj}]\geq 0
\end{align}
holds, where the matrices $B^{\beta}(\mu,i,j,z)$ are defined by
\begin{align}
    [B^{\beta}(\mu,i,j,z)]_{ab} \coloneqq A^{\beta}_{ab}(\mu,i,i)+zA^{\beta}_{ab}(\mu,i,j)+z^*A^{\beta}_{ab}(\mu,j,i)+zz^*A^{\beta}_{ab}(\mu,j,j).
\end{align}
Therefore, the matrices $B^{\beta}(\mu,i,j,z)$ are positive for all $\beta\in\young{d}{n}$, $\mu\in\young{d}{n+1}$, $i,j\in\{1, \cdots, d_{\mu}\}$ and $z\in\CC$.

By taking the partial trace of the last system in Eq.~(\ref{eq:appendix_tensor}), we obtain
\begin{align}
    \sum_{\mu\in\alpha+\square} \Tr_{n+1} E^{\mu}_{a_\mu b_\mu} = d E^{\alpha}_{ab}. 
\end{align}
Therefore,
\begin{align}
    \sum_{\mu\in\alpha+\square} B^{\beta}(\mu,a_{\mu},b_{\mu},z) = d \delta_{\alpha\beta} X^{\alpha}(a,b,z)
\end{align}
holds, where the matrix $X^{\alpha}(a,b,z)$ is defined as
\begin{align}
    [X^{\alpha}(a,b,z)]_{a'b'} \coloneqq \delta_{aa'}\delta_{ab'} + z\delta_{aa'}\delta_{bb'} + z^*\delta_{ba'}\delta_{ab'} + zz^*\delta_{ba'}\delta_{bb'}.
\end{align}
Since $B^{\beta}(\mu,a_{\mu},b_{\mu},z)\geq 0$ holds for all $\mu\in\alpha+\square$ and $X^{\alpha}(a,b,z)$ is a one-dimensional projector,
\begin{align}
    B^{\beta}(\mu,a_{\mu},b_{\mu},z) \propto \delta_{\alpha\beta} X^{\alpha}(a,b,z),
\end{align}
i.e.,
\begin{align}
    \Tr_{n+1}[E^{\mu}_{a_\mu a_\mu}+zE^{\mu}_{a_\mu b_\mu}+z^*E^{\mu}_{b_\mu a_\mu}+zz^*E^{\mu}_{b_\mu b_\mu}] \propto E^{\alpha}_{aa}+zE^{\alpha}_{ab}+z^*E^{\alpha}_{ba}+zz^*E^{\alpha}_{bb}
\end{align}
holds for all $z\in\CC$. Since the traces of the left and right hand sides are given by
\begin{align}
    \Tr[E^{\mu}_{a_\mu a_\mu}+zE^{\mu}_{a_\mu b_\mu}+z^*E^{\mu}_{b_\mu a_\mu}+zz^*E^{\mu}_{b_\mu b_\mu}] &= (1+zz^*)m_{\mu},\\
    \Tr[E^{\alpha}_{aa}+zE^{\alpha}_{ab}+z^*E^{\alpha}_{ba}+zz^*E^{\alpha}_{bb}] &= (1+zz^*)m_{\alpha},
\end{align}
the proportional coefficient is determined as
\begin{align}
    \Tr_{n+1}[E^{\mu}_{a_\mu a_\mu}+zE^{\mu}_{a_\mu b_\mu}+z^*E^{\mu}_{b_\mu a_\mu}+zz^*E^{\mu}_{b_\mu b_\mu}] =\frac{m_{\mu}}{m_{\alpha}} [E^{\alpha}_{aa}+zE^{\alpha}_{ab}+z^*E^{\alpha}_{ba}+zz^*E^{\alpha}_{bb}].
\end{align}
Therefore,
\begin{align}
    \Tr_{n+1} E^{\mu}_{a_\mu b_\mu} = \frac{m_{\mu}}{m_{\alpha}} E^{\alpha}_{ab}\label{eq:appendix_trace_same}
\end{align}
holds for all $\alpha\in\young{d}{n}$, $a,b\in\{1, \cdots, d_{\alpha}\}$ and $\mu\in\alpha+\square$, which corresponds to the case $\alpha=\beta$ in Lemma 4.

To complete the proof, we consider the case $\alpha\neq \beta$, where $s^{\alpha}_a$ and $s^{\beta}_b$ be the standard tableaux obtained by removing a box \fbox{$n+1$} from $s^{\mu}_i$ and $s^{\mu}_j$, respectively. Then, $i\neq j$ holds. We fix $i,j,\mu$ in the following argument. We consider the positive operator $\Tr_{n+1}[E^{\mu}_{ii}+zE^{\mu}_{ij}+z^*E^{\mu}_{ji}+zz^*E^{\mu}_{jj}]$ to evaluate $\Tr_{n+1}E^{\mu}_{ij}$. Due to Eq.~(\ref{eq:appendix_trace_same}),
\begin{align}
    \Tr_{n+1} E^{\mu}_{ii} &= E^{\alpha}_{aa},\\
    \Tr_{n+1} E^{\mu}_{jj} &= E^{\beta}_{bb},
\end{align}
hold. Defining $F(\theta)\coloneqq \Tr_{n+1}[e^{i\theta}E^{\mu}_{ij}+e^{-i\theta}E^{\mu}_{ji}]$ for $\theta\in\mathbb{R}$,
\begin{align}
    E^{\alpha}_{aa}+|z|^2 E^{\beta}_{bb}+|z|F(\theta) = \Tr_{n+1}[E^{\mu}_{ii}+zE^{\mu}_{ij}+z^*E^{\mu}_{ji}+zz^*E^{\mu}_{jj}] \geq 0\label{eq:appendix_positivity}
\end{align}
holds for $\theta\coloneqq \arg z$. Since $[F(\theta), U^{\otimes n}]=0$ holds for all $U\in\SU(d)$, $F(\theta)$ can be written as
\begin{align}
    F(\theta) &= \sum_{\gamma\in\young{d}{n}} F^{\gamma}(\theta),\\
    F^{\gamma}(\theta) &\coloneqq \sum_{i,j=1}^{d_\gamma} f^{\gamma}_{ij}(\theta) E^{\gamma}_{ij},
\end{align}
using complex coefficients $f^{\gamma}_{ij}\in\CC$. In terms of $F^{\gamma}(\theta)$, Eq.~(\ref{eq:appendix_positivity}) is written as
\begin{align}
    F^{\gamma}(\theta) \geq
    \begin{cases}
        -\frac{1}{|z|} E^{\alpha}_{aa} & (\gamma=\alpha)\\
        -|z| E^{\beta}_{bb} & (\gamma=\beta)\\
        0 & (\text{otherwise})
    \end{cases}.
\end{align}
By considering limits $|z|\to \infty$ and $|z|\to 0$, we obtain $F^{\gamma}(\theta)\geq 0$ for all $\gamma\in\young{d}{n}$, i.e., $F(\theta)\geq 0$. Since $i\neq j$ holds, $\Tr [F(\theta)] = \Tr [e^{i\theta}E^{\mu}_{ij}+e^{-i\theta}E^{\mu}_{ji}] = 0$. Thus, $F(\theta)=0$ holds, i.e.,
\begin{align}
    \Tr_{n+1}[e^{i\theta}E^{\mu}_{ij}+e^{-i\theta}E^{\mu}_{ji}]=0
\end{align}
holds for all $\theta\in\mathbb{R}$. Therefore,
\begin{align}
    \Tr_{n+1} E^{\mu}_{ij}=0.
\end{align}
This completes the proof.
\end{proof}

\section{The simplified SDP to  obtain the optimal deterministic unitary inversion}
 We present the simplification of the SDP (\ref{supple_eq:sdp_det_seq}) and (\ref{supple_eq:sdp_det_par}) for the performance operator $\Omega$ given by Eq.~(\ref{supple_eq:performance_operator}), which gives the optimal average-case channel fidelity of deterministic unitary inversion.  The numerical calculations of the simplified SDPs are done in MATLAB \cite{matlab} with the interpreter CVX \cite{cvx, gb08} and the solvers SDPT3 \cite{sdpt3,toh1999sdpt3,tutuncu2003solving} and SeDuMi \cite{sedumi} (see Table \ref{supple_tab:sdp_det}).
Group-theoretic calculations to write down the simplified SDP are done by SageMath \cite{sagemath}.
The numerical results show that the optimal fidelity of parallel unitary inversion is the same as the sequential unitary inversion for $n\leq d-1$ up to $d=6$.
The simplified SDPs presented in this section are available at Ref.~\cite{github} under the MIT license \cite{mit_license}.

\begin{table}
    \centering
    \caption{The optimal fidelity of a deterministic transformation from $n$ calls of an unknown $d$-dimensional unitary operation $U_\mathrm{in}\in\SU(d)$ to its inverse operation $U_\mathrm{in}^{-1}$ is numerically obtained by the simplified SDPs shown in Theorems \ref{supple_thm:sdp_det_seq} and \ref{supple_thm:sdp_det_par}.}
    \begin{ruledtabular}
    \begin{tabular}{c|cccc|cccc}
        \multirow{2}{*}{Optimal fidelity} & \multicolumn{4}{c|}{Sequential} & \multicolumn{4}{c}{Parallel}\\
         & $n=2$ & $n=3$ & $n=4$ & $n=5$ & $n=2$ & $n=3$ & $n=4$ & $n=5$\\\hline
       $d=2$ & \;\;\;$0.7500$\;\;\; & \;\;\;$0.9330$\;\;\; & \;\;\;$1.0000$\;\;\; & \;\;\;$1.0000$\;\;\; & \;\;\;$0.6545$\;\;\; & \;\;\;$0.7500$\;\;\; & \;\;\;$0.8117$\;\;\; & \;\;\;$0.8536$\;\;\;\\
       $d=3$ & $0.3333$ & $0.4444$ & $0.5556$ & $0.6667$ & $0.3333$ & $0.4310$ & $0.5131$ & $0.5810$\\
       $d=4$ & $0.1875$ & $0.2500$ & $0.3125$ & $0.3750$ & $0.1875$ & $0.2500$ & $0.3105$ & $0.3675$\\
       $d=5$ & $0.1200$ & $0.1600$ & $0.2000$ & $0.2400$ & $0.1200$ & $0.1600$ & $0.2000$ & $0.2397$\\
       $d=6$ & $0.0833$ & $0.1111$ & $0.1389$ & $0.1667$ & $0.0833$ & $0.1111$ & $0.1389$ & $0.1667$
    \end{tabular}
    \end{ruledtabular}
    \label{supple_tab:sdp_det}
\end{table}

\begin{Thm}
\label{supple_thm:sdp_det_seq}
 The maximum value of the SDP (\ref{supple_eq:sdp_det_seq}) for the performance operator $\Omega$ given by Eq.~(\ref{supple_eq:performance_operator}), which gives the optimal fidelity of deterministic sequential unitary inversion,
is  given by the following SDP:
\begin{align}
\begin{split}
    &\max \sum_{\mu\in\young{d}{n+1}}  \Tr(C^{\mu\mu} \Omega_\mu)\\
    \text{\rm s.t. }&  0\leq  C^{\mu\nu}\in\mcL(\CC^{d_{\mu}}\otimes \CC^{d_{\nu}}) \quad  \forall \mu, \nu \in \young{d}{n+1},\\
    &\sum_{\alpha\in \gamma+\square}(X^{\gamma}_{\alpha} \otimes \1_{d_{\beta}}) \frac{C^{\alpha\beta}_i}{m_{\beta}} (X^{\gamma}_{\alpha} \otimes \1_{d_{\beta}})^\dagger=\sum_{\delta\in \beta-\square} (\1_{d_\gamma}\otimes X^{\delta}_{\beta})^\dagger \frac{C^{\gamma\delta}_{i-1}}{m_{\delta}}(\1_{d_\gamma}\otimes X^{\delta}_{\beta})\quad  \forall i \in \{1, \cdots, n+1\}, \gamma\in \young{d}{i-1}, \beta \in \young{d}{i},\\
    &C_0^{\emptyset\emptyset}=1,
\end{split}
\label{supple_eq:sdp_swbasis_det_seq}
\end{align}
where  $\Omega_\mu$ for $\mu \in \young{d}{n+1}$ are $d_\mu^2 \times d_\mu^2$ matrices defined by
\begin{align}
    [\Omega_\mu]_{ik, jl} \coloneqq {[\pi_\mu]_{ik} [\pi_\mu]_{jl}^* \over d^2 m_\mu} \quad \forall i,j,k,l\in \{1, \cdots, d_\mu\},\label{supple_eq:def_of_Omega_mu}
\end{align}
$[\pi_\mu]_{ij}$ are matrix elements of the irreducible representation $\pi_\mu$ for $\pi\coloneqq (1 2\cdots n+1)$ shown in Eq.~(\ref{supple_eq:def_sigma_mu}) defined by $[\pi_\mu]_{ij}\coloneqq \bra{\mu,i} \pi_\mu \ket{\mu,j}$,
$C_i^{\alpha\beta}$ for $\alpha,\beta\in\young{d}{i}$ are defined by

\begin{align}
    C_{i}^{\alpha\beta}\coloneqq
    \begin{cases}
        C^{\alpha\beta} & (i=n+1)\\
        \frac{1}{d}\sum_{\mu\in\alpha+\square, \nu\in\beta+\square}(X^{\alpha}_{\mu}\otimes X^{\beta}_{\nu}) C^{\mu\nu}_{i+1}(X^{\alpha}_{\mu}\otimes X^{\beta}_{\nu})^\dagger & (0\leq i\leq n)
    \end{cases},
    \label{supple_eq:def_of_Ci}
\end{align}
$X^{\gamma}_{\alpha}$ for $\alpha\in\gamma+\square$ and $\gamma\in\young{d}{i-1}$ are $d_{\gamma}\times d_{\alpha}$ matrices defined by
\begin{align}
    [X^{\gamma}_{\alpha}]_{c,a}\coloneqq \delta_{c_{\alpha},a},\label{supple_eq:def_of_X}
\end{align}
$c_ \alpha$ is the index of the standard tableau $s^{\alpha}_{c_{\alpha}}$ obtained by adding a box \fbox{$i$} to the standard tableau $s^{\gamma}_{c}$, and $\emptyset$ represents the Young tableau with zero boxes.
\end{Thm}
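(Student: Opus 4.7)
The plan is to exploit the $\SU(d)\times\SU(d)$ symmetry of the performance operator $\Omega$ to restrict the Choi matrix $C$ to its symmetric twirl, parametrize the latter via the Young--Yamanouchi basis of Lemmas~\ref{supple_lem:yy_basis_1}--\ref{supple_lem:yy_basis_3}, and translate the positivity and comb conditions into matrix conditions on the blocks $C^{\mu\nu}$. First I would observe that, after reordering so that the input subsystems $(\mcI_1,\ldots,\mcI_n,\mcF)$ and the output subsystems $(\mcP,\mcO_1,\ldots,\mcO_n)$ are grouped on separate sides, $\Omega$ commutes with $V^{\otimes n+1}\otimes W^{\otimes n+1}$ for all $V,W\in\SU(d)$, and the linear comb constraints of~\eqref{supple_eq:sdp_det_seq} are stable under the twirl $C\mapsto\int dV\,dW\,(V^{\otimes n+1}\otimes W^{\otimes n+1})\,C\,(V^{\otimes n+1}\otimes W^{\otimes n+1})^\dagger$. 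Hence by convexity the optimum is attained on $C$ sharing this symmetry. By Schur--Weyl duality applied independently on each side, the commutant is spanned by $\{E^\mu_{ij}\otimes E^\nu_{kl}\}$ from Lemma~\ref{supple_lem:yy_basis_1}, so $C$ can be written as $\bigoplus_{\mu,\nu}(1/m_\mu m_\nu)\,\1_{\mcU_\mu\otimes\mcU_\nu}\otimes C^{\mu\nu}$ in the Schur decomposition. Using the matrix-unit identities~(\ref{eq:lem2_2})--(\ref{eq:lem2_3}), the positivity $C\ge 0$ then reduces to $C^{\mu\nu}\ge 0$ for every $(\mu,\nu)\in\young{d}{n+1}^2$, recovering the first constraint of~\eqref{supple_eq:sdp_swbasis_det_seq}.

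Next I would translate the comb conditions. The recursion $C_{i-1}=\Tr_{\mcI_i\mcO_{i-1}}C_i/d$ removes the last input and last output subsystem of $C_i$ in this reordering. Applying Lemma~\ref{supple_lem:yy_basis_3} on each side gives $\Tr_{\text{last}}E^\mu_{a_\mu b_\mu}=(m_\mu/m_\alpha)E^\alpha_{ab}$, with branching data packaged into the isometry $X^\gamma_\alpha$ of~\eqref{supple_eq:def_of_X}; combining both sides with the $1/d$ factor and the multiplicity normalization produces~\eqref{supple_eq:def_of_Ci}. For the single-trace constraint $\Tr_{\mcI_i}C_i=C_{i-1}\otimes\1_{\mcO_{i-1}}$, I would apply Lemma~\ref{supple_lem:yy_basis_3} on the input side to expand the left-hand side as a sum over $\alpha\in\gamma+\square$ conjugated by $X^\gamma_\alpha\otimes\1_{d_\beta}$ with weight $1/m_\beta$, and Lemma~\ref{supple_lem:yy_basis_2} on the output side to unfold $E^\delta_{cc'}\otimes\1_d = \sum_{\beta\in\delta+\square}E^\beta_{c_\beta c'_\beta}$, producing the right-hand side conjugated by $\1_{d_\gamma}\otimes X^\delta_\beta$ with weight $1/m_\delta$. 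Matching coefficients in the basis $E^\gamma_{cc'}\otimes E^\beta_{bb'}$ yields the matrix equality in~\eqref{supple_eq:sdp_swbasis_det_seq}, while $C_0=1$ becomes $C_0^{\emptyset\emptyset}=1$.

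For the objective, I would rewrite $\Omega$ in~\eqref{supple_eq:performance_operator} via $|U\rangle\rangle=(\1\otimes U)|\Phi^+\rangle$, which concentrates the $U$-dependence on the output side as $U^{\otimes n+1}$. The Haar average then projects onto the diagonal isotypic sectors $\mu=\nu$, and the specific reshuffling that pairs the $\mcP\mcF$ Choi vector with the $n$ slots $\mcI_k\mcO_k$ under the inversion task corresponds to the cyclic permutation $\pi=(1\,2\cdots n+1)$ on the $n+1$ tensor factors. Expanding the result in the basis $E^\mu_{ij}\otimes E^\mu_{kl}$ identifies the surviving matrix elements with $[\Omega_\mu]_{ik,jl}=[\pi_\mu]_{ik}[\pi_\mu]_{jl}^*/(d^2m_\mu)$ as in~\eqref{supple_eq:def_of_Omega_mu}, so $\Tr(C\Omega)=\sum_\mu\Tr(C^{\mu\mu}\Omega_\mu)$.

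The principal obstacle is the combinatorial bookkeeping of standard-tableau branching through the isometries $X^\gamma_\alpha$ across successive partial traces, since both sides of the trace-preservation constraint live on Young diagrams with different numbers of boxes; in particular, verifying that the adjoint conventions and the multiplicity factors $m_\mu,m_\alpha,m_\beta$ combine into the clean $1/m_\beta$ and $1/m_\delta$ factors in~\eqref{supple_eq:sdp_swbasis_det_seq} is delicate and requires induction on $i$. A secondary technical point is the explicit identification of $\Omega$ with the cyclic-permutation matrix elements $[\pi_\mu]_{ij}$, which requires careful tracking of the Choi-vector conventions under the inversion map $f(U)=U^{-1}$ and the chosen reordering of the input/output chains.
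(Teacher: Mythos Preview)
Your proposal is correct and follows essentially the same route as the paper: the twirling argument to impose the $\SU(d)\times\SU(d)$ symmetry, the block parametrization of $C$ via $E^\mu_{ij}\otimes E^\nu_{kl}$, the use of Lemma~\ref{supple_lem:yy_basis_3} for the partial traces and Lemma~\ref{supple_lem:yy_basis_2} for $C_{i-1}\otimes\1_{\mcO_{i-1}}$, and the identification of $\Omega$ with the cyclic permutation $\pi=(1\,2\cdots n{+}1)$ are exactly the steps the paper carries out. The two obstacles you flag (the multiplicity bookkeeping and the Choi-vector conventions for $\Omega$) are indeed where the paper spends its effort, so your plan is complete.
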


\begin{Thm}
\label{supple_thm:sdp_det_par}
The maximum value of the SDP (\ref{supple_eq:sdp_det_par}) for the performance operator $\Omega$ given by Eq.~(\ref{supple_eq:performance_operator}), which gives the optimal fidelity of deterministic parallel unitary inversion,
is  given by the following SDP:
\begin{align}
\begin{split}
    &\max \sum_{\mu\in\young{d}{n+1}}  \Tr(C^{\mu\mu} \Omega_\mu)\\
    \text{\rm s.t. }& 0 \leq C^{\mu\nu}\in\mcL(\CC^{d_{\mu}}\otimes \CC^{d_{\nu}}) \quad \forall \mu, \nu\in \young{d}{n+1},\\
    &\sum_{\mu\in\alpha+\square} (X^{\alpha}_{\mu} \otimes \1_{d_\nu}) {C^{\mu\nu} \over m_\nu} (X^{\alpha}_{\mu} \otimes \1_{d_\nu})^{\dagger} = D^{\alpha} \otimes {\1_{d_\nu} \over d^{n+1}} \quad \forall \alpha\in \young{d}{n}, \nu\in\young{d}{n+1},\\
    &\sum_{\mu, \nu \in \young{d}{n+1}} \Tr (C^{\mu \nu}) = d^{n+1},
\end{split}
\label{supple_eq:sdp_swbasis_det_par}
\end{align}
where $D^{\alpha}$ for $\alpha\in\young{d}{n}$ are defined by
\begin{align}
    D^{\alpha}\coloneqq \sum_{\mu\in\alpha+\square} \sum_{\nu\in\young{d}{n+1}} \Tr_{\nu}[(X^{\alpha}_{\mu} \otimes \1_{d_\nu}) C^{\mu\nu} (X^{\alpha}_{\mu} \otimes \1_{d_\nu})^{\dagger}] \quad \forall \alpha\in\young{d}{n},\label{supple_eq:def_of_D}
\end{align}
and $\Omega_\mu$ and $X^{\alpha}_{\mu}$ are defined in Eqs.~(\ref{supple_eq:def_of_Omega_mu}) and (\ref{supple_eq:def_of_X}).
\end{Thm}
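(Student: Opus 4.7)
The plan is to exploit the $\SU(d)\times\SU(d)$ symmetry of the performance operator $\Omega$ to block-diagonalize the SDP via Schur-Weyl duality applied to each of the two copies of $(\CC^d)^{\otimes n+1}$ obtained by regrouping $(\mcI_1,\ldots,\mcI_n,\mcF)$ and $(\mcO_1,\ldots,\mcO_n,\mcP)$. First I would verify that $\Omega$ commutes with $V^{\otimes n+1}\otimes W^{\otimes n+1}$ for all $V,W\in\SU(d)$, which is immediate from Haar invariance $\int\dd U\,\dketbra{VUW}=\int\dd U\,\dketbra{U}$ after reordering the tensor factors of $\dket{U}^{\otimes n}_{\mcI^n\mcO^n}\otimes\dket{U}_{\mcF\mcP}$ into the regrouped form. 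All three parallel-comb constraints (\ref{supple_eq:parallel_comb_condition_1})--(\ref{supple_eq:parallel_comb_condition_3}) involve only partial traces and identity operators and are therefore themselves invariant under this action, so a standard twirling argument restricts the search to $\SU(d)\times\SU(d)$-symmetric $C$ without loss of generality.

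By Schur-Weyl duality, such a symmetric $C$ decomposes as $C=\bigoplus_{\mu,\nu\in\young{d}{n+1}}\1_{\mcU_\mu}\otimes\1_{\mcU_\nu}\otimes C^{\mu\nu}$ with $C^{\mu\nu}\in\mcL(\mcS_\mu\otimes\mcS_\nu)\simeq\mcL(\CC^{d_\mu}\otimes\CC^{d_\nu})$, and positivity reduces blockwise to $C^{\mu\nu}\geq 0$ for every pair $(\mu,\nu)$. To reduce the objective $\Tr(C\Omega)$, I would invoke Schur orthogonality on the $\mcO^n\mcP$ side: the integral $\int\dd U\,U_\mu\otimes U_\nu^{\dagger}$ vanishes unless $\mu=\nu$, so only diagonal blocks $C^{\mu\mu}$ contribute. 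A direct calculation that tracks the reordering between the comb ordering $(\mcI_1\mcO_1)\cdots(\mcF\mcP)$ and the regrouped ordering $(\mcI^n\mcF)(\mcO^n\mcP)$ introduces a cyclic reshuffling of tensor factors; when expressed in the Young-Yamanouchi basis this becomes the cyclic-shift permutation $\pi=(1\,2\,\cdots\,n+1)$, giving $\Omega_\mu$ in the rank-one form~(\ref{supple_eq:def_of_Omega_mu}). The normalization constraint $\Tr_{\mcI^n\mcO^n\mcF}C=d^n\1_\mcP$ reduces, upon taking a further trace and using dimensional bookkeeping, to the scalar equation $\sum_{\mu,\nu}\Tr(C^{\mu\nu})=d^{n+1}$.

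The main technical work lies in translating the no-signaling constraint $\Tr_\mcF C=\Tr_{\mcO^n\mcF}C\otimes\1_{\mcO^n}/d^n$. The trace over $\mcF$ acts on the last subsystem of the $\mcI^n\mcF$ side: by Lemma~\ref{supple_lem:yy_basis_3}, $\Tr_{n+1}E^\mu_{ij}$ vanishes unless the $(n+1)$-th boxes of $s^\mu_i,s^\mu_j$ coincide, in which case it equals $\frac{m_\mu}{m_\alpha}E^\alpha_{ab}$; the branching matrices $X^\alpha_\mu$ of~(\ref{supple_eq:def_of_X}) encode precisely this $\mu\to\alpha\in\mu-\square$ branching and produce the left-hand sum $\sum_{\mu\in\alpha+\square}(X^\alpha_\mu\otimes\1)C^{\mu\nu}(X^\alpha_\mu\otimes\1)^\dagger/m_\nu$. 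The right-hand side, containing an identity $\1_{\mcO^n}$ that must be lifted from $n$ to $n+1$ tensor factors, is then dispatched by Lemma~\ref{supple_lem:yy_basis_2}: the identity $E^\alpha_{ab}\otimes\1_d=\sum_{\mu\in\alpha+\square}E^\mu_{a_\mu b_\mu}$ repackages $\1_{\mcO^n}$ in the Schur-Weyl basis and forces the factor $\1_{d_\nu}/d^{n+1}$ on the output side, with $D^\alpha$ forced into the form~(\ref{supple_eq:def_of_D}). The main obstacle is notational bookkeeping: the two sides of the Choi matrix carry independent Young-diagram indices $\mu,\nu$, and the partial traces on $\mcF$ and $\mcO^n$ act only on the $\mu$-side yet must be matched against the entire $C^{\mu\nu}$ block. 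Applying Lemmas~\ref{supple_lem:yy_basis_2} and~\ref{supple_lem:yy_basis_3} in tandem with careful index tracking is the crux of the argument.
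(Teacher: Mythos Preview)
Your proposal is correct and follows essentially the same route as the paper: twirl by the $\SU(d)\times\SU(d)$ symmetry, block-diagonalize via Schur--Weyl duality, and translate each parallel-comb constraint blockwise using Lemmas~\ref{supple_lem:yy_basis_1}--\ref{supple_lem:yy_basis_3}. One slip to correct: the partial trace over $\mcO^n$ acts on the $\nu$-side (the $\mcP\mcO^n$ register), not the $\mu$-side, and the paper handles it via a direct Schur's-lemma argument $\Tr_{\mcO^n}(E^\nu_{kl})_{\mcP\mcO^n}=\delta_{kl}\,m_\nu\,\1_\mcP/d$ followed by the expansion $\1_{\mcP}\otimes\1_{\mcO^n}=\sum_{\nu,k}E^\nu_{kk}$, rather than invoking Lemma~\ref{supple_lem:yy_basis_2}; otherwise the strategy is identical.
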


\begin{proof}[Proof of Theorem \ref{supple_thm:sdp_det_seq}]
First, we show that the SDP (\ref{supple_eq:sdp_det_seq}) can be solved without loss of generality by imposing an additional constraint given by
\begin{align}
    [C,  V^{\otimes n+1}_{\mcI^n\mcF}\otimes  W^{\otimes n+1}_{\mcP\mcO^n}]=0,\label{supple_eq:sudsudsymmetry_C}
\end{align}
for all $ V,W\in\SU(d)$. Suppose $C=C_{\text{opt}}$ achieves the maximum value of $\Tr(C \Omega)$ in the SDP (\ref{supple_eq:sdp_det_seq}). Since the operator $\Omega$ satisfies
\begin{align}
    [ \Omega,  V^{\otimes n+1}_{\mcI^n\mcF}\otimes  W^{\otimes n+1}_{\mcP\mcO^n}]=0,
\end{align}
for all $ V, W\in\SU(d)$, the operator $C'_{\text{opt}}$ defined as
\begin{align}
    C'_{\text{opt}}\coloneqq \int \dd  V \dd  W ( V^{\otimes n+1}_{\mcI^n\mcF}\otimes  W^{\otimes n+1}_{\mcP\mcO^n}) C_{\text{opt}} ( V^{\otimes n+1}_{\mcI^n\mcF}\otimes  W^{\otimes n+1}_{\mcP\mcO^n})^{\dagger}\label{supple_eq:def_C'opt}
\end{align}
satisfies
\begin{align}
    \Tr(C'_{\text{opt}} \Omega)=\Tr(C_{\text{opt}} \Omega),
\end{align}
where $\dd  V$ and $\dd  W$ are the Haar measure on $\SU(d)$. When $C=C_{\text{opt}}$ satisfies the comb conditions given by  Eqs.~(\ref{supple_eq:comb_condition_1}), (\ref{supple_eq:comb_condition_2}) and (\ref{supple_eq:comb_condition_3}), $C'_{\text{opt}}$ also satisfies the comb conditions. Thus, $C=C'_{\text{opt}}$ also achieves the maximum value of $\Tr(C \Omega)$ in the SDP (\ref{supple_eq:sdp_det_seq}). Due to the property of the Haar measure, $C'_{\text{opt}}$ defined by Eq.~(\ref{supple_eq:def_C'opt}) satisfies
\begin{align}
    [C'_{\text{opt}},  V^{\otimes n+1}_{\mcI^n\mcF}\otimes  W^{\otimes n+1}_{\mcP\mcO^n}]=0,
\end{align}
for all $ V,W\in\SU(d)$. Therefore, the maximum value of the SDP (\ref{supple_eq:sdp_det_seq}) can be searched within the set of operators $C$ satisfying the $\SU(d)\times\SU(d)$ symmetry (\ref{supple_eq:sudsudsymmetry_C}).

We consider operators $C\in\mcL(\mcP\otimes \mcI^n\otimes \mcO^n\otimes \mcF)$ satisfying the $\SU(d)\times\SU(d)$ symmetry (\ref{supple_eq:sudsudsymmetry_C}). The operator $C_i$ defined as $C_{n+1}\coloneqq C$ and $C_{i-1}\coloneqq \Tr_{\mcI_i\mcO_{i-1}}C_i/d$ also satisfies the $\SU(d)\times\SU(d)$ symmetry given by
\begin{align}
    [C_i,  V^{\otimes i}_{\mcI^i}\otimes  W^{\otimes i}_{\mcP\mcO^{i-1}}]=0,
\end{align}
for all $ V, W\in\SU(d)$. Thus, $C$ and $C_i$ can be written as
\begin{align}
    C&=\sum_{\mu, \nu\in \young{d}{n+1}}\sum_{i,j=1}^{d_{\mu}}\sum_{k,l=1}^{d_{\nu}}\frac{c^{\mu\nu}_{ijkl}}{m_\mu m_\nu} (E^{\mu}_{ij})_{\mcI^n\mcF} \otimes (E^{\nu}_{kl})_{\mcP\mcO^n},\label{supple_eq:C_swbasis}\\
    C_i&=\sum_{\alpha, \beta\in \young{d}{i}}\sum_{a,b=1}^{d_{\alpha}}\sum_{c,d=1}^{d_{\beta}}\frac{c^{\alpha\beta}_{i,abcd}}{m_\alpha m_\beta} (E^{\alpha}_{ab})_{\mcI^i} \otimes (E^{\beta}_{cd})_{\mcP\mcO^{i-1}},
\end{align}
using complex coefficients $c^{\mu\nu}_{ijkl}, c^{\alpha\beta}_{i,abcd}\in\CC$. We define $C^{\mu\nu}\in\mcL(\CC^{d_{\mu}}\otimes \CC^{d_{\nu}})$ and $C^{\alpha\beta}_{i}\in\mcL(\CC^{d_{\alpha}}\otimes \CC^{d_{\beta}})$ for $\mu, \nu\in\young{d}{n+1}$, $\alpha,\beta\in\young{d}{i}$ by
\begin{align}
    [C^{\mu\nu}]_{ik, jl}&\coloneqq c^{\mu\nu}_{ijkl},\label{supple_eq:def_of_C_mu_nu}\\
    [C^{\alpha\beta}_i]_{ac, bd} &\coloneqq c^{\alpha\beta}_{i,abcd}.\label{supple_eq:def_ci}
\end{align}

We write the performance operator $ \Omega$ given by Eq.~(\ref{supple_eq:performance_operator}) using $(E^{\mu}_{ij})_{\mcI^n\mcF} \otimes (E^{\nu}_{kl})_{\mcP\mcO^n}$. We permute the systems $\mcI^n \mcF$ by applying the permutation $\pi=(123\cdots n+1)$ as
\begin{align}
    [(P_{\pi})_{\mcI^n \mcF} \otimes \1_{\mcP\mcO^n}]^\dagger  \Omega [(P_{\pi})_{\mcI^n \mcF} \otimes \1_{\mcP\mcO^n}]
    &= \frac{1}{d^2} \int \dd U \dketbra{U}^{\otimes n+1}_{\mcI^n\mcF, \mcP\mcO^n},
\end{align}
where $P_\pi$ is the permutation operator on $\mcI^n \mcF$ defined by 
\begin{align}
    P_\pi (\ket{i_1}_{\mcI_1}\otimes \ket{i_2}_{\mcI_2}\otimes \cdots \ket{i_{n}}_{\mcI_n} \otimes \ket{i_{n+1}}_{\mcF}) 
    &= \ket{i_{\pi^{-1}(1)}}_{\mcI_1} \otimes \ket{i_{\pi^{-1}(2)}}_{\mcI_2} \otimes \cdots \otimes \ket{i_{\pi^{-1}(n)}}_{\mcI_n} \otimes \ket{i_{\pi^{-1}(n+1)}}_{\mcF}\\
    &= \ket{i_{n+1}}_{\mcI_1} \otimes \ket{i_{1}}_{\mcI_2} \otimes \cdots \ket{i_{n-1}}_{\mcI_n} \otimes \ket{i_n}_{\mcF}.
\end{align}
To calculate this quantity, we introduce the change of the basis from the computational basis to the Schur basis, called the quantum Schur transform \cite{bacon2006efficient,bacon2007quantum,krovi2019efficient,pearce2022multigraph}, denoted by $V_\mathrm{Sch}$. The matrix elements of $V_\mathrm{Sch}$ are real (see e.g., Section 4 of Ref.~\cite{pearce2022multigraph}), i.e., $V_\mathrm{Sch}^* = V_\mathrm{Sch}$. Therefore, the maximally entangled state between $\mcI^{n}\mcF$ and $\mcP\mcO^n$ in the computational basis is the same as that in the Schur basis, i.e.,
\begin{align}
    \sum_{\mu\in\young{d}{n+1}}\sum_{u=1}^{m_\mu}\sum_{i=1}^{d_\mu} (\ket{\mu, u}_{\mcU_\mu}\otimes \ket{\mu,i}_{\mcS_\mu})_{\mcI^n\mcF} \otimes (\ket{\mu, u}_{\mcU_\mu}\otimes \ket{\mu,i}_{\mcS_\mu})_{\mcP\mcO^n} = \sum_{i_1, \cdots, i_{n+1}=1}^{d} \ket{i_1\cdots i_n}_{\mcI^n\mcF} \otimes \ket{i_1\cdots i_n}_{\mcP\mcO^n}.
\end{align}
Then, $\dket{U}^{\otimes n+1}_{\mcI^n\mcF, \mcP\mcO^n}$ is given by
\begin{align}
    \dket{U}^{\otimes n+1}_{\mcI^n\mcF, \mcP\mcO^n}
    &= \sum_{i_1, \cdots, i_{n+1}=1}^{d} \ket{i_1\cdots i_n}_{\mcI^n\mcF} \otimes U^{\otimes n+1} \ket{i_1\cdots i_n}_{\mcP\mcO^n}\\
    &= \sum_{\mu\in\young{d}{n+1}}\sum_{u=1}^{m_\mu}\sum_{i=1}^{d_\mu} (\ket{\mu, u}_{\mcU_\mu}\otimes \ket{\mu,i}_{\mcS_\mu})_{\mcI^n\mcF} \otimes (U_\mu \ket{\mu, u}_{\mcU_\mu}\otimes \ket{\mu,i}_{\mcS_\mu})_{\mcP\mcO^n}.
\end{align}
Thus, we obtain
\begin{align}
    \int \dd U_d \dketbra{U_d}^{\otimes n+1}_{\mcI^n\mcF, \mcP\mcO^n}
    &= \sum_{\mu\in\young{d}{n+1}}\sum_{i,j=1}^{d_{\mu}} (\1_{\mcU_\mu} \otimes \ketbra{\mu,i}{\mu,j}_{\mcS_\mu})_{\mcI^n\mcF} \otimes (\frac{\1_{\mcU_\mu}}{m_\mu}\otimes \ketbra{\mu,i}{\mu,j}_{\mcS_\mu})_{\mcP\mcO^n}\\
    &= \sum_{\mu\in\young{d}{n+1}}\sum_{i,j=1}^{d_{\mu}} \frac{(E_{ij}^{\mu})_{\mcI^n\mcF} \otimes (E_{ij}^{\mu})_{\mcP\mcO^n}}{m_\mu}.
\end{align}
Therefore, $ \Omega$ is given by
\begin{align}
     \Omega
    &= \sum_{\mu\in\young{d}{n+1}}\sum_{i,j=1}^{d_{\mu}} \frac{(P_{\pi} E_{ij}^{\mu}P_{\pi}^\dagger)_{\mcI^n\mcF} \otimes (E_{ij}^{\mu})_{\mcP\mcO^n}}{d^2 m_\mu}\\
    &= \sum_{\mu\in\young{d}{n+1}}\sum_{i,j,k,l=1}^{d_{\mu}} \frac{[\pi_{\mu}]_{ki}(E_{kl}^{\mu})_{\mcI^n\mcF} [\pi_{\mu}]_{lj}^* \otimes (E_{ij}^{\mu})_{\mcP\mcO^n}}{d^2 m_\mu}\\
    &= \sum_{\mu\in\young{d}{n+1}}\sum_{i,j,k,l=1}^{d_{\mu}} [\Omega_\mu]_{ik,jl} (E^\mu_{ij})_{\mcI^n \mcF} \otimes (E^\mu_{kl})_{\mcP \mcO^n},\label{supple_eq:Omega}
\end{align}
where $\Omega_\mu$ is defined in Eq.~(\ref{supple_eq:def_of_Omega_mu}).
From Lemma \ref{supple_lem:yy_basis_1} and Eq.~(\ref{supple_eq:Omega}), $\Tr(C \Omega)$ is given by
\begin{align}
    \Tr(C \Omega)
    &=  \sum_{\mu\in\young{d}{n+1}} \Tr(C^{\mu\mu} \Omega_\mu),\label{supple_eq:trace_COmega}
\end{align}
The quantum comb conditions (\ref{supple_eq:comb_condition_1})-(\ref{supple_eq:comb_condition_3}) are written in terms of $C^{\mu\nu}$ and $C^{\alpha\beta}_{i}$ as follows. Since $C$ is written as
\begin{align}
    C=\sum_{\mu, \nu\in \young{d}{n+1}}\sum_{i,j=1}^{d_{\mu}}\sum_{k,l=1}^{d_{\nu}} (\1_{\mcU_{\mu}})_{\mcI^n\mcF} \otimes (\1_{\mcU_{\nu}})_{\mcP\mcO^n} \otimes \frac{c^{\mu\nu}_{ijkl}}{m_\mu m_\nu} \ketbra{\mu,i}{\mu,k}_{\mcI^n\mcF} \otimes \ketbra{\nu,j}{\nu,l}_{\mcP\mcO^n},
\end{align}
the positivity of $C$ [Eq.~(\ref{supple_eq:comb_condition_1})] is written as
\begin{align}
    C^{\mu\nu}\geq 0 \label{supple_eq:positivity_C_mu_nu}
\end{align}
for all $\mu,\nu\in\young{d}{n+1}$. From Lemma \ref{supple_lem:yy_basis_3}, $\Tr_{\mcI_i}C_{i}$ and $\Tr_{\mcI_i\mcO_{i-1}}C_{i}$ are written as
\begin{align}
    \Tr_{\mcI_i}C_{i}
    =&\Tr_{\mcI_i}\left[\sum_{\alpha,\beta\in\young{d}{i}} \sum_{a,b=1}^{d_\mu} \sum_{c,d=1}^{d_\nu} \frac{c^{\alpha\beta}_{i,abcd}}{m_\mu m_\nu} (E_{ab}^{\alpha})_{\mcI^i} \otimes (E_{cd}^{\beta})_{\mcP\mcO^{i-1}}\right]\\
    =&\sum_{\alpha,\beta\in\young{d}{i}} \sum_{\gamma\in \alpha-\square} \sum_{e,f=1}^{d_\gamma} \sum_{c,d=1}^{d_\beta} \frac{c^{\alpha\beta}_{i,e_\mu f_\mu cd}}{m_{\gamma}m_\beta} (E_{ef}^{\gamma})_{\mcI^{i-1}} \otimes (E_{cd}^{\beta})_{\mcP\mcO^{i-1}},\\
    \Tr_{\mcI_i\mcO_{i-1}}C_{i}
    =&\sum_{\alpha,\beta\in\young{d}{i}} \sum_{\gamma\in \alpha-\square} \sum_{\delta\in \beta-\square} \sum_{e,f=1}^{d_\gamma} \sum_{g,h=1}^{d_\delta} \frac{c^{\alpha\beta}_{i, e_\mu f_\mu g_\nu h_\nu}}{m_\gamma m_\delta} (E_{ef}^{\gamma})_{\mcI^{i-1}} \otimes (E_{gh}^{\delta})_{\mcP\mcO^{i-2}},
\end{align}
where $e_{\mu}$ is the index of the standard tableau $s^{\mu}_{e_{\mu}}$ obtained by adding a box \fbox{$i$} to the standard tableau $s^{\alpha}_{e}$. Then, $C_{i-1}= \Tr_{\mcI_i\mcO_{i-1}}C_{i}/d$ is written as
\begin{align}
    C_{i-1}&=\sum_{\gamma,\delta\in \young{d}{i-1}}\sum_{e,f=1}^{d_{\gamma}}\sum_{g,h=1}^{d_{\delta}}\frac{c^{\gamma\delta}_{i-1,efgh}}{m_\gamma m_\delta} (E^{\gamma}_{ef})_{\mcI^{i-1}} \otimes (E^{\delta}_{gh})_{\mcP\mcO^{i-2}},\\
    c^{\gamma\delta}_{i-1,efgh}&\coloneqq \frac{1}{d} \sum_{\alpha\in \gamma+\square} \sum_{\beta\in \delta+\square}\sum_{a,b=1}^{d_{\alpha}}\sum_{c,d=1}^{d_{\beta}}\delta_{a,e_\mu}\delta_{b,f_\mu}\delta_{c,g_\nu} \delta_{d,h_\nu} c^{\alpha\beta}_{i,abcd}.\label{supple_eq:reduced_c}
\end{align}
In terms of the matrix representation (\ref{supple_eq:def_ci}), Eq.~(\ref{supple_eq:reduced_c}) is written as
\begin{align}
    C^{\gamma\delta}_{i-1} = \frac{1}{d}\sum_{\alpha\in\gamma+\square, \beta\in\delta+\square}(X_{\alpha}^{\gamma} \otimes X^{\delta}_{\beta}) C^{\alpha\beta}_i (X_{\alpha}^{\gamma} \otimes X^{\delta}_{\beta})^\dagger,
\end{align}
where $X^{\gamma}_{\alpha}$ is the $d_\gamma \times d_\alpha$ matrix defined by
\begin{align}
    [X^{\gamma}_{\alpha}]_{c,a}\coloneqq \delta_{c_{\alpha},a},
\end{align}
and $c_\alpha$ is the index of the standard tableau $c^{\alpha}_{c_\alpha}$ obtained by adding a box \fbox{$i$} to $c^{\gamma}_c$.
From Lemma \ref{supple_lem:yy_basis_2}, $C_{i-1}\otimes \1_{\mcO_{i-1}}$ is written as
\begin{align}
    C_{i-1}\otimes \1_{\mcO_{i-1}}
    &=\sum_{\gamma,\delta \in \young{d}{i-1}} \sum_{e,f=1}^{d_\gamma}\sum_{g,h=1}^{d_\delta} \frac{c^{\gamma\delta}_{i-1,efgh}}{m_\gamma m_\delta} (E_{ef}^{\gamma})_{\mcI^{i-1}} \otimes (E_{gh}^{\delta})_{\mcP\mcO^{i-2}} \otimes \1_{\mcO_{i-1}}\\
    &=\sum_{\gamma,\delta \in \young{d}{i-1}} \sum_{\beta\in \delta+\square} \sum_{e,f=1}^{d_\gamma}\sum_{g,h=1}^{d_\delta} \frac{c^{\gamma\delta}_{i-1,efgh}}{m_\gamma m_\delta} (E_{ef}^{\gamma})_{\mcI^{i-1}} \otimes (E_{g_\beta h_\beta}^{\beta})_{\mcP\mcO^{i-1}}.
\end{align}
Therefore, the condition (\ref{supple_eq:comb_condition_2}) is equivalent to the following equation:
\begin{align}
    \sum_{\alpha\in \gamma+\square} \sum_{a,b=1}^{d_\alpha} \delta_{a,e_\alpha} \delta_{b,f_\alpha} \frac{c^{\alpha\beta}_{i,abcd}}{m_\beta} = \sum_{\delta\in \beta-\square} \sum_{g,h=1}^{d_\delta} \delta_{c, g_\beta}\delta_{d, h_\beta} \frac{c^{\gamma\delta}_{i-1,efgh}}{m_\delta},
\end{align}
for all $\gamma\in \young{d}{i-1}$, $\beta\in \young{d}{i}$, $e,f\in\{1, \cdots, d_\gamma\}$ and $c,d\in\{1, \cdots, d_\beta\}$. In terms of the matrix representation (\ref{supple_eq:def_ci}), this relation is written as
\begin{align}
    \sum_{\alpha\in \gamma+\square} (X_{\alpha}^{\gamma}\otimes \1_{d_\beta})\frac{C_i^{\alpha\beta}}{m_\beta}(X_{\alpha}^{\gamma}\otimes \1_{d_\beta})^\dagger = \sum_{\delta\in \beta-\square} (\1_{d_\gamma} \otimes X_{\beta}^{\delta})^\dagger \frac{C_{i-1}^{\gamma\delta}}{m_\delta} (\1_{d_\gamma} \otimes X_{\beta}^{\delta}),\label{supple_eq:comb_condition_swbasis_2}
\end{align}
for all $\gamma\in \young{d}{i-1}$, $\beta\in \young{d}{i}$. Finally, since $C_0=c^{\emptyset\emptyset}_0$ holds, the condition (\ref{supple_eq:comb_condition_3}) is written as
\begin{align}
    C^{\emptyset\emptyset}_0=1.\label{supple_eq:comb_condition_swbasis_3}
\end{align}

In conclusion, the  maximum value of the SDP (\ref{supple_eq:sdp_det_seq}) is  given by the SDP  shown in Eq.~(\ref{supple_eq:sdp_swbasis_det_seq}).
\end{proof}

\begin{proof}[Proof of Theorem \ref{supple_thm:sdp_det_par}]
Similarly to the proof of Theorem \ref{supple_thm:sdp_det_seq}, we can write the matrix $C$ in the form of Eq.~(\ref{supple_eq:C_swbasis}).  Therefore, $\Tr(C\Omega)$ can be written as Eq.~(\ref{supple_eq:trace_COmega}) using $C^{\mu \nu}$ defined in Eq.~(\ref{supple_eq:def_of_C_mu_nu}).  We write the parallel comb condition given by Eqs.~(\ref{supple_eq:parallel_comb_condition_1}), (\ref{supple_eq:parallel_comb_condition_2}) and (\ref{supple_eq:parallel_comb_condition_3})
in terms of $C^{\mu\nu}$.  Similarly to the proof of Theorem \ref{supple_thm:sdp_det_seq}, the positivity condition [Eq.~(\ref{supple_eq:parallel_comb_condition_1})] can be written as Eq.~(\ref{supple_eq:positivity_C_mu_nu}).  From Lemma \ref{supple_lem:yy_basis_2}, $\Tr_{\mcF} C$ is written as
\begin{align}
    \Tr_{\mcF} C
    &= \Tr_{\mcF} \left[\sum_{\mu, \nu\in \young{d}{n+1}}\sum_{i,j=1}^{d_{\mu}}\sum_{k,l=1}^{d_{\nu}}\frac{c^{\mu\nu}_{ijkl}}{m_\mu m_\nu} (E^{\mu}_{ij})_{\mcI^n\mcF} \otimes (E^{\nu}_{kl})_{\mcP\mcO^n}\right]\\
    &=\sum_{\alpha\in\young{d}{n}} \sum_{\mu\in\alpha+\square} \sum_{\nu\in\young{d}{n+1}} \sum_{a,b=1}^{d_{\alpha}}\sum_{k,l=1}^{d_{\nu}}\frac{c^{\mu\nu}_{a_\mu b_\mu kl}}{m_\alpha m_\nu} (E^{\alpha}_{ab})_{\mcI^n} \otimes (E^{\nu}_{kl})_{\mcP\mcO^n},
\end{align}
where $a_{\mu}$ is the index of the standard tableau $s^{\mu}_{a_{\mu}}$ obtained by adding a box \fbox{$i$} to the standard tableau $s^{\alpha}_{a}$.  Since $\Tr_{\mcO^n} (E^{\nu}_{kl})_{\mcP \mcO^n}$ satisfies the $\SU(d)$ symmetry given by
\begin{align}
    [\Tr_{\mcO^n} (E^{\nu}_{kl})_{\mcP \mcO^n}, V_{\mcP}]=0\quad \forall V\in\SU(d),
\end{align}
it is written as
\begin{align}
    \Tr_{\mcO^n} (E^{\nu}_{kl})_{\mcP \mcO^n} &= \Tr (E^{\nu}_{kl})_{\mcP \mcO^n} {\1_{\mcP} \over d}\\
    &= \delta_{kl} m_\nu {\1_{\mcP} \over d}.
\end{align}
Therefore, $\Tr_{\mcO^n \mcF} C \otimes {\1_{\mcO^n} \over d^n}$ is written as
\begin{align}
    \Tr_{\mcO^n \mcF} C \otimes {\1_{\mcO^n} \over d^n}
    &= \sum_{\alpha\in\young{d}{n}} \sum_{\mu\in\alpha+\square} \sum_{\nu\in\young{d}{n+1}} \sum_{a,b=1}^{d_{\alpha}}\sum_{k,l=1}^{d_{\nu}}\frac{c^{\mu\nu}_{a_\mu b_\mu kl} \delta_{kl}}{m_\alpha} (E^{\alpha}_{ab})_{\mcI^n} \otimes {\1_{\mcP} \over d} \otimes {\1_{\mcO^n} \over d^n}\\
    &= \sum_{\alpha\in\young{d}{n}} {[D^{\alpha}]_{ak, bl} \over m_\alpha} (E^{\alpha}_{ab})_{\mcI^n} \otimes \sum_{\nu\in\young{d}{n+1}} \sum_{k=1}^{d_\nu} {(E^{\nu}_{kk})_{\mcP\mcO^n} \over d^{n+1}},
\end{align}
where $D^{\alpha}$ is defined in Eq.~(\ref{supple_eq:def_of_D}).  Thus, the condition (\ref{supple_eq:parallel_comb_condition_2}) is written as
\begin{align}
    \sum_{\alpha\in\young{d}{n}} \sum_{\mu\in\alpha+\square} \sum_{\nu\in\young{d}{n+1}} \sum_{a,b=1}^{d_{\alpha}}\sum_{k,l=1}^{d_{\nu}}\frac{c^{\mu\nu}_{a_\mu b_\mu kl}}{m_\alpha m_\nu} (E^{\alpha}_{ab})_{\mcI^n} \otimes (E^{\nu}_{kl})_{\mcP\mcO^n}
    = \sum_{\alpha\in\young{d}{n}} {[D^{\alpha}]_{ak, bl} \over m_\alpha} (E^{\alpha}_{ab})_{\mcI^n} \otimes \sum_{\nu\in\young{d}{n+1}} \sum_{k=1}^{d_\nu} {(E^{\nu}_{kk})_{\mcP\mcO^n} \over d^{n+1}}.
\end{align}
In terms of the matrix representation (\ref{supple_eq:def_of_C_mu_nu}), this relation can be written as
\begin{align}
    \sum_{\mu\in\alpha+\square} (X^{\alpha}_{\mu} \otimes \1_{d_\nu}) {C^{\mu\nu} \over m_\nu} (X^{\alpha}_{\mu} \otimes \1_{d_\nu})^{\dagger} = D^{\alpha} \otimes {\1_{d_\nu} \over d^{n+1}},\label{supple_eq:parallel_comb_condition_swbasis_2}
\end{align}
for all $\alpha\in \young{d}{n}, \nu\in\young{d}{n+1}$.  From Lemma \ref{supple_lem:yy_basis_1}, $\Tr_{\mcI^n \mcO^n \mcF} C$ is written as
\begin{align}
    \Tr_{\mcI^n \mcO^n \mcF} C
    &= \sum_{\mu, \nu\in \young{d}{n+1}}\sum_{i,j=1}^{d_{\mu}}\sum_{k,l=1}^{d_{\nu}}\frac{c^{\mu\nu}_{ijkl}}{m_\mu m_\nu} \Tr (E^{\mu}_{ij})_{\mcI^n\mcF} \otimes \Tr_{\mcO^n} (E^{\nu}_{kl})_{\mcP\mcO^n}\\
    &= \sum_{\mu, \nu\in \young{d}{n+1}}\sum_{i,j=1}^{d_{\mu}}\sum_{k,l=1}^{d_{\nu}} c^{\mu\nu}_{ijkl} \delta_{ij} \delta_{kl} {\1_{\mcP} \over d}\\
    &= \sum_{\mu, \nu\in \young{d}{n+1}} \Tr(C^{\mu\nu}) {\1_{\mcP} \over d}.
\end{align}
Therefore, the condition (\ref{supple_eq:parallel_comb_condition_3}) is written as
\begin{align}
    \sum_{\mu, \nu\in \young{d}{n+1}} \Tr(C^{\mu\nu}) = d^{n+1}.\label{supple_eq:parallel_comb_condition_swbasis_3}
\end{align}

In conclusion, the maximum value of the SDP (\ref{supple_eq:sdp_det_par}) is given by the SDP shown in Eq.~(\ref{supple_eq:sdp_swbasis_det_par}).
\end{proof}

\section{Comparison of our deterministic exact qubit-unitary inversion protocol with previous works}
We compare the deterministic exact qubit-unitary inversion protocol presented in this work with previous works (see Table \ref{supple_tab:comparison}).  Reference \cite{sardharwalla2016universal} presents a probabilistic nonexact protocol for unitary inversion.  By calling the input unitary operation $U_\mathrm{in} \in \SU(2)$ for $n$ times, we obtain a unitary operation corresponding to $U_\mathrm{out} \in \SU(2)$ as the output operation. The distance between the inverse operation of $U_\mathrm{in}$ and the output operation $U_\mathrm{out}$ is bounded by $\| U_\mathrm{out} U_\mathrm{in} - \1 \| \leq \epsilon'$ with probability $p\geq 1-\eta$ for $n=O(\eta^{-5}\log^2 \epsilon'^{-1})$, where $\| A \|$ is the Hilbert-Schmidt norm defined by $\| A \| \coloneqq \sqrt{\Tr(A^\dagger A)}/2$.
The optimal success probability of the exact qubit-unitary inversion using the input unitary operations in parallel is obtained in Refs.~\cite{sedlak2019optimal,quintino2019probabilistic,quintino2019reversing}, which is given by $p = 1-O(n^{-1})$.  Thus, it requires $n=O(\eta^{-1})$ to achieve success probability $p=1-\eta$.  The optimal average-case channel fidelity of the deterministic qubit-unitary inversion using the input unitary operations in parallel is also obtained in Refs.~\cite{quintino2022deterministic}, which is given by $F = 1-O(n^{-2})$.  Thus, it requires $n=O(\epsilon^{-1/2})$ to achieve success probability $F=1-\epsilon$.
The success probability of exact qubit-unitary inversion is improved using a ``success-or-draw'' protocol to $p=1-\exp(-O(n))$ \cite{quintino2019probabilistic,quintino2019reversing,dong2021success}.  Thus, it requires $n=O(\log \eta^{-1})$ to achieve success probability $p=1-\eta$.
The ``success-or-draw'' protocol is probabilistic, but it can be transformed to a deterministic nonexact protocol \cite{quintino2022deterministic}.  Such a deterministic protocol requires $n=O(\log \epsilon^{-1})$ to achieve success probability $F=1-\epsilon$.
References \cite{trillo2020translating,trillo2023universal} also presented another ``success-or-draw'' protocol using a quantum SWITCH \cite{chiribella2013quantum}, which requires $n=O(\log \eta^{-1})$ to achieve success probability $p=1-\eta$.
On the other hand, the protocol shown in this work achieves success probability $p=1$ and the channel fidelity $F=1$ using $n=O(1)$ calls of the input operation.

To compare the protocol shown in Ref.~\cite{sardharwalla2016universal} with other previous works, we consider the channel fidelity $F$ between the output operation and the inverse operation of $U_\mathrm{in}$.  As shown in Ref.~\cite{sardharwalla2016universal}, $U_\mathrm{out} U_\mathrm{in}$ can be expressed as $U_\mathrm{out} U_\mathrm{in} = a\1 + ib X + ic Y + id Z$ using Pauli matrices $X, Y, Z$ and real parameters $a, b, c, d$ such that $a^2+b^2+c^2+d^2=1$.  Using these parameters, the distance $\| U_\mathrm{out} U_\mathrm{in} - \1 \|$ is given by $\sqrt{1-a}$.  On the other hand, the channel fidelity $F$ between the output operation and the inverse operation of $U_\mathrm{in}$ is given by $F = |\dbraket{U_\mathrm{out}}{U_\mathrm{in}^{-1}}|^2/4 = |\dbraket{U_\mathrm{out}U_\mathrm{in}^{-1}}{\1}|^2/4 = a^2 = (1-\| U_\mathrm{out} U_\mathrm{in} - \1 \|^2)^2$ by definition (\ref{supple_eq:def_channel_fidelity}), where $\dket{U}$ is the Choi vector of $U$ defined by Eq.~(\ref{supple_eq:def_choi_vector}).  Therefore, the protocol in Ref.~\cite{sardharwalla2016universal} requires $n=O(\eta^{-5}\log^2 \epsilon^{-1})$ to achieve probability $p=1-\eta$ and the channel fidelity $F_\mathrm{ch} = 1-\epsilon$.  

\begin{table}
    \centering
    \caption{Comparison of our deterministic exact qubit-unitary inversion with previous works.   The query complexity is the number of calls of the input operation with respect to failure probability $\eta$ and/or approximation error $\epsilon$.}
    \begin{ruledtabular}
    \begin{tabular}{c|ccc}
             & Deterministic & Exact & Query complexity \\\hline
             Universal refocusing \cite{sardharwalla2016universal} & $\times$ & $\times$ & \;\;\;\;\;$O(\eta^{-5} \log^2 \epsilon^{-1})$\;\;\;\;\;\\
             Optimal parallel protocol (probabilistic exact) \cite{sedlak2019optimal,quintino2019probabilistic,quintino2019reversing}  & $\times$ & \checkmark & $O(\eta^{-1})$ \\
             Optimal parallel protocol (deterministic nonexact) \cite{quintino2022deterministic} & \checkmark & $\times$ & $O(\epsilon^{-1/2})$\\
             Success-or-draw (probabilistic exact) \cite{quintino2019probabilistic,quintino2019reversing,dong2021success} & $\times$ & \checkmark & $O(\log \eta^{-1})$ \\
             Success-or-draw (deterministic nonexact) \cite{quintino2022deterministic} & \checkmark & $\times$ & $O(\log \epsilon^{-1})$ \\
              Universal rewinding \cite{trillo2020translating,trillo2023universal} &  $\times$ &  \checkmark &  $O(\log \eta^{-1})$\\
            This Letter & \checkmark & \checkmark & $O(1)$\\
    \end{tabular}
    \end{ruledtabular}
    \label{supple_tab:comparison}
\end{table}

\end{document}